\documentclass[12pt, draftclsnofoot, onecolumn]{IEEEtran}
\ifCLASSINFOpdf

\else

\fi

\usepackage{amsmath}
\usepackage{makeidx}  
\usepackage{algorithm}
\usepackage{algorithmic}
\usepackage{graphicx}
\usepackage{subfigure}
\usepackage{epstopdf}
\usepackage{bm}
\usepackage{bbding}
\usepackage{cite}
\usepackage{stfloats}

\usepackage{amssymb}
\setcounter{tocdepth}{3}
\usepackage{graphicx}

\usepackage{url}

\usepackage{tabularx,booktabs}
\newcolumntype{C}{>{\centering\arraybackslash}X} 
\setlength{\extrarowheight}{3pt}
\usepackage{lipsum}

\usepackage{makecell} 

\usepackage{graphicx}
\usepackage{color}
\newtheorem{thm}{Theorem}
\newtheorem{rem}{Remark}
\newtheorem{lem}{Lemma}
\newtheorem{pos}{Proposition}
\newtheorem{proof}{proof}

\addtolength{\evensidemargin}{13in}%
\textheight 9.7in
\voffset -0.2in
\hyphenation{op-tical net-works semi-conduc-tor}

\begin{document}

\title{Fundamental Limits of Intelligent Reflecting Surface Aided Multiuser Broadcast Channel}
\author{Guangji Chen
        and Qingqing~Wu,~\IEEEmembership{Senior Member,~IEEE}
        \thanks{G. Chen is with the Institute for Signal Processing and Systems at Shanghai Jiao Tong University and also with the State Key Laboratory of Internet of Things for Smart City, University of Macau, Macao 999078, China (email: guangjichen@um.edu.mo). Q. Wu is with the Department of Electronic Engineering, Shanghai Jiao Tong University, 200240, China (e-mail: qingqingwu@sjtu.edu.cn). }}

\maketitle

\begin{abstract}
Intelligent reflecting surface (IRS) has recently received significant attention in wireless networks owing to its ability to smartly control the wireless propagation through passive reflection. Although prior works have employed the IRS to enhance the system performance under various setups, the fundamental capacity limits of an IRS aided multi-antenna multi-user system have not yet been characterized. Motivated by this, we investigate an IRS aided multiple-input single-output (MISO) broadcast channel by considering the capacity-achieving dirty paper coding (DPC) scheme and dynamic beamforming configurations. We first propose a bisection based framework to characterize its capacity region by optimally solving the sum-rate maximization problem under a set of rate constraints, which is also applicable to characterize the achievable rate region with the zero-forcing (ZF) scheme. Interestingly, it is rigorously proved that dynamic beamforming is able to enlarge the achievable rate region of ZF if the IRS phase-shifts cannot achieve fully orthogonal channels, whereas the attained gains become marginal due to the reduction of the channel correlations induced by smartly adjusting the IRS phase-shifts. The result implies that employing the IRS is able to reduce the demand for implementing dynamic beamforming. Finally, we analytically prove that the sum-rate achieved by the IRS aided ZF is capable of approaching that of the IRS aided DPC with a sufficiently large IRS in practice. Simulation results shed light on the impact of the IRS on transceiver designs and validate our theoretical findings, which provide useful guidelines to practical systems by indicating that replacing sophisticated schemes with easy-implementation schemes would only result in slight performance loss.

\end{abstract}

\begin{IEEEkeywords}
Intelligent reflecting surface, broadcast channel, capacity region, dirty paper coding, zero-forcing.
\end{IEEEkeywords}

\vspace{30pt}

\section{Introduction}
Intelligent reflecting surface (IRS) has recently been proposed as a revolutionary candidate technology for sixth-generation (6G) wireless networks due to its potential to reconfigure the propagation of wireless signals via smart signal reflections \cite{8910627,di2020smart,wu2021intelligent,yuan2022recent,tang2022recent}. Specifically, an IRS is a digitally controllable planar meta-surface consisting of a large number of passive, sub-wavelength sized, and tunable reflecting elements. With an IRS smart controller, each of the reflecting elements is able to induce an independent phase-shift to the incident signal in real time, thereby enabling a flexibly dynamic control over wireless channels \cite{di2020smart}. As such, IRSs are capable of realizing the so-called ``smart radio environment'' via judiciously designing their reflection coefficients, where the power of the intended signal can be improved and that of the interference can be reduced \cite{wu2021intelligent}. Moreover, IRSs enjoy additional appealing advantages such as lightweight, conformal geometry, and low profile. To successfully integrate the IRS in future wireless networks and fully reap the benefits of the IRS, substantial research efforts have been dedicated to address various technical challenges including but not limited to channel estimation, joint optimization of the IRS phase-shifts and wireless resource allocations, and IRS deployment problems \cite{zheng2022survey}.

Among these, joint optimization of the IRS phase-shifts and wireless resource allocations is one crucial problem to maximally reap the performance gains in IRS aided wireless communications. To theoretically capture the receive signal power enhanced by the IRS in a single user case, the seminal work \cite{wu2019beamforming} proved that the fundamental squared power gain of the IRS can be achieved even with only discrete phase-shifts. Owing to this promising result, there have been substantial existing works investigating the problem regarding the joint optimization of the IRS phase-shifts and wireless resource allocations under various system setups, such as multi-cell multiple-input multiple-output (MIMO) networks \cite{9459505, 9279253, zhang2021beyond, pan2020multicell}, non-orthogonal multiple access (NOMA) \cite{mu2020exploiting, chen2020performance, zheng2020intelligent, chen2022active, fu2021reconfigurable}, orthogonal frequency division multiplexing (OFDM) based wireless systems \cite{yang2020intelligent, li2021intelligent}, wireless information and power transfer \cite{pan2020intelligent,wu2021irs,9716123}, mobile edge computing \cite{zhou2020delay, chen2022irs1, chu2020intelligent,bai2020latency,chen2021irs1}, integrated sensing and communication \cite{hua2022joint, meng2022intelligent}, among others. For instance, a dynamic IRS beamforming scheme to maximize the system's minimum rate in an IRS aided OFDM network was studied in \cite{yang2020intelligent}. To fully unleash the potential of the IRS in wireless powered communications networks (WPCNs), the authors in \cite{wu2021irs} further proposed a novel dynamic IRS beamforming framework to strike a balance between the system throughput and the resulting signalling overhead. The results in \cite{wu2021irs} demonstrated that employing dynamic IRS beamforming in WPCNs is able to achieve both high spectral and energy efficiency.


Despite ongoing research progress, prior literature usually employed suboptimal and low-complexity transmission schemes (e.g., time division multiple access (TDMA) or linear-based precoding for multiuser communications \cite{yang2020intelligent, li2021intelligent, wu2021irs, zhou2020delay, chen2022irs1, chu2020intelligent}), and designed generally suboptimal phase-shifts optimization algorithms (e.g., successive convex approximation and semidefinite relaxation \cite{mu2020exploiting, zheng2020intelligent, chen2022active, fu2021reconfigurable, pan2020intelligent}). There still lacks the investigation on the fundamental capacity limits of IRS aided wireless networks with capacity-achieving transmission schemes from an information-theoretic view. This is essential for understanding the theoretical performance upper-bound and guiding system designs. To the authors' best knowledge, there are only handful of works \cite{zhang2020capacity, 9427474, mu2021capacity} that investigated the capacity limits of IRS aided wireless systems. In particular, the authors in \cite{zhang2020capacity} studied the capacity performance of IRS-aided point-to-point MIMO communication systems. It is worth noting that the design of the IRS phase-shifts in a multi-user network needs to incorporate efficient multiple access schemes and balance the performance tradeoff among different users, which is thus fundamentally different from the single-user case. For the two-user setup with single antenna at the BS, the authors in \cite{9427474} derived the outer bound of the capacity regions for both the uplink multiple access channel and downlink broadcast channel under the two-user setup. The results were further extended to the downlink multi-user setup with single-antenna (also termed as ``degraded broadcast channel'' \cite{fundamental}) in \cite{mu2021capacity}. It was demonstrated that the capacity-achieving transmission scheme is to adopt NOMA with dynamic IRS beamforming, i.e., adopting multiple IRS beamforming patterns in a channel coherence interval. However, the works \cite{9427474, mu2021capacity} cannot be applied to the case with multiple antennas at the BS since its corresponding capacity achieving scheme is fundamentally different from that of a single-antenna communication system. Therefore, the capacity limit of the general IRS aided multiple-input single-output (MISO) broadcast channel still remains largely unknown, which thus motivates this work.

In this paper, we investigate the capacity region of an IRS aided MISO broadcast channel from an information-theoretic perspective, where a multi-antenna BS sends independent information to multiple users with the aid of an IRS. To fully harness the benefits brought by the IRS, we consider the dynamic beamforming configuration, where the active beamforming at the BS and the IRS phase-shifts can be adjusted multiple times in a channel coherence interval. For the conventional MISO broadcast channel (also termed as ``non-degraded broadcast channel'' \cite{fundamental}), it has been shown that dirty paper coding (DPC) is capable of achieving the capacity region \cite{weingarten2006capacity}. Due to its prohibitively high complexity, DPC is difficult to be implemented in practical wireless systems. Compared to DPC, zero-forcing (ZF) precoding is easier to be implemented by transmitting information signals in the null space of other users' channels \cite{wang2010linear}, yet ZF generally suffers substantial performance-loss compared to DPC due to the random/uncontrolable wireless propagation. However, for the system with an IRS, properly designing the IRS phase-shifts provides more degrees of freedom (DoFs) for the active beamforming designs at the BS, which can be efficiently exploited to suppress the inter-user interference in the spatial domain and further enhance the desired signal. This may result in different impacts on the performance of IRS aided DPC and ZF.

By considering the effect of favorable time-varying channels proactively induced by the IRS, some fundamental issues still remain unaddressed in the considered IRS aided non-degraded broadcast channel, which are identified as follows.
\begin{itemize}
  \item First, whether employing dynamic beamforming is able to enlarge the capacity and achievable rate regions of the IRS aided DPC and ZF schemes or not?
  \item Second, how much performance loss is caused by adopting the practical linear-precoding based ZF scheme as compared to the non-linear capacity achieving scheme, i.e., DPC?
\end{itemize}
For the first question, it has been demonstrated that exploiting dynamic IRS beamforming in the single-antenna wireless system with the TDMA scheme can significantly improve the sum throughput \cite{wu2021irs}. Regarding the IRS aided non-degraded broadcast channel, the potential gain attained by dynamic beamforming is still unclear since all users are simultaneously served and multiplexed in the spatial domain. Therefore, it remains an open problem whether dynamic beamforming is deserved to be employed or not at the expense of more signalling overhead. The second question is motivated by the fact that ZF can achieve the optimal sum capacity when users' channels are asymptotically orthogonal. Such favorable orthogonal channel conditions can be satisfied in a conventional wireless system without IRS by equipping a large number of antennas at the BS \cite{7402270} or adopting the sophisticated user selection algorithm \cite{yoo2006optimality} when the number of users goes to infinity. However, for the system with an IRS, benefited by the favourable time-varing channels generated by properly designing the IRS phase-shifts, the corresponding orthogonal or semi-orthogonal channels may be naturally obtained for only small number of antennas at the BS without performing user selection, which has non-trivial effects on simplifying the transceiver design. Aiming to tackle the aforementioned issues, the main contributions of this paper are summarized as follows:

\begin{itemize}
  \item First, we consider the capacity-achieving DPC transmission scheme, based on which we characterize the capacity region under the static beamforming by maximizing the sum-rate of all users subject to a set of rate constraints. Although such a sum-rate maximization problem is highly non-convex and challenging to solve, we develop a bisection-based framework to obtain its optimal value by exploiting its special structure. Then, we show that the capacity region under the dynamic beamforming configuration can be obtained by taking the convex hull operation of that under the static beamforming configuration, which serves as a foundation for quantifying the practical performance gap between the IRS aided ZF and DPC.
  \item Moreover, we prove that dynamic beamforming is able to enlarge the achievable rate region of ZF if designing IRS phase-shifts cannot achieve fully orthogonal channels. However, the attained gains would become marginal as the reduction of channel correlations induced by smartly adjusting the IRS phase-shifts. The resulting marginal gain suggests that dynamic beamforming is generally not needed for avoiding additional signalling overhead/implementational complexity incurred by performing dynamic beamforming.
  \item Next, we investigate the performance gap between the IRS aided ZF and DPC. It is rigorously proved that  the relative sum-rate gain achieved by the IRS aided DPC over the IRS aided ZF tends to be zero when the number of IRS elements becomes sufficiently large. This indicates that employing the sub-optimal/low-complexity transmission scheme in practice is capable of approaching the theoretical performance limit, which provides vital guidelines for simplifying transceiver designs.
  \item Finally, numerical results are provided to validate our theoretical findings and to draw useful insights. It is observed that 1) the capacity regions of the IRS aided DPC scheme under the extensively adopted channel setups are convex, which implies that dynamic beamforming is generally not needed; 2) the channel correlation coefficient maintains at a relatively low value after optimizing the IRS phase-shifts for ZF, which leads to marginal gains attained by dynamic beamforming for the IRS aided ZF scheme; and 3) the sum-rate performance of IRS aided ZF approaches that of DPC as the number of the IRS elements increases.
\end{itemize}

The reminder of this paper is organized as follows. Section II introduces the system model of the IRS aided MISO broadcast channel under the dynamic beamforming configuration. Section III presents the capacity and achievable rate regions characterization methods for both the IRS aided DPC and ZF schemes. Section IV provides theoretical analysis regarding dynamic beamforming and the practical performance gap between the IRS aided ZF and DPC. Section V presents numerical results to verify our theoretical findings and draw useful insights. Finally, Section VI concludes the paper.

\emph{Notations:} Boldface upper-case and lower-case  letter denote matrix and   vector, respectively.  ${\mathbb C}^ {d_1\times d_2}$ stands for the set of  complex $d_1\times d_2$  matrices. For a complex-valued vector $\bf x$, ${\left\| {\bf x} \right\|}$ represents the  Euclidean norm of $\bf x$, ${\rm arg}({\bf x})$ denotes  the phase of   $\bf x$, and ${\rm diag}(\bf x) $ denotes a diagonal matrix whose main diagonal elements are extracted from vector $\bf x$.
For a vector $\bf x$, ${\bf x}^*$ and  ${\bf x}^H$  stand for  its conjugate and  conjugate transpose respectively.   For a square matrix $\bf X$,  ${\rm{Tr}}\left( {\bf{X}} \right)$, $\left\| {\bf{X}} \right\|_2$ and ${\rm{rank}}\left( {\bf{X}} \right)$ respectively  stand for  its trace, Euclidean norm and rank,  while ${\bf{X}} \succeq {\bf{0}}$ indicates that matrix $\bf X$ is positive semi-definite. A circularly symmetric complex Gaussian random variable $x$ with mean $ \mu$ and variance  $ \sigma^2$ is denoted by ${x} \sim {\cal CN}\left( {{{\mu }},{{\sigma^2 }}} \right)$.  ${\cal O}\left(  \cdot  \right)$ is the big-O computational complexity notation. ${\mathop{\rm Conv}\nolimits} \left( {\cal X} \right)$ denotes the convex hull operation of the set ${\cal X}$. $ \cup $ represents the union operation. ${\mathbb{E}}\left[  \cdot  \right]$ represents the expectation operation.
\vspace{-8pt}
\section{System Model}
\begin{figure}[!t]
\setlength{\abovecaptionskip}{-5pt}
\setlength{\belowcaptionskip}{-5pt}
\centering
\includegraphics[width= 0.5\textwidth]{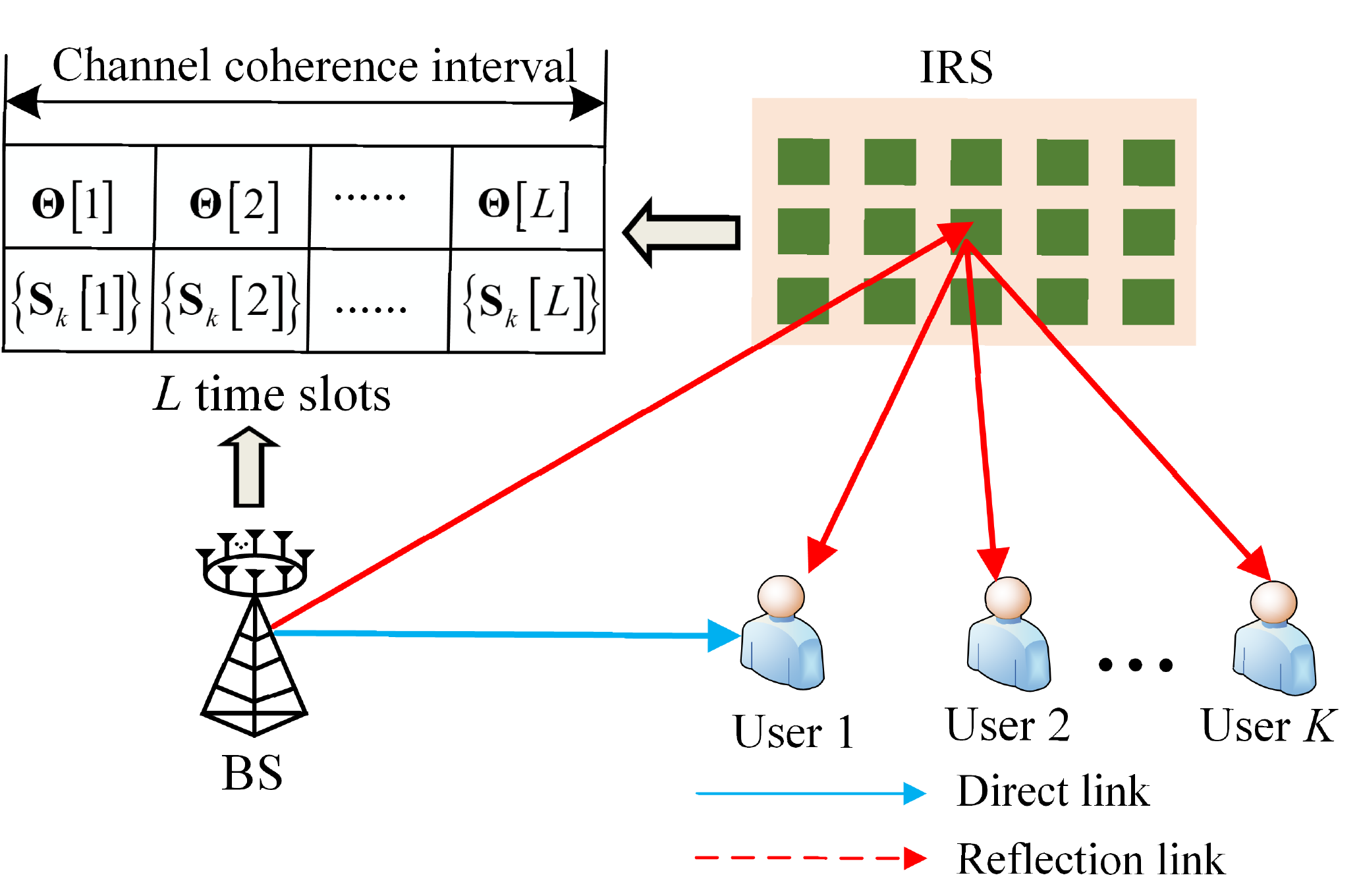}
\DeclareGraphicsExtensions.
\caption{An IRS aided MISO broadcast channel.}
\label{model}
\vspace{-12pt}
\end{figure}
As shown in Fig. \ref{model}, we consider an IRS aided MISO broadcast channel, where a multi-antenna BS sends individual messages to $K$ single-antenna users with the aid of an IRS. Similar to \cite{mu2021capacity}, the IRS with ${N_R}$ reflecting elements is further partitioned into $N$ subsurfaces for ease of practical control. In each subsurface, $\bar N = {N_R}/N$ adjacent elements share a common phase-shift. The set of the users and the number of the antennas at the BS are denoted by ${\cal K} \buildrel \Delta \over = \left\{ {1, \ldots K} \right\}$ and $M$, respectively. To characterize the theoretical capacity gain of the considered system, we assume that perfect CSI of all channels is available at the BS based on the channel acquisition methods proposed in \cite{guan2022irs}.

We focus on a channel coherence interval ${\cal T} \buildrel \Delta \over = \left( {0,T} \right]$ with duration $T > 0$. The channels from the BS to the IRS, from the BS to user $k$, and from the IRS to user $k$ are denoted by ${\bf{G}} \in \mathbb{C}^{N\times M}$, ${\bf{h}}_{d,k}^H \in \mathbb{C}^{1 \times M}$, and ${\bf{h}}_{r,k}^H \in \mathbb{C}^{1 \times N}$, respectively. In addition, we assume that all channels follow the quasi-static flat-fading model, which indicates that ${\bf{G}}$, ${\bf{h}}_{d,k}^H$, and ${\bf{h}}_{r,k}^H $ remain unchanged for $\forall t \in {\cal T}$. To fully harness the potential gain brought by the IRS, the IRS beamforming pattern can be reconfigured multiple times in a channel coherence interval. Specifically, we divide the whole channel coherence interval ${\cal T}$ into $L$ time slots (TSs) and the $l$-th TS is denoted by ${{\cal T}_l} \buildrel \Delta \over = \left( {\sum\nolimits_{i = 1}^{l - 1} {{t_i},\sum\nolimits_{i = 1}^l {{t_i}} } } \right]$, $\forall l \in {\cal L} \buildrel \Delta \over = \left\{ {1, \ldots L} \right\}$. The duration of the $l$-th TS is ${t_l}$ and $\sum\nolimits_{l = 1}^L {{t_l}}  = T$. In the $l$-th TS, its associated IRS beamforming pattern, denoted by ${\bf{\Theta }}\left[ l \right] = {\mathop{\rm diag}\nolimits} \left( {{e^{j{\theta _1}\left[ l \right]}}, \ldots ,{e^{j{\theta _N}\left[ l \right]}}} \right)$, is employed to assist downlink transmission, where ${\theta _n}\left[ l \right] \in \left[ {0,2\pi } \right)$ is the phase-shift of the $n$-th sub-surface. For ease of practical implementation, we consider the finite resolution for each IRS element and thus the discrete unit-modulus constraint ${\theta _n}\left[ l \right] \in {\cal F} \buildrel \Delta \over = \left\{ {{{2\pi q} \mathord{\left/
 {\vphantom {{2\pi q} {Q,q = 0,1, \ldots ,Q - 1}}} \right.
 \kern-\nulldelimiterspace} {Q,q = 0,1, \ldots ,Q - 1}}} \right\}$ is imposed, where $Q = {2^b}$ and $b$ denotes the number of bits used to quantize the number of phase-shift levels.

Without loss of generality, in the $l$-th TS, the BS transmits $K$ independent information signals, i.e., ${{\bf{x}}_k}\left[ l \right]\in \mathbb{C}^{M\times 1}$, $\forall k \in {\cal K},l \in {\cal L}$, for each user. It is assumed that a Gaussian codebook is used for each information signal and ${{\bf{x}}_k}\left[ l \right]\sim {\cal C}{\cal N}\left( {0,{{\bf{S}}_k}\left[ l \right]} \right)$, $\forall k \in {\cal K}$, where ${{\bf{S}}_k}\left[ l \right] = {\mathbb{E}}\left[ {{{\bf{x}}_k}\left[ l \right]{\bf{x}}_k^H\left[ l \right]} \right]$ is the covariance matrix of the signal for user $k$. Thus, the $M$-dimensional complex baseband signal transmitted by the BS in the $l$-th TS is ${\bf{x}}\left[ l \right] = \sum\nolimits_{k = 1}^K {{{\bf{x}}_k}\left[ l \right]} ,\forall l \in {\cal L}$. Suppose that the maximum transmit power at the BS is ${P_{\max }}$. Then, we have
\begin{align}\label{power constraint}
{\mathbb{E}}\left[ {{\bf{x}}\left[ l \right]{{\bf{x}}^H}\left[ l \right]} \right]  = \sum\nolimits_{k = 1}^K {{\rm{Tr}}\left( {{{\bf{S}}_k}\left[ l \right]} \right)}  \le {P_{\max }},\forall l \in {\cal L}.
\end{align}
\vspace{-8pt}
\subsection{IRS aided DPC Scheme}
To characterize the fundamental capacity limit of the IRS-aided MISO broadcast channel, the information-theoretically optimal DPC is considered for the information transmission, where the causal interference is pre-cancelled at the BS \cite{weingarten2006capacity}. Specifically, the information signal ${{\bf{x}}_k}\left[ l \right]$ for user $k$ is encoded before each user $i$ with $i > k$. Then, for user $k$, the causal interference caused by each user $i$ with $i < k$ can be canceled at the BS by employing DPC. In the $l$-th TS, the received signal at user $k$ can be written as
\begin{align}\label{received_signal}
{y_k}\left[ l \right] = {\bf{h}}_{k}^H\left( {{\bf{\Theta }}\left[ l \right]} \right){{\bf{x}}_{k}}\left[ l \right] + \sum\nolimits_{i > k} {{\bf{h}}_k^H\left( {{\bf{\Theta }}\left[ l \right]} \right){{\bf{x}}_i}\left[ l \right]}  + {z_k}\left[ l \right], \forall k \in {\cal K},l \in {\cal L}.
\end{align}
where ${\bf{h}}_k^H\left( {{\bf{\Theta }}\left[ l \right]} \right) = {\bf{h}}_{d,k}^H + {\bf{h}}_{r,k}^H{\bf{\Theta }}\left[ l \right]{\bf{G}}$ denotes the equivalent channel from the BS to user $k$ and ${z_k}\left[ l \right]\sim {\cal C}{\cal N}\left( {0,{\sigma ^2}\left[ l \right]} \right)$ denotes the additive white Gaussian noise at user $k$. With Gaussian codebook employed, the achievable rate (in bps/Hz) of user $k$ in the $l$-th TS is given by
\begin{align}\label{rate_l}
{R_k}\left[ l \right] = {\log _2}\left( {\frac{{{\sigma ^2} + {\bf{h}}_k^H\left( {{\bf{\Theta }}\left[ l \right]} \right)\left( {\sum\nolimits_{i = k}^K {{{\bf{S}}_i}\left[ l \right]} } \right){{\bf{h}}_k}\left( {{\bf{\Theta }}\left[ l \right]} \right)}}{{{\sigma ^2} + {\bf{h}}_k^H\left( {{\bf{\Theta }}\left[ l \right]} \right)\left( {\sum\nolimits_{i = k + 1}^K {{{\bf{S}}_i}\left[ l \right]} } \right){{\bf{h}}_k}\left( {{\bf{\Theta }}\left[ l \right]} \right)}}} \right), \forall k \in {\cal K},l \in {\cal L}.
\end{align}
Then, the average achievable rate of user $k$ over the whole channel coherence interval is ${{\bar R}_k} = \frac{1}{T}\sum\nolimits_{l = 1}^L {{t_l}} {R_k}\left[ l \right]$. The feasible region of $\left\{ {{\bf{\Theta }}\left[ l \right],{{\bf{S}}_k}\left[ l \right],{t_l},\forall k \in {\cal K},l \in {\cal L}} \right\}$ is denoted by
\begin{align}\label{feasible1}
{{\cal X}^L} = \left\{ {\begin{array}{*{20}{l}}
{{\theta _n}\left[ l \right] \in {\cal F},{{\bf{S}}_k}\left[ l \right] \succeq{\bf{0}},\sum\nolimits_{l = 1}^L {{t_l} \le T} }\\
{\sum\nolimits_{k = 1}^K {{\rm{Tr}}} \left( {{{\bf{S}}_k}\left[ l \right]} \right) \le {P_{\max }},{t_l} \ge 0}
\end{array}} \right\}.
\end{align}
Accordingly, the capacity region of the IRS-aided MISO broadcast channel achieved by DPC is defined as
\begin{align}\label{region_defination}
{\cal C}\left( L \right) \buildrel \Delta \over = \bigcup\limits_{\left\{ {{\bf{\Theta }}\left[ l \right],{{\bf{S}}_k}\left[ l \right],{t_l}} \right\} \in {{\cal X}^L}} {\tilde {\cal C}\left( {{\bf{\Theta }}\left[ l \right],{{\bf{S}}_k}\left[ l \right],{t_l}} \right)} ,
\end{align}
where $\tilde {\cal C}\left( {{\bf{\Theta }}\left[ l \right],{{\bf{S}}_k}\left[ l \right],{t_l}} \right) = \left\{ {{\bf{\bar r}}:0 \le {r_k} \le {{\bar R}_k}} \right\}$ denotes the achievable average rate tuples of $K$ users under the given $\left\{ {{\bf{\Theta }}\left[ l \right],{{\bf{S}}_k}\left[ l \right],{t_l}} \right\}$ and ${\bf{\bar r}} \buildrel \Delta \over = {\left[ {{r_1}, \ldots ,{r_K}} \right]^T}$.

\vspace{-8pt}
\subsection{IRS aided ZF Scheme}
For the ZF based transmission scheme, one dedicated beamforming vector is assigned for each user in the $l$-th TS. Then, the $M$-dimensional complex baseband
signal transmitted by the BS in the $l$-th TS can be expressed as ${\bf{x}}\left[ l \right] = \sum\nolimits_{k = 1}^K {{{\bf{w}}_k}\left[ l \right]{s_k}} ,\forall l \in {\cal L}$, where ${{{\bf{w}}_k}\left[ l \right]}\in \mathbb{C}^{M\times 1}$ denotes the beamforming vector for user $k$ in the $l$-th TS and ${s_k}$ is a scalar which represents the information-bearing symbol of user $k$. It is assumed that ${s_k}$'s are independent random variables with zero mean and unit variance (normalized power). Based on ZF criterion, the beamforming vector for any user should lie in the null space of the other users' channels and thus ${\bf{h}}_j^H\left( {{\bf{\Theta }}\left[ l \right]} \right){{\bf{w}}_k} = 0,\forall j \ne k$ is satisfied.
Hence, the achievable rate of user $k$ in the $l$-th TS can be expressed as
\begin{align}\label{rate_lb}
R_k^{{\rm{zf}}}\left[ l \right] = {\log _2}\left( {1 + \frac{{{{\left| {{\bf{h}}_k^H\left( {{\bf{\Theta }}\left[ l \right]} \right){{\bf{w}}_k}\left[ l \right]} \right|}^2}}}{{{\sigma ^2}}}} \right),\forall k \in {\cal K},l \in {\cal L}.
\end{align}
Similarly, the average achievable rate of user $k$ under the ZF based transmission scheme over the whole channel coherence interval can be written as
\begin{align}\label{average_rate2}
\bar R_k^{{\rm{zf}}} = \frac{1}{T}\sum\nolimits_{l = 1}^L {{t_l}} R_k^{{\rm{zf}}}\left[ l \right],k \in {\cal K}.
\end{align}
Let ${\cal X}_{{\mathop{\rm zf}\nolimits} }^L$ denote the feasible set of $\left\{ {{\bf{\Theta }}\left[ l \right],{{\bf{w}}_k}\left[ l \right],{t_l},\forall k \in {\cal K},l \in {\cal L}} \right\}$, which is given by
\begin{align}\label{feasible2}
{\cal X}_{{\mathop{\rm zf}\nolimits} }^L \!\!=\!\! \left\{ {{\theta _n}\left[ l \right] \in {\cal F},\sum\limits_{l = 1}^L {{t_l} \le T,{{\sum\nolimits_{k = 1}^K {\left\| {{{\bf{w}}_k}\left[ l \right]} \right\|} }^2} \le {P_{\max }}} ,{\bf{h}}_j^H\left( {{\bf{\Theta }}\left[ l \right]} \right){{\bf{w}}_k} \!=\! 0,\forall j \ne k,{t_l} \ge 0} \right\}.
\end{align}

According to \eqref{average_rate2} and \eqref{feasible2}, we further define the achievable rate region of the IRS-aided MISO broadcast channel under the ZF based transmission scheme as
\begin{align}\label{region_defination1}
{\cal R}\left( L \right) \buildrel \Delta \over = \bigcup\limits_{\left\{ {{\bf{\Theta }}\left[ l \right],{{\bf{w}}_k}\left[ l \right],{t_l}} \right\} \in {\cal X}_{{\mathop{\rm zf}\nolimits} }^L} {\tilde {\cal R}\left( {{\bf{\Theta }}\left[ l \right],{{\bf{w}}_k}\left[ l \right],{t_l}} \right)} ,
\end{align}
where $\tilde {\cal R}\left( {{\bf{\Theta }}\left[ l \right],{{\bf{w}}_k}\left[ l \right],{t_l}} \right) = \left\{ {{\bf{\bar r}}:0 \le {r_k} \le \bar R_k^{{\rm{zf}}}} \right\}$ denotes the achievable average rate tuples of $K$ users under the given $\left\{ {{\bf{\Theta }}\left[ l \right],{{\bf{w}}_k}\left[ l \right],{t_l}} \right\}$ for the ZF based transmission scheme.

\begin{rem}
Under any given channel conditions, it has been shown in \cite{weingarten2006capacity} that the capacity region of the MISO broadcast channel can be achieved by using DPC, whereas it is difficult to be applied in practical wireless systems, due to its prohibitively high complexity. In contrast, ZF-based transmission scheme is much easier to be implemented yet with performance loss.
\end{rem}

After integrating the IRS, whether dynamic beamforming configuration is capable of providing performance gains or not and the practical performance gap between the IRS aided ZF and DPC still remain unknown. In the following, we first provide a framework to characterize the capacity and achievable rate regions of the IRS aided DPC and ZF, respectively. Then, theoretical analysis regarding the performance gains attained by dynamic beamforming and the performance gap between the IRS aided ZF and DPC is provided.
\section{Capacity and Achievable Rate Regions Characterization }
In this section, we aim to characterize the fundamental capacity region of the IRS aided broadcast channel defined in \eqref{region_defination}. However, it still remains unknown how many TSs, i.e., $L$ are needed to achieve the capacity upper bound. Even though the required number of TSs for achieving performance upper-bound, denoted by ${L_u}$, is given, deriving the capacity region is still prohibitively complex since the dimension of all possible combinations of IRS beamforming patterns increases exponentially with ${L_u}N$. To circumvent this issue, we propose an efficient method to characterize ${\cal C}\left( {{L_u}} \right)$ by invoking the rate-profile technique.
\vspace{-10pt}
\subsection{Rate-Profile Based Capacity Characterization}
To start with, we first characterize the capacity region achieved by using a single TS, i.e., ${\cal C}\left( 1 \right)$. Then, we show that the capacity region by using multiple TSs, i.e., ${\cal C}\left( {{L_u}} \right)$, can be reconstructed based on ${\cal C}\left( 1 \right)$. For characterizing ${\cal C}\left( 1 \right)$, we can drop the TS index $l$ since all variables remain static during the whole channel coherence interval. Based on the rate-profile technique, we introduce the following sum-rate maximization problem by jointly optimizing the IRS phase-shifts and the transmit covariance matrix at the BS, i.e.,
\begin{subequations}\label{C1}
\begin{align}
\label{C1-a}\mathop {\max }\limits_{{\bf{\Theta }},\left\{ {{{\bf{S}}_k}} \right\},R}  \;\;&R\\
\label{C1-b}{\rm{s.t.}}\;\;\;\;&{\log _2}\left( {1 + \frac{{{\bf{h}}_k^H\left( {\bf{\Theta }} \right){{\bf{S}}_k}{{\bf{h}}_k}\left( {\bf{\Theta }} \right)}}{{{\bf{h}}_k^H\left( {\bf{\Theta }} \right)\left( {\sum\nolimits_{i > k}^K {{{\bf{S}}_i}} } \right){{\bf{h}}_k}\left( {\bf{\Theta }} \right) + {\sigma ^2}}}} \right) \ge {\alpha _k}R, ~\forall {k} \in {{\cal K}},\\
\label{C1-c}&\sum\nolimits_{k = 1}^K {{\mathop{\rm Tr}\nolimits} } \left( {{{\bf{S}}_k}} \right) \le {P_{\max }},\\
\label{C1-d}&{{\bf{S}}_k} \succeq {\bf{0}}, ~\forall {k} \in {{\cal K}},\\
\label{C1-e}&{\left[ {\bf{\Theta }} \right]_{n,n}} \in {\cal F},~\forall n,
\end{align}
\end{subequations}
where ${\alpha _k} \in \left[ {0,1} \right]$ denotes the rate ratio between user $k$ and the sum-rate, which satisfies $\sum\nolimits_{k = 1}^K {{\alpha _k}}  = 1$. Let ${\bm{\alpha }} = {\left[ {{\alpha _1}, \ldots ,{\alpha _K}} \right]^T}$ and the set of the region for ${\bm{\alpha }}$ is denoted by ${\cal A} \buildrel \Delta \over = \left\{ {0 \le {\alpha _k} \le 1,\forall k,\sum\nolimits_{k = 1}^K {{\alpha _k}}  = 1} \right\}$. Under the given ${\bm{\alpha }}$, the optimal objective value of problem \eqref{C1} is denoted by ${R^*}\left( {\bm{\alpha }} \right)$ and the corresponding rate tuple is ${{\bf{r}}^*}\left( {\bm{\alpha }} \right) = {\left[ {r_1^*\left( {\bm{\alpha }} \right), \ldots ,r_K^*\left( {\bm{\alpha }} \right)} \right]^T} = {R^*}\left( {\bm{\alpha }} \right){\bm{\alpha }}$. Then, we show that the capacity region, i.e., ${\cal C}\left( {{L_u}} \right)$, can be fully characterized based on the rate tuple ${{\bf{r}}^*}\left( {\bm{\alpha }} \right)$ in the following proposition.
\begin{pos}
The capacity region of the IRS-aided MISO broadcast channel under the dynamic beamforming configuration, i.e., ${\cal C}\left( {{L_u}} \right)$, is given by
\begin{align}\label{region_construction}
{\cal C}\left( {{L_u}} \right) = {\mathop{\rm Conv}\nolimits} \left( {{{\bf{0}}_{K \times 1}}\bigcup\nolimits_{{\bm{\alpha }},{\bm{\alpha }} \in {\cal A}} {{R^*}\left( {\bm{\alpha }} \right){\bm{\alpha }}} } \right).
\end{align}
\end{pos}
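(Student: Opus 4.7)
The plan is to establish \eqref{region_construction} by verifying the two inclusions separately, using a time-sharing argument together with the rate-profile parameterization of the single-slot Pareto boundary. Write $\mathcal{P} := \mathrm{Conv}(\{{\bf 0}\}\cup\bigcup_{\bm{\alpha}\in\mathcal{A}} R^\ast(\bm{\alpha})\bm{\alpha})$ for the right-hand side; the goal is to show $\mathcal{C}(L_u) = \mathcal{P}$.

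For the inclusion $\mathcal{C}(L_u)\supseteq \mathcal{P}$, observe that ${\bf 0}\in\mathcal{C}(L_u)$ trivially (set every covariance to zero) and that each vertex $R^\ast(\bm{\alpha})\bm{\alpha}$ is realized in a single slot by the optimizer $(\bm{\Theta}^\star(\bm{\alpha}),\{{\bf S}_k^\star(\bm{\alpha})\})$ of \eqref{C1}, so $R^\ast(\bm{\alpha})\bm{\alpha}\in\mathcal{C}(1)\subseteq\mathcal{C}(L_u)$. An arbitrary convex combination $\sum_i\lambda_i R^\ast(\bm{\alpha}_i)\bm{\alpha}_i+\lambda_0{\bf 0}$ is then achieved by dedicating slot $i$ with duration $t_i=\lambda_i T$ to the configuration optimal for $\bm{\alpha}_i$, while the residual slack $T-\sum_i t_i=\lambda_0 T$ contributes the weight on the origin. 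Carath\'eodory's theorem in $\mathbb{R}^K$ ensures every point of $\mathcal{P}$ admits such a representation with at most $K+1$ nonzero weights, so $L_u\le K+1$ slots suffice and the construction lies within $\mathcal{X}^{L_u}$.

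For the reverse inclusion $\mathcal{C}(L_u)\subseteq\mathcal{P}$, take any $\bar{\bf r}\in\mathcal{C}(L_u)$ dominated componentwise by the time-averaged rate vector $\bar{\bf R}=\tfrac{1}{T}\sum_{l=1}^{L_u}t_l{\bf R}[l]$, where ${\bf R}[l]$ denotes the per-slot rate vector under some $(\bm{\Theta}[l],\{{\bf S}_k[l]\},t_l)\in\mathcal{X}^{L_u}$. For each nonzero ${\bf R}[l]$ set $\bm{\alpha}_l={\bf R}[l]/\|{\bf R}[l]\|_1\in\mathcal{A}$; because $(\bm{\Theta}[l],\{{\bf S}_k[l]\})$ is a feasible single-slot configuration with rate profile $\bm{\alpha}_l$ and sum-rate $\|{\bf R}[l]\|_1$, the optimality of $R^\ast(\bm{\alpha}_l)$ in \eqref{C1} gives $\|{\bf R}[l]\|_1\le R^\ast(\bm{\alpha}_l)$. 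Therefore ${\bf R}[l]=\mu_l\,R^\ast(\bm{\alpha}_l)\bm{\alpha}_l+(1-\mu_l){\bf 0}$ with $\mu_l=\|{\bf R}[l]\|_1/R^\ast(\bm{\alpha}_l)\in[0,1]$, so each ${\bf R}[l]\in\mathcal{P}$, and consequently $\bar{\bf R}=\sum_l(t_l/T){\bf R}[l]+(1-\sum_l t_l/T){\bf 0}\in\mathcal{P}$ as a further convex combination of elements of $\mathcal{P}$.

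To transfer membership from $\bar{\bf R}$ to the dominated tuple $\bar{\bf r}\le\bar{\bf R}$, my approach is to exhibit a feasible configuration that attains $\bar{\bf r}$ itself---scaling the covariances $\{{\bf S}_k[l]\}$ downward or injecting synthetic Gaussian noise on a per-user basis to lower each $R_k[l]$ to the desired level---and then rerun the argument of the preceding paragraph on that configuration. This last step is the main obstacle: the convex hull of $\{{\bf 0}\}$ together with an arbitrary Pareto surface is not automatically downward-closed in $\mathbb{R}_+^K$, so the transfer cannot be obtained for free from abstract convex geometry and instead requires a constructive achievability argument. Fortunately, the broadcast-channel structure---in which rate degradation is always implementable via appropriate power allocation or noise injection---supplies this missing ingredient and completes the identification $\mathcal{C}(L_u)=\mathcal{P}$.
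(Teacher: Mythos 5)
Your proposal is correct in substance and follows the standard rate-profile/time-sharing argument that the paper itself does not spell out (it only defers to the reference for the earlier two-user, single-antenna case), so you are supplying detail the paper omits rather than taking a different route. Both inclusions are set up the right way: vertices plus time sharing (with Carath\'eodory bounding $L_u$ by $K+1$) for $\supseteq$, and per-slot normalization together with the optimality of $R^*(\bm{\alpha}_l)$ in \eqref{C1} for $\subseteq$.

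The step you flag as ``the main obstacle''---transferring membership from $\bar{\bf R}$ to a dominated tuple $\bar{\bf r}$---is real but easier to close than you suggest, and your first proposed fix is the shaky one. Scaling the covariances $\{{\bf S}_k[l]\}$ downward does not monotonically lower every user's rate under DPC: shrinking ${\bf S}_i$ for $i>k$ reduces the interference term in the denominator of $R_k[l]$ and can \emph{raise} user $k$'s rate, so ``scale down to hit $\bar{\bf r}$'' is not a well-defined construction. The clean closure is purely at the level of rate vectors: if ${\bf 0}\le{\bf y}\le{\bf R}[l]$ where ${\bf R}[l]$ is the rate vector of a feasible single-slot configuration, then that same configuration already satisfies constraint \eqref{C1-b} for the profile $\bm{\beta}={\bf y}/\|{\bf y}\|_1$ with sum-rate target $\|{\bf y}\|_1$ (since $R_k[l]\ge y_k=\beta_k\|{\bf y}\|_1$), whence $\|{\bf y}\|_1\le R^*(\bm{\beta})$ and ${\bf y}$ lies on the segment joining ${\bf 0}$ to $R^*(\bm{\beta})\bm{\beta}$, hence in $\mathcal{P}$. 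Applying this to $R'_k[l]:=R_k[l]\,\bar r_k/\bar R_k$ (componentwise, with $0/0:=0$) gives $\tfrac{1}{T}\sum_l t_l R'_k[l]=\bar r_k$ with every ${\bf R}'[l]\in\mathcal{P}$, so $\bar{\bf r}\in\mathcal{P}$ by convexity---no noise injection or re-coding is needed. (The sub-codebook argument you invoke is the information-theoretic reason $\mathcal{C}(L_u)$ is downward closed in the first place, but the decomposition above is all that is required to show $\mathcal{P}$ absorbs the dominated points.)
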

\begin{proof}
The proof is similar to \cite{9427474}. Interested readers may refer to \cite{9427474} for more details.
\end{proof}

Proposition 1 indicates that the capacity region ${\cal C}\left( {{L_u}} \right)$ can be characterized by solving problem \eqref{C1} to obtain its optimal value ${R^*}\left( {\bm{\alpha }} \right)$ with different ${\bm{\alpha }}$'s.
\begin{figure}[!t]
\setlength{\abovecaptionskip}{-5pt}
\setlength{\belowcaptionskip}{-5pt}
\centering
\includegraphics[width= 0.55\textwidth]{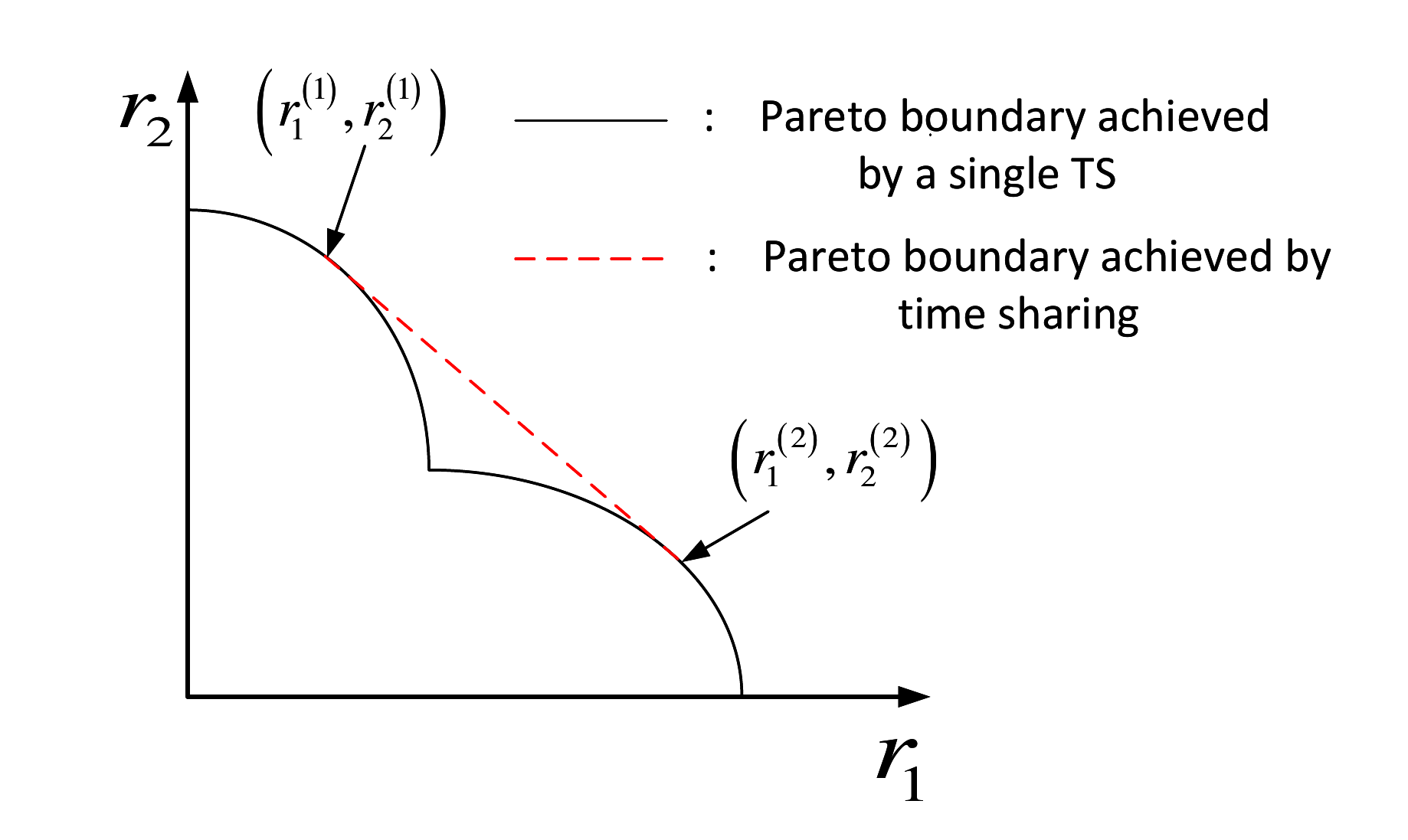}
\DeclareGraphicsExtensions.
\caption{Illustration of the capacity region.}
\label{illustration}
\vspace{-6pt}
\end{figure}
\begin{rem}
We provide the graphical illustration for Proposition 1 through a two-user case in Fig. \ref{illustration}. The black solid line in Fig. \ref{illustration} is obtained by optimally solving problem \eqref{C1} under different ${\bm{\alpha }}$. Suppose that $\left( {r_1^{\left( 1 \right)},r_2^{\left( 1 \right)}} \right)$ and $\left( {r_1^{\left( 2 \right)},r_2^{\left( 2 \right)}} \right)$ are achieved by using ${\Xi ^{\left( 1 \right)}} = \left\{ {{{\bf{\Theta }}^{\left( 1 \right)}},{\bf{S}}_k^{\left( 1 \right)}} \right\}$ and ${\Xi ^{\left( 2 \right)}} = \left\{ {{{\bf{\Theta }}^{\left( 2 \right)}},{\bf{S}}_k^{\left( 2 \right)}} \right\}$, respectively. Any point on the red dash line can be achieved by performing time sharing among ${\Xi ^{\left( 1 \right)}}$ and ${\Xi ^{\left( 2 \right)}}$. Specifically, by allocating $\beta T$ for ${\Xi ^{\left( 1 \right)}}$ and $\left( {1 - \beta } \right)T$ for ${\Xi ^{\left( 2 \right)}}$, the rate pair $\left( {r_1^{\left( 1 \right)} + \beta \left( {r_1^{\left( 2 \right)} - r_1^{\left( 1 \right)}} \right),r_2^{\left( 1 \right)} - \beta \left( {r_2^{\left( 1 \right)} - r_2^{\left( 2 \right)}} \right)} \right)$ can be achieved, where $0 \le \beta  \le 1$. Therefore, the capacity region can be further enlarged if the region obtained by the single TS is non-convex. It is worth noting that performing time sharing among ${\Xi ^{\left( 1 \right)}}$ and ${\Xi ^{\left( 2 \right)}}$ requires dynamic beamforming configurations, which incurs more hardware cost and signalling overhead.
\end{rem}
\vspace{-8pt}
\subsection{Exact and Inner Bounds of Capacity Region}
In this subsection, we focus on problem \eqref{C1} to characterize the capacity region. Note that even in the absence of the IRS, the left-hand side (LHS) of \eqref{C1-b} is non-concave with respect to $\left\{ {{{\bf{S}}_k}} \right\}$, which makes \eqref{C1} a non-convex optimization problem. Moreover, the introduction of the IRS renders problem \eqref{C1} more challenging since ${\bf{\Theta }}$ and $\left\{ {{{\bf{S}}_k}} \right\}$ are tightly coupled in constraint \eqref{C1-b}. Nevertheless, we propose the optimal and high-quality suboptimal solutions to problem \eqref{C1}, based on which both the exact and inner bounds of the capacity region can be derived, respectively.

Note that the optimal objective value of problem \eqref{C1} is a function with respect to ${P_{\max }}$, i.e., ${R^*}\left( {\bm{\alpha }} \right) = f\left( {{P_{\max }}} \right)$. It can be readily shown that $f\left( {{P_{\max }}} \right)$ increases with respect to ${P_{\max }}$. As such, for a given target rate ${R^t}$, the minimum transmit power required at the BS is denoted by ${P^t} = {f^{ - 1}}\left( {{R^t}} \right)$ and ${f^{ - 1}}\left( {{R^t}} \right)$ increases with respect to ${{R^t}}$. Therefore, ${R^*}\left( {\bm{\alpha }} \right)$ is the unique root of the equation ${f^{ - 1}}\left( {{R^t}} \right) = {P_{\max }}$ with respect to ${{R^t}}$, which can be obtained by a bisection search. In the following, we aim to characterize the required transmit power for supporting any given target rate by considering the following power minimization problem, i.e.,
\begin{subequations}\label{C2}
\begin{align}
\label{C2-a}\mathop {\min }\limits_{{\bf{\Theta }},\left\{ {{{\bf{S}}_k}} \right\},\Omega}  \;\;&\sum\nolimits_{k = 1}^K {{\mathop{\rm Tr}\nolimits} } \left( {{{\bf{S}}_k}} \right)\\
\label{C2-b}{\rm{s.t.}}\;\;\;\;\;&{\gamma _k}\left( {{\bf{h}}_k^H\left( {\bf{\Theta }} \right)\left( {\sum\limits_{i > k}^K {{{\bf{S}}_i}} } \right){{\bf{h}}_k}\left( {\bf{\Theta }} \right) + {\sigma ^2}} \right) \le {\bf{h}}_k^H\left( {\bf{\Theta }} \right){{\bf{S}}_k}{{\bf{h}}_k}\left( {\bf{\Theta }} \right), ~\forall {k} \in {{\cal K}},\\
\label{C2-e}&\eqref{C1-d},~\eqref{C1-e},
\end{align}
\end{subequations}
where ${\gamma _k} = {2^{{\alpha _k}{R^t}}} - 1$.

For problem \eqref{C2}, $\left\{ {{{\bf{S}}_k}} \right\}$ and ${\bf{\Theta }}$ are still coupled in constraint \eqref{C2-b}. To overcome this issue, the functional relationship between the objective value of problem \eqref{C2} and ${\bf{\Theta }}$ is derived in the following proposition by fully exploiting the inherent structure of problem \eqref{C2}.
\begin{pos}
The optimal value of problem \eqref{C2} can be expressed in closed-form as
\begin{align}\label{power_value}
{p^*}\left( {{\bf{\Theta }}} \right) = \sum\nolimits_{k = 1}^K {{\lambda _k}\left( {{\bf{\Theta }} } \right)} {\gamma _k},
\end{align}
where
\begin{align}\label{lamda_group1}
&{\lambda _k}\left( {{\bf{\Theta }}} \right) \!\!=\!\! \frac{{{\sigma ^2}}}{{{\bf{h}}_k^H\left( {\bf{\Theta }} \right){{\left( {{{\bf{I}}_M} \!\!+\!\! {{\bf{A}}_k}\left( {{\bf{\Theta }}} \right)} \right)}^{ - 1}}{{\bf{h}}_k}\left( {\bf{\Theta }} \right)}},k > 1,\\
&{\lambda _k}\left( {{\bf{\Theta }}} \right) = \frac{{{\sigma ^2}{\gamma _k}}}{{{\bf{h}}_k^H\left( {\bf{\Theta }} \right){{\bf{h}}_k}\left( {\bf{\Theta }} \right)}},k= 1,
\end{align}
with
\begin{align}\label{power_value2}
{{\bf{A}}_k}\left( {{\bf{\Theta }}} \right) = \sum\nolimits_{i > k} {\frac{{{\gamma _k}{\lambda _i}\left( {\bf{\Theta }} \right)}}{{{\sigma ^2}}}{{\bf{h}}_i}\left( {\bf{\Theta }} \right){\bf{h}}_i^H\left( {\bf{\Theta }} \right)} ,k > 1.
\end{align}
\end{pos}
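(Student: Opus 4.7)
The plan is to exploit the celebrated BC--MAC uplink--downlink duality for DPC transmission. With $\boldsymbol{\Theta}$ held fixed, the equivalent channels ${\bf h}_k(\boldsymbol{\Theta})$ become given constant vectors, so problem \eqref{C2} reduces to the classical MISO downlink DPC power-minimization problem under per-user SINR targets $\{\gamma_k\}$ with encoding order $1,\ldots,K$. After rearranging the SINR constraint \eqref{C2-b} (which is linear in each ${\bf S}_i$), the resulting problem is a convex SDP for which strong duality holds, so it admits an explicit dual characterization.

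I would first translate this downlink problem into its dual uplink MAC, in which each user transmits a single data stream with a scalar power and the receiver performs SIC in the reverse of the DPC encoding order. The duality result guarantees that the optimal total transmit power is preserved, so it suffices to compute the dual MAC optimum and then sum the resulting scalar powers.

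I would then solve the dual MAC recursively. When decoding a given user in the reverse SIC order, all users further along the schedule have already been cancelled, leaving only the set $\{i>k\}$ (in the proposition's indexing) as residual interference, with covariance $\sigma^2{\bf I}_M+\sum_{i>k} p_i^{\mathrm{ul}}{\bf h}_i(\boldsymbol{\Theta}){\bf h}_i^H(\boldsymbol{\Theta})$. Setting the MMSE-combined SINR equal to the target $\gamma_k$ gives a scalar equation for each $p_k^{\mathrm{ul}}$; after the normalisation $\lambda_k=p_k^{\mathrm{ul}}/\gamma_k$, the matrix inside the inverse is exactly ${\bf I}_M+{\bf A}_k(\boldsymbol{\Theta})$ in \eqref{power_value2}, and the recursion \eqref{lamda_group1} drops out, with the last-decoded user (user $1$, which sees no residual interference) providing the base case. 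Summing the uplink powers then yields $p^*(\boldsymbol{\Theta})=\sum_{k=1}^{K}\lambda_k(\boldsymbol{\Theta})\gamma_k$, matching \eqref{power_value}.

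The main obstacle I expect is applying the BC--MAC duality cleanly through the IRS-dependent equivalent channels and justifying that expressing the downlink optimum through scalar $\lambda_k$'s alone is without loss of generality — i.e., that rank-one transmit covariances ${\bf S}_k$ achieve the optimum. This can be handled either by the standard ``beamforming is optimal'' argument for MISO DPC (via an eigen-decomposition/KKT inspection of \eqref{C2}) or, more directly, by invoking the explicit downlink-to-uplink covariance mapping of Vishwanath--Jindal--Goldsmith, which converts any feasible dual MAC scalar-power solution into a feasible set of rank-one downlink covariances with the same total power.
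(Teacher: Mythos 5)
Your route is in substance the same as the paper's: the paper proves Proposition~2 by taking the Lagrangian dual of the fixed-$\boldsymbol{\Theta}$ power-minimization problem, invoking Slater's condition for strong duality, and reading the closed form for the multipliers $\lambda_k$ off the KKT stationarity condition together with the rank-one optimality of the transmit covariances (cited from the literature). The dual variables $\lambda_k\gamma_k$ in that derivation are precisely your dual-uplink powers, so invoking BC--MAC duality as a named black box versus re-deriving it through the Lagrangian is a presentational difference only; both arguments also require, and supply, the same beamforming-optimality step.

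The one substantive problem is your identification of the residual-interference set in the dual MAC. With DPC encoding order $1,\ldots,K$, downlink user $k$ is interfered by the users encoded after it, $\{i>k\}$ (constraint \eqref{C2-b}); the dual MAC decodes in the reverse order, so when user $k$ is decoded the users $\{i>k\}$ have already been stripped and the residual interference comes from $\{i<k\}$. Your sentence asserts both that the users further along the schedule ``have already been cancelled'' and that $\{i>k\}$ is the residual set, which is self-contradictory, and the covariance you write down, $\sigma^2{\bf I}_M+\sum_{i>k}p_i^{\mathrm{ul}}{\bf h}_i{\bf h}_i^H$, is the wrong one. The correct MMSE-SINR recursion gives $\lambda_k=\sigma^2/\bigl({\bf h}_k^H({\bf I}_M+\sum_{i<k}\tfrac{\gamma_i\lambda_i}{\sigma^2}{\bf h}_i{\bf h}_i^H)^{-1}{\bf h}_k\bigr)$, initialized at $k=1$ with an empty interference sum, exactly as in the paper's Appendix~A (equation \eqref{lamda_expression}). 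In fairness, the proposition as printed writes ${\bf A}_k=\sum_{i>k}\tfrac{\gamma_k\lambda_i}{\sigma^2}{\bf h}_i{\bf h}_i^H$, which is inconsistent with the paper's own appendix (where the sum runs over $i<k$ with weights $\gamma_i\lambda_i$), so you appear to have reverse-engineered the interference set from a typo in the statement rather than derived it from the duality; the derivation should follow the duality, and the statement's indexing then needs to be corrected to match.
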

\begin{proof}
Please refer to Appendix A.
\end{proof}

Proposition 2 identifies the fact that the role of the IRS in the MISO broadcast channel is to minimize the cost function given in \eqref{power_value}. By exploiting Proposition 2, both the exact and inner bounds of the IRS aided MISO broadcast channel can be characterized as follows.
\subsubsection{Exact Bound Characterization}
Due to the non-convex function ${p^*}\left( {{\bf{\Theta }}} \right)$ and the discrete unit-modulus constraint on ${\bf{\Theta }}$, there is generally no standard approach for efficiently solving the problem $\mathop {\min }\limits_{{\bf{\Theta }}} {p^*}\left( {{\bf{\Theta }}} \right)$ optimally. One straightforward method to obtain the optimal solution is to adopt exhaustive search. By performing exhaustive searching for \eqref{power_value}, the minimum transmit power required at the BS under any given rate target can be obtained as
\begin{align}\label{power_function1}
{f^{ - 1}}\left( {{R^t}} \right) = \mathop {\min }\limits_{{{\left[ {\bf{\Theta }} \right]}_{n,n}} \in {\cal F}} {p^*}\left( {\bf{\Theta }} \right).
\end{align}
Then, we further employ bisection searching to obtain the root of the equation ${f^{ - 1}}\left( {{R^t}} \right) = {P_{\max }}$ and thus the optimal value of problem \eqref{C1} is obtained as ${R^*}\left( {\bm{\alpha }} \right)$. The procedure of obtaining ${R^*}\left( {\bm{\alpha }} \right)$ is summarized in Algorithm 1. The computational complexity of Algorithm 1 is dominated by step 4. Specifically, the complexity of performing exhaustive search in step 4 is ${\cal O}\left( {{N^Q}} \right)$. Hence, the total complexity for obtaining the objective value of problem \eqref{C3} is ${\cal O}\left( {{{\log }_2}\left( {{R_{\max }}/\varepsilon } \right){N^Q}} \right)$, where ${{{\log }_2}\left( {{R_{\max }}/\varepsilon } \right)}$ represents the maximum number of iterations for bisection search.

\begin{algorithm}[!t]
 \caption{Bisection-based algorithm for obtaining ${R^*}\left( {\bm{\alpha }} \right)$.}
 \label{alg1}
 \begin{algorithmic}[1]
  \STATE  \textbf{Initialize} threshold $\varepsilon  > 0$, ${R_{\min }} = 0$, and ${R_{\max }} \ge {R^*}\left( {\bm{\alpha }} \right)$.
  \STATE  \textbf{repeat}
  \STATE \quad Set $\bar R = \frac{{{R_{\min }} + {R_{\max }}}}{2}$.
  \STATE \quad Solve problem \eqref{power_function1} by using exhaustive searching to obtain ${f^{ - 1}}\left( {{R^t}} \right)$. If ${f^{ - 1}}\left( {{R^t}} \right) > {P_{\max }}$,\\
\quad set ${R_{\max }} = \bar R$. Otherwise, set ${R_{\min }} = \bar R$.
  \STATE \textbf{until} ${R_{\max }} - {R_{\min }} \le \varepsilon$, where $\varepsilon  > 0$ is a small constant to control the algorithm accuracy.
  \STATE \textbf{Output} ${R^*}\left( {\bf{\alpha }} \right) = \bar R$.
 \end{algorithmic}
\end{algorithm}

By varying the values of ${\bm{\alpha }}$,  the resulting rate tuples achieved by using a single TS can be obtained, which constitute the Pareto boundary of the capacity region under the static beamforming configuration. Finally, the exact bound of the capacity region achieved by using dynamic beamforming configuration can be characterized based on \eqref{region_construction} in Proposition 1, which is given by ${{\cal C}^{\rm{O}}} = {\mathop{\rm Conv}\nolimits} \left( {{{\bf{0}}_{K \times 1}}\bigcup\nolimits_{{\bm{\alpha }},{\bm{\alpha }} \in {\cal A}} {{R^*}\left( {\bm{\alpha }} \right){\bm{\alpha }}} } \right)$. It is worth noting that ${{\cal C}^{\rm{O}}}$ serves as the achievable capacity upper bound, which is useful to measure the practical performance gap between the easy-implementation scheme and theoretical capacity limit.

\subsubsection{Inner Bound Characterization}
To reduce the computational complexity incurred by searching the optimal ${\bf{\Theta }}$ when $N$ is practically large. A low complexity algorithm based on element-wise alternating optimization can be employed for solving the problem $\mathop {\min }\limits_{{\bf{\Theta }}} {p^*}\left( {{\bf{\Theta }}} \right)$. Specifically, by fixing the phase-shifts of the set ${\cal N}\backslash n$, the optimal solution of the phase-shift the $n$-th element can be obtained by an one-dimensional search, i.e.,
\begin{align}\label{optimal_phase}
\theta _n^* = \arg \mathop {\min }\limits_{{\theta _n} \in {\cal F}} {p^*}\left( {{\bf{\Theta }}} \right).
\end{align}
By alternately determining the phase-shifts of all elements based on \eqref{optimal_phase}, the value of ${p^*}\left( {{\bf{\Theta }}} \right)$ is non-decreasing over the iterations. Moreover, the minimum value of ${p^*}\left( {{\bf{\Theta }}} \right)$ is lower-bounded by a finite value and thus the convergence is guaranteed. Denote the converged value of ${p^*}\left( {{\bf{\Theta }}} \right)$ by ${{\tilde f}^{ - 1}}\left( {{R^t}} \right)$. Then, the root of the equation ${{\tilde f}^{ - 1}}\left( {{R^t}} \right) = {P_{\max }}$, denoted by $\tilde R\left( {\bm{\alpha }} \right)$, is obtained by using bisection searching. The procedure of obtaining $\tilde R\left( {\bm{\alpha }} \right)$ is summarized in Algorithm 2. The complexity of Algorithm 2 can be analyzed as follows. In the inner layer to minimize ${p^*}\left( {{\bf{\Theta }}} \right)$, the resulting computational complexity is ${\cal O}\left( {{I_{{\rm{iter}}}}N} \right)$, where ${{I_{{\rm{iter}}}}}$ denotes the inner iteration numbers associated with using element-wise alternating method to update ${\theta _n}$. In the outer layer, the bisection search is employed and thus the overall complexity of Algorithm 2 can be written as ${\cal O}\left( {{{\log }_2}\left( {{R_{\max }}/\varepsilon } \right){I_{{\rm{iter}}}}N} \right)$.

\begin{algorithm}[!t]
 \caption{Element wise alternating-based algorithm for obtaining $\tilde R\left( {\bm{\alpha }} \right)$.}
 \label{alg1}
 \begin{algorithmic}[1]
  \STATE  \textbf{Initialize} ${\varepsilon _1} > 0$, ${\varepsilon _2} > 0$, ${R_{\min }} = 0$, and ${R_{\max }} \ge {R^*}\left( {\bm{\alpha }} \right)$
  \STATE  \textbf{repeat}
  \STATE \quad Set $\bar R = \frac{{{R_{\min }} + {R_{\max }}}}{2}$.
  \STATE \quad \textbf{repeat}
  \STATE \quad \quad \textbf{for} $n = 1 \to N$ \textbf{do}
  \STATE \quad \quad \quad Update ${\theta _n}$ as \eqref{optimal_phase}.
  \STATE \quad \quad \textbf{end for}
   \STATE \quad \textbf{until} the fractional decrease of the objective value of ${p^*}\left( {{\bf{\Theta }}} \right)$ is below ${\varepsilon _1} > 0$.
  \STATE \quad If ${p^*}\left( {{\bf{\Theta }}} \right) > {P_{\max }}$,
               set ${R_{\max }} = \bar R$. Otherwise, set ${R_{\min }} = \bar R$.
  \STATE \textbf{until} ${R_{\max }} - {R_{\min }} \le {\varepsilon _2}$.
  \STATE \textbf{Output} $\tilde R\left( {\bm{\alpha }} \right) = \bar R$.
 \end{algorithmic}
\end{algorithm}

Similarly, the inner bound of the capacity region can be characterized as
\begin{align}\label{Inner_bound}
{{\cal C}^{\rm{I}}} = {\mathop{\rm Conv}\nolimits} \left( {{{\bf{0}}_{K \times 1}}\bigcup\nolimits_{{\bm{\alpha }},{\bm{\alpha }} \in {\cal A}} {\tilde R\left( {\bm{\alpha }} \right){\bm{\alpha }}} } \right).
\end{align}
Compared to the characterization of the exact bound, the complexity of characterizing the inner bound is linear over $N$ and thus is significantly reduced. The tightness of the derived inner bound over the exact bound will be verified by simulations under different channel setups in Section V.

\begin{rem}
Note that the developed bisection search based framework in this section is also applicable to characterize the achievable rate region of the IRS aided ZF scheme. For any given ${\bm{\alpha }}$, the minimum required transmit power ${p^{{\rm{zf}}}}\left( {{R^t}} \right)$ for supporting a given target rate ${R^t}$ can be obtained similar to \eqref{power_function1}. Then, the unique root of the equation ${p^{{\rm{zf}}}}\left( {{R^t}} \right) = {P_{\max }}$ is obtained by $R_{{\rm{zf}}}^*\left( {\bm{\alpha }} \right)$. The achievable rate region of the IRS aided ZF scheme can be characterized as ${{\cal R}^{{\rm{zf}}}} = {\mathop{\rm Conv}\nolimits} \left( {{{\bf{0}}_{K \times 1}}\bigcup\nolimits_{{\bm{\alpha }},{\bm{\alpha }} \in {\cal A}} {R_{{\rm{zf}}}^*\left( {\bm{\alpha }} \right){\bm{\alpha }}} } \right)$.
\end{rem}

\section{Theoretical Analysis}
The integration of the IRS facilitates the flexibility of smartly controlling multiuser wireless channels, which may have different impacts on DPC and ZF-based schemes in Section II. In this section, we aim to answer two fundamental questions from the theoretical analysis perspective. The first one is whether dynamic beamforming configurations bring additional gains or not? The second one is whether ZF based transmission scheme is capable of approaching the performance of the DPC scheme or not after integrating IRS?

\subsection{How IRS Affects Channel Correlations}
Note that the answers to the above two questions rely heavily on the channel correlation, which is essentially the function of the IRS phase-shift. Before answering the aforementioned two questions, we first give some discussions on how IRS affects the channel correlation. Define the channel correlation between two users (user $k$ and user $m$) as follows
\begin{align}\label{correlation}
{\rho _{k,m}} = \frac{{\left| {{\bf{h}}_k^H\left( {\bf{\Theta }} \right){{\bf{h}}_m}\left( {\bf{\Theta }} \right)} \right|}}{{\left\| {{{\bf{h}}_m}\left( {\bf{\Theta }} \right)} \right\|\left\| {{{\bf{h}}_k}\left( {\bf{\Theta }} \right)} \right\|}},
\end{align}
which satisfies $0 \le {\rho _{k,m}} \le 1$.
\begin{lem}
Under the case that ${\rho _{k,m}} = 0$, i.e., $\left( {{\bf{h}}_{r,k}^H{\bf{\Theta G}} + {\bf{h}}_{d,k}^H} \right){\left( {{\bf{h}}_{r,m}^H{\bf{\Theta G}} + {\bf{h}}_{d,m}^H} \right)^H} = 0$,~$\forall k \ne m,k,m \in {\cal K}$, we always have ${{\cal R}^{{\rm{zf}}}} = {{\cal C}^{\rm{O}}}$.
\end{lem}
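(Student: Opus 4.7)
The plan is to verify both set inclusions separately. The direction $\mathcal{R}^{\rm zf} \subseteq \mathcal{C}^{\rm O}$ is essentially tautological, and I would dispose of it in one sentence: for any ZF strategy $\{{\bf w}_k[l],{\bf \Theta}[l],t_l\}$, the rank-one covariances ${\bf S}_k[l] = {\bf w}_k[l]{\bf w}_k^H[l]$ are feasible for DPC, and the ZF nulling ${\bf h}_j^H({\bf \Theta}[l]){\bf w}_k[l] = 0$ wipes out every interference sum in \eqref{rate_l} and collapses the DPC rate exactly to the ZF rate in \eqref{rate_lb}; monotonicity of the convex-hull construction of Proposition~1 and Remark~3 then transfers this pointwise containment to the regions.

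For the reverse inclusion $\mathcal{C}^{\rm O} \subseteq \mathcal{R}^{\rm zf}$, the key move is to evaluate the closed-form Proposition~2 cost at an orthogonalizing IRS configuration ${\bf \Theta}^*$ furnished by the hypothesis. Fix ${\bm \alpha} \in \mathcal{A}$ and $\gamma_k = 2^{\alpha_k R^t} - 1$. Because ${\bf A}_k({\bf \Theta}^*) = \sum_{i>k}\tfrac{\gamma_k\lambda_i}{\sigma^2}{\bf h}_i{\bf h}_i^H$ is a non-negative combination of outer products ${\bf h}_i{\bf h}_i^H$ whose ranges are all orthogonal to ${\bf h}_k$, it annihilates ${\bf h}_k$, so $({\bf I}_M + {\bf A}_k)^{-1}{\bf h}_k = {\bf h}_k$. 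Substituting into \eqref{lamda_group1}--\eqref{power_value} yields $\lambda_k({\bf \Theta}^*) = \sigma^2/\|{\bf h}_k({\bf \Theta}^*)\|^2$ for every $k$ (the $k=1$ special case follows by the same reasoning), and hence $p^*({\bf \Theta}^*) = \sum_k \sigma^2\gamma_k/\|{\bf h}_k({\bf \Theta}^*)\|^2$. In parallel, the same orthogonality places ${\bf h}_k$ itself inside the null space of $\{{\bf h}_j\}_{j\neq k}$, so the optimal ZF direction is the matched filter ${\bf w}_k = \sqrt{\sigma^2\gamma_k/\|{\bf h}_k\|^2}\,{\bf h}_k/\|{\bf h}_k\|$, which achieves the identical total power $p^{\rm zf}({\bf \Theta}^*) = p^*({\bf \Theta}^*)$.

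To close the argument, I would sandwich the two global minima using $p^*({\bf \Theta}) \leq p^{\rm zf}({\bf \Theta})$ at every ${\bf \Theta}$, together with the pointwise equality just established: $\min_{\bf \Theta} p^*({\bf \Theta}) \leq \min_{\bf \Theta} p^{\rm zf}({\bf \Theta}) \leq p^{\rm zf}({\bf \Theta}^*) = p^*({\bf \Theta}^*)$. To force equality throughout I would then invoke the universal Rayleigh-quotient lower bound $p^*({\bf \Theta}) \geq \sum_k \sigma^2\gamma_k/\|{\bf h}_k({\bf \Theta})\|^2$, which follows from the chain ${\rm Tr}({\bf S}_k)\|{\bf h}_k\|^2 \geq {\bf h}_k^H{\bf S}_k{\bf h}_k \geq \sigma^2\gamma_k$ that \eqref{C2-b} forces on every feasible DPC point, and argue that under the hypothesis ${\bf \Theta}^*$ attains the minimum of this per-user bound. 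Once the sandwich closes, $R^*({\bm \alpha}) = R^*_{\rm zf}({\bm \alpha})$ for every ${\bm \alpha} \in \mathcal{A}$, and the convex-hull construction of Proposition~1 (with its ZF counterpart in Remark~3) delivers $\mathcal{R}^{\rm zf} = \mathcal{C}^{\rm O}$. I expect the sandwich-closing step to be the main technical obstacle, since it must rule out a non-orthogonalizing ${\bf \Theta}$ that trades orthogonality for strictly larger effective channel norms; this is where the careful exploitation of the orthogonalizing-${\bf \Theta}^*$ hypothesis will have to do the real work.
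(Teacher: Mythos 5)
Your first inclusion and your pointwise computation at an orthogonalizing ${\bf \Theta}^*$ are both sound: under orthogonality ${\bf A}_k({\bf \Theta}^*)$ indeed annihilates ${\bf h}_k$, so Proposition~2 collapses to $p^*({\bf \Theta}^*)=\sum_k \sigma^2\gamma_k/\|{\bf h}_k({\bf \Theta}^*)\|^2$, which matched-filter ZF beamformers reproduce exactly. The genuine gap is the step you yourself flag and then do not close: you never establish that ${\bf \Theta}^*$ globally minimizes $p^*({\bf \Theta})$ over the feasible phase-shift set, and the Rayleigh-quotient lower bound $p^*({\bf \Theta})\ge\sum_k\sigma^2\gamma_k/\|{\bf h}_k({\bf \Theta})\|^2$ cannot deliver this --- a non-orthogonalizing ${\bf \Theta}$ with larger effective channel norms has a \emph{smaller} value of that lower bound, and the bound is not tight there, so no comparison with $p^*({\bf \Theta}^*)$ follows. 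As written, your sandwich only yields $\min_{\bf \Theta}p^*({\bf \Theta})\le\min_{\bf \Theta}p^{\rm zf}({\bf \Theta})\le p^*({\bf \Theta}^*)$, which leaves open the possibility $R^*({\bm\alpha})>R^*_{\rm zf}({\bm\alpha})$.

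The paper closes this without ever comparing across different ${\bf \Theta}$, by arguing in the rate domain rather than the power domain: for whatever $({\bf \Theta},\{{\bf S}_k\})$ the DPC optimum uses, each user's rate is upper-bounded by the interference-free rate $\log_2\left(1+p_k\|{\bf h}_k({\bf \Theta})\|^2/\sigma^2\right)$ with $p_k={\rm Tr}({\bf S}_k)$, and under the lemma's orthogonality hypothesis the matched filters ${\bf w}_k\propto{\bf h}_k({\bf \Theta})$ are ZF-feasible at that \emph{same} ${\bf \Theta}$ and attain this bound with equality; hence every DPC-achievable rate tuple is ZF-achievable and $R^*({\bm\alpha})\le R^*_{\rm zf}({\bm\alpha})$ directly. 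If you want to salvage your power-domain route, the cleanest repair is to read the hypothesis as holding for every feasible ${\bf \Theta}$ (consistent with how the complementary condition in Proposition~3 is quantified, $\forall\,[{\bf \Theta}]_{n,n}\in{\cal F}$); then your pointwise identity $p^*({\bf \Theta})=p^{\rm zf}({\bf \Theta})$ holds everywhere and the two minima coincide with no sandwich needed. Either way, the work you defer to the orthogonalizing hypothesis must be done by applying it at the DPC-optimal configuration, not by a global-optimality claim for one particular ${\bf \Theta}^*$.
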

\begin{proof}
To show ${{\cal R}^{{\rm{zf}}}} = {{\cal C}^{\rm{O}}}$, we only need to prove $R_{{\rm{zf}}}^*\left( {\bm{\alpha }} \right) = {R^*}\left( {\bm{\alpha }} \right)$. It is obvious that $R_{{\rm{zf}}}^*\left( {\bm{\alpha }} \right) \le {R^*}\left( {\bm{\alpha }} \right)$ holds naturally since DPC is capacity-achieving. Then, it can be shown that ${R^*}\left( {\bm{\alpha }} \right) \le R_{{\rm{zf}}}^*\left( {\bm{\alpha }} \right)$ under the case of ${\rho _{k,m}} = 0,\forall k \ne m$. By removing the inter-user interference, the sum-rate of the DPC scheme under arbitrarily given power allocation, i.e., $\left\{ {{p_k}} \right\}$, is upper bounded by
\begin{align}\label{DPC_upperbound}
{R^*}\left( {\bm{\alpha }} \right) \le \sum\limits_{k = 1}^K {{{\log }_2}\left( {1 + \frac{{{p_k}{{\left\| {{{\bf{h}}_k}\left( {\bf{\Theta }} \right)} \right\|}^2}}}{{{\sigma ^2}}}} \right)}.
\end{align}
By setting ${\bf{w}}_k^* = {p_k}{{\bf{h}}_k}\left( {\bf{\Theta }} \right)/\left\| {{{\bf{h}}_k}\left( {\bf{\Theta }} \right)} \right\|$, it can be readily verified that the RHS of \eqref{DPC_upperbound} can be achieved by ZF-based scheme since ${\left| {{{\bf{h}}_m}{\bf{w}}_k^*} \right|^2} = 0$ for $k \ne m$ and ${\left| {{{\bf{h}}_k}{\bf{w}}_k^*} \right|^2} = {\left\| {{{\bf{h}}_k}\left( {\bf{\Theta }} \right)} \right\|^2}$. Hence, we have ${R^*}\left( {\bm{\alpha }} \right) \le R_{{\rm{zf}}}^*\left( {\bm{\alpha }} \right)$ in this case. Given $R_{{\rm{zf}}}^*\left( {\bm{\alpha }} \right) \le {R^*}\left( {\bm{\alpha }} \right)$ and ${R^*}\left( {\bm{\alpha }} \right) \le R_{{\rm{zf}}}^*\left( {\bm{\alpha }} \right)$, we have ${R^*}\left( {\bm{\alpha }} \right) = R_{{\rm{zf}}}^*\left( {\bm{\alpha }} \right)$.
\end{proof}

Lemma 1 indicates that under the specific condition of ${\rho _{k,m}} = 0,\forall k \ne m$, i.e., the effective channels of all users are orthogonal, the achievable rate region of ZF is the same as the capacity region of DPC. For the conventional massive MIMO system, it has been shown that the favourable orthogonal channel condition can be satisfied as the number of antennas at the BS tends to be infinite, i.e., $M \to \infty$, which however, is rarely satisfied for small $M$. In the considered IRS aided MISO broadcast channel, it is interesting to see whether the orthogonal channels can be obtained or not by employing a large number of IRS elements instead of transmit antennas at the BS. Under the condition of randomly given IRS phase-shifts, we have the following lemma.

\begin{lem}
Assume that ${{\bf{h}}_{r,k}}$'s, ${\bf{G}}$ are independent and identically distributed variables with ${{\bf{h}}_{r,k}} \sim  {\cal C}{\cal N}\left( {0,{\bf{I}}} \right)$,$\forall k \in {\cal K}$, and ${\bf{G}} \sim {\cal C}{\cal N}\left( {0,{\bf{I}}} \right)$. Under randomly given ${\bf{\Theta }}$, we have
\begin{align}\label{correlation1_mean}
\frac{{{\mathop{\mathbb{E}}\nolimits} \left\{ {{{\left| {{\bf{h}}_{r,k}^H{\bf{\Theta G}}{{\bf{G}}^H}{{\bf{\Theta }}^H}{{\bf{h}}_{r,m}}} \right|}^2}} \right\}}}{{{\mathop{\mathbb{E}}\nolimits} \left\{ {{{\left\| {{\bf{h}}_{r,k}^H{\bf{\Theta G}}} \right\|}^2}} \right\}{\mathop{\mathbb{E}}\nolimits} \left\{ {{{\left\| {{\bf{h}}_{r,m}^H{\bf{\Theta G}}} \right\|}^2}} \right\}}} \ge \frac{1}{M},\forall k \ne m.
\end{align}
\end{lem}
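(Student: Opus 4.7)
The plan is to exploit the unitary invariance of complex Gaussian vectors under ${\bf\Theta}$: after absorbing the IRS phases into the reflected channels, the left-hand side of (\ref{correlation1_mean}) reduces to a universal moment identity for a complex Wishart matrix, which admits an elementary closed-form evaluation and in particular does not depend on ${\bf\Theta}$ at all.

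First I would define $\tilde{\bf h}_k={\bf\Theta}^H{\bf h}_{r,k}$ for each $k\in{\cal K}$. Because ${\bf\Theta}$ is diagonal with unit-modulus entries and ${\bf h}_{r,k}\sim{\cal CN}(0,{\bf I}_N)$, conditionally on ${\bf\Theta}$ we still have $\tilde{\bf h}_k\sim{\cal CN}(0,{\bf I}_N)$, with $\tilde{\bf h}_k$ and $\tilde{\bf h}_m$ mutually independent for $k\ne m$ and independent of ${\bf G}$. Using ${\bf h}_{r,k}^H{\bf\Theta G}=\tilde{\bf h}_k^H{\bf G}$, the left-hand side of (\ref{correlation1_mean}) becomes
\begin{equation*}
\frac{{\mathbb E}\{|\tilde{\bf h}_k^H {\bf G}{\bf G}^H \tilde{\bf h}_m|^2\}}{{\mathbb E}\{\|\tilde{\bf h}_k^H {\bf G}\|^2\}\,{\mathbb E}\{\|\tilde{\bf h}_m^H {\bf G}\|^2\}},
\end{equation*}
so the inequality is to be proved with ${\bf\Theta}$ no longer appearing.

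For the denominator, conditioning on ${\bf G}$ and using ${\mathbb E}\{\tilde{\bf h}_k\tilde{\bf h}_k^H\}={\bf I}_N$ gives ${\mathbb E}\{\|\tilde{\bf h}_k^H{\bf G}\|^2\mid{\bf G}\}={\rm tr}({\bf G}^H{\bf G})$, whose outer expectation over the $NM$ i.i.d. ${\cal CN}(0,1)$ entries of ${\bf G}$ equals $NM$; the denominator therefore equals $(NM)^2$. For the numerator, writing ${\bf A}={\bf G}{\bf G}^H$ and applying the standard Gaussian quadratic-form identity first over $\tilde{\bf h}_k$ (with ${\bf A},\tilde{\bf h}_m$ fixed) and then over $\tilde{\bf h}_m$ yields ${\mathbb E}\{|\tilde{\bf h}_k^H{\bf A}\tilde{\bf h}_m|^2\mid{\bf A}\}={\rm tr}({\bf A}^2)={\rm tr}(({\bf G}^H{\bf G})^2)$.

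The one calculation that requires genuine attention, and which I expect to be the main obstacle, is then evaluating ${\mathbb E}\{{\rm tr}(({\bf G}^H{\bf G})^2)\}=\sum_{i,j}{\mathbb E}\{|({\bf G}^H{\bf G})_{ij}|^2\}$. Splitting the sum into the $M$ diagonal index pairs and the $M(M-1)$ off-diagonal ones, and using ${\mathbb E}\{|g|^4\}=2$ for $g\sim{\cal CN}(0,1)$ together with the independence of the entries of ${\bf G}$, the diagonal terms each contribute $N(N+1)$ and the off-diagonal ones each contribute $N$, yielding $MN(M+N)$ in total. Combining the pieces, the ratio equals $\frac{MN(M+N)}{(NM)^2}=\frac{1}{M}+\frac{1}{N}\ge\frac{1}{M}$, which is precisely (\ref{correlation1_mean}); the entire argument is a moment computation, and only the fourth-moment bookkeeping for the Wishart trace demands any real care.
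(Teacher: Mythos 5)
Your proof is correct, but it takes a genuinely different route from the paper's. The paper conditions on ${\bf G}$: writing ${\bf A}={\bf \Theta G}{\bf G}^H{\bf\Theta}^H$, it takes the expectation only over ${\bf h}_{r,k}$ and ${\bf h}_{r,m}$, reduces the ratio to $\sum_{m=1}^M\lambda_{A,m}^2\big/\bigl(\sum_{m=1}^M\lambda_{A,m}\bigr)^2$ where the $\lambda_{A,m}$ are the $M$ nonzero eigenvalues of ${\bf A}$, and invokes Cauchy--Schwarz to get the bound $1/M$ for \emph{every} realization of ${\bf G}$. You instead absorb ${\bf\Theta}$ by unitary invariance and average over ${\bf G}$ as well, turning the numerator into the Wishart moment ${\mathbb E}\{{\rm tr}(({\bf G}^H{\bf G})^2)\}=MN(M+N)$ and obtaining the exact value $1/M+1/N$. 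Each approach buys something: the paper's argument is shorter and establishes the stronger pointwise (per-realization) bound, whereas yours is more faithful to the lemma's literal statement (in which ${\bf G}$ is explicitly random and sits inside the expectations --- the paper silently pulls ${\bf A}$ outside them) and, by producing the closed form $1/M+1/N$, shows that the bound is asymptotically tight as $N\to\infty$, which is exactly the saturation behaviour visible in the paper's numerical example. Your fourth-moment bookkeeping (${\mathbb E}[W_{ii}^2]=N(N+1)$ on the diagonal, ${\mathbb E}[|W_{ij}|^2]=N$ off the diagonal) checks out.
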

\begin{proof}
Please refer to Appendix B.
\end{proof}

In the scenario that the direct links are blocked, Lemma 2 demonstrates that the channel correlation of the two users with random IRS phase-shifts is always lower bounded by a specific value, which depends only on the number of antennas at the BS rather than the number of IRS elements. However, this result is under the condition that the IRS phase-shifts are not properly optimized, which means that the potential capability of the IRS in reducing the channel correlation is not fully unleashed. Thus, it is not clear whether the square of the channel correlation can be further reduced to a value which is far less than $1/M$ by carefully designing the IRS phase-shifts. To this end, we consider the optimization problem to minimize the maximum correlation coefficient among users as follows.
\begin{align}\label{correlation_minimization}
\mathop {\min }\limits_{\bf{\Theta }} \mathop {\max }\limits_{k,m \in {\cal K},k \ne m} \frac{{{{\left| {\left( {{\bf{h}}_{d,k}^H + {\bf{h}}_{r,k}^H{\bf{\Theta G}}} \right){{\left( {{\bf{h}}_{d,m}^H + {\bf{h}}_{r,m}^H{\bf{\Theta G}}} \right)}^H}} \right|}^2}}}{{{{\left\| {{\bf{h}}_{d,k}^H + {\bf{h}}_{r,k}^H{\bf{\Theta G}}} \right\|}^2}{{\left\| {{\bf{h}}_{d,m}^H + {\bf{h}}_{r,m}^H{\bf{\Theta G}}} \right\|}^2}}} \;\;\;\;{\rm{s}}.{\rm{t}}.\;\;{\left[ {\bf{\Theta }} \right]_{n,n}} \in {\cal F},~\forall n.
\end{align}
The local-optimal solution of problem \eqref{correlation_minimization} can be efficiently obtained by using the element-wise alternating method introduced in the previous section.

{\bf{Numerical Example:}} To evaluate the capability of the IRS in reducing the channel correlations among users, the following numerical example is provided. We consider the typical setup that direct links between the BS and users are blocked. Additionally, we set $M=4$, $K=2$, $\bar N =1$ and all the channels follow Rayleigh distribution. Since $\bar N =1$, $N$ represents the number of IRS elements in this case.

\begin{figure}[t!]
\centering
\includegraphics[width = 3in]{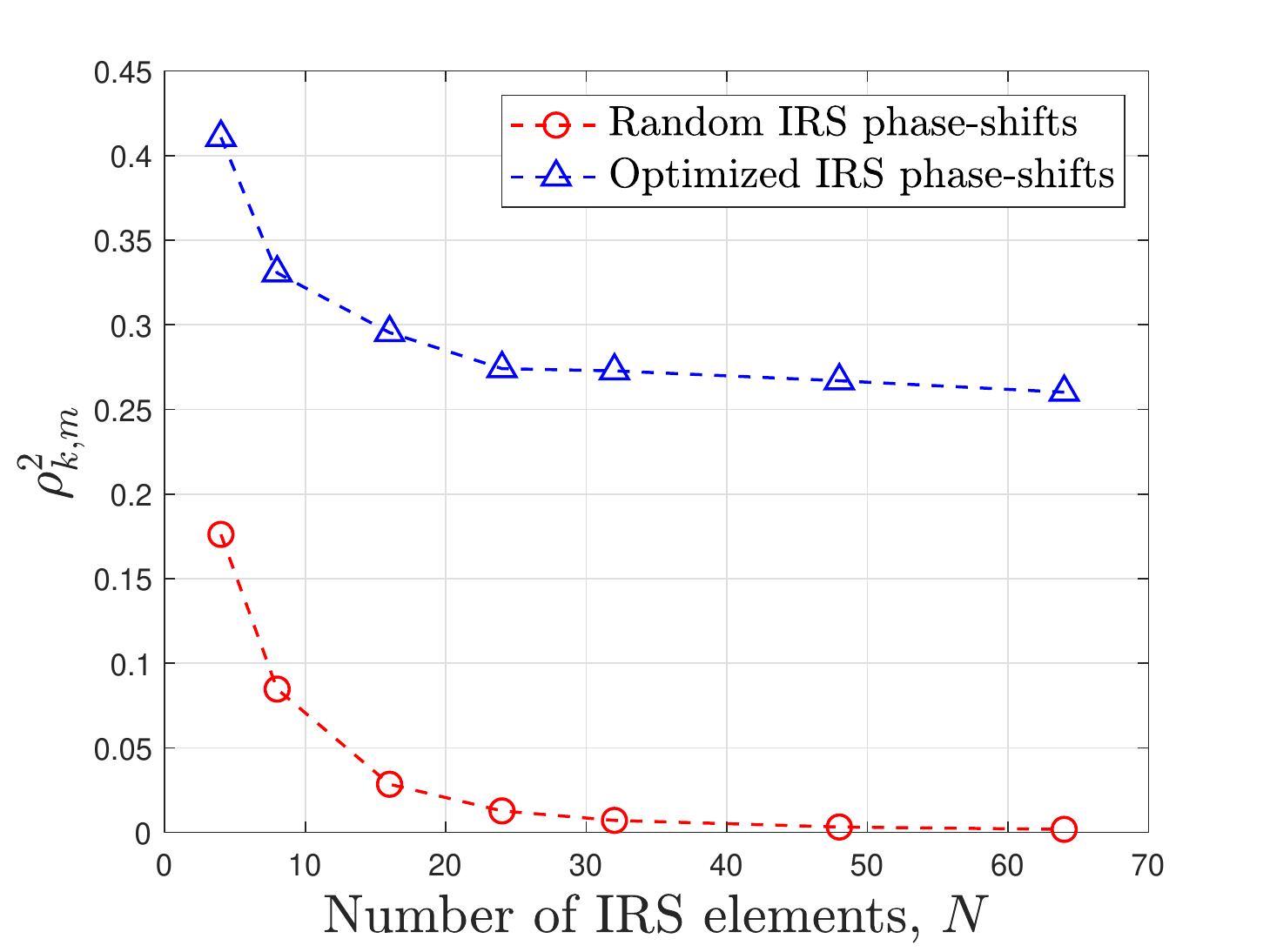}
\caption{{The square of channel correlation coefficient versus $N$.}}
\label{orthogonality_eva}
\vspace{-16pt}
\end{figure}

In Fig. \ref{orthogonality_eva}, we plot the square of channel correlation coefficient, i.e., $\rho _{k,m}^2$, versus the number of IRS reflecting elements. In the case of random IRS phase-shifts, it is observed that $\rho _{k,m}^2$ is lower bounded by $1/M = 1/4$ as $N$ increases, which is consistent with the analytical result in Lemma 2. In contrast, under the case of optimized IRS phase-shifts, $\rho _{k,m}^2$ is observed to significantly decrease with $N$ as expected. In particular, $\rho _{k,m}^2$ is capable of approaching zero (less than ${10^{ - 3}}$) when $N>60$, which demonstrates the potential capability of the IRS in achieving the favourable propagation condition.

\begin{rem}
The rate loss of ZF compared to DPC highly depends on the channel correlation coefficient among users. Under the setup of the two-user case, when ${\left\| {{{\bf{h}}_k}\left( {\bf{\Theta }} \right)} \right\|^2}/{\sigma ^2} \gg 1$, the corresponding rate loss for user $k$ can be expressed as
\begin{align}\label{loss}
R_k^{{\rm{DPC}}} - R_k^{ZF} &\le {\log _2}\left( {1 + \frac{{{p_k}{{\left\| {{{\bf{h}}_k}\left( {\bf{\Theta }} \right)} \right\|}^2}}}{{{\sigma ^2}}}} \right) - {\log _2}\left( {1 + \frac{{{p_k}\left( {1 - \rho _{k,m}^2} \right){{\left\| {{{\bf{h}}_k}\left( {\bf{\Theta }} \right)} \right\|}^2}}}{{{\sigma ^2}}}} \right)\nonumber\\
& \cong  - {\log _2}\left( {1 - \rho _{k,m}^2} \right),
\end{align}
where $R_k^{{\rm{DPC}}}$ and $R_k^{{\rm{ZF}}}$ denote the rate of user $k$ under the scheme of DPC and ZF, respectively. Note that the corresponding rate loss is less than $1.4 \times {10^{ - 3}}$ (bps/Hz) under the condition that $\rho _{k,m}^2 \le {10^{ - 3}}$, which is negligible in practical wireless systems.
\end{rem}

\subsection{Is Dynamic BS/IRS BF Configuration Needed?}
In this subsection, we investigate whether dynamic beamforming configurations are beneficial for improving the achievable rate. Regarding the ZF based transmission scheme, we provide the following proposition to unveil the effectiveness of dynamic beamforming configurations\footnote{Regarding the IRS aided DPC, numerical results are provided in Section V under different channel setups to shed light on whether dynamic beamforming can further enlarge its capacity region or not.}. In the setup of the two-user case, the Pareto boundaries of the rate pairs achieved by static and dynamic beamforming under the ZF-based transmission scheme are denoted by ${\cal R}_{{\rm{ZF}}}^{\rm{s}}$ and ${\cal R}_{{\rm{ZF}}}^{\rm{d}}$, respectively. Then, we have the following proposition.
\begin{pos}
If $\left| {{\bf{h}}_1^H\left( {\bf{\Theta }} \right){{\bf{h}}_2}\left( {\bf{\Theta }} \right)} \right|/\left( {{{\left\| {{{\bf{h}}_1}\left( {\bf{\Theta }} \right)} \right\|}}{{\left\| {{{\bf{h}}_2}\left( {\bf{\Theta }} \right)} \right\|}}} \right) > 0$, $\forall {\left[ {\bf{\Theta }} \right]_{n,n}} \in {\cal F}$, there always exist rate pairs $\left( {\bar r_1^{{\rm{ZF}},s},\bar r_2^{{\rm{ZF}},s}} \right) \in {\cal R}_{{\rm{ZF}}}^{\rm{s}}$ and $\left( {\bar r_1^{{\rm{ZF}},{\rm{d}}},\bar r_2^{{\rm{ZF}},{\rm{d}}}} \right) \in {\cal R}_{{\rm{ZF}}}^{\rm{d}}$, which satisfy
\begin{align}\label{rate_pair_comparison}
\left( {\bar r_1^{{\rm{ZF,d}}},\bar r_2^{{\rm{ZF,d}}}} \right) \succ \left( {\bar r_1^{{\rm{ZF}},s},\bar r_2^{{\rm{ZF}},s}} \right).
\end{align}
\end{pos}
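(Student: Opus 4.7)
The plan is to prove that the static-ZF region ${\cal R}_{{\rm ZF}}^{\rm s}$ is strictly non-convex whenever $|{\bf h}_1^H({\bf \Theta}){\bf h}_2({\bf \Theta})|/(\|{\bf h}_1({\bf \Theta})\|\|{\bf h}_2({\bf \Theta})\|)>0$ for every feasible ${\bf \Theta}$, so that ${\cal R}_{{\rm ZF}}^{\rm d}$, which contains the convex hull of ${\cal R}_{{\rm ZF}}^{\rm s}$ through time-sharing, must include a rate pair that strictly Pareto-dominates some rate pair on the static Pareto boundary. I would first decompose ${\cal R}_{{\rm ZF}}^{\rm s}=\bigcup_{{\bf \Theta}}{\cal R}_{{\rm ZF}}^{\rm s}({\bf \Theta})$, where each per-configuration sub-region is convex and bounded by the concave curve $r_k=\log_2(1+p_k A_k({\bf \Theta}))$ in the power split $p_1+p_2=P_{\max}$ with $A_k({\bf \Theta})=(1-\rho_{1,2}^2({\bf \Theta}))\|{\bf h}_k({\bf \Theta})\|^2/\sigma^2$. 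The overall static Pareto boundary is then the upper envelope of these curves over all feasible ${\bf \Theta}$.

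Next I would choose two distinct configurations ${\bf \Theta}^{(1)}\in\arg\max_{{\bf \Theta}}A_1({\bf \Theta})$ and ${\bf \Theta}^{(2)}\in\arg\max_{{\bf \Theta}}A_2({\bf \Theta})$. Under the hypothesis the strict ZF penalty $(1-\rho_{1,2}^2({\bf \Theta}))<1$ prevents any single configuration from simultaneously maximizing both effective channel gains, so that ${\bf \Theta}^{(1)}\neq{\bf \Theta}^{(2)}$ with $A_2({\bf \Theta}^{(1)})<A_2({\bf \Theta}^{(2)})$ and $A_1({\bf \Theta}^{(2)})<A_1({\bf \Theta}^{(1)})$. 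I would then pick rate pairs $P_1\in{\cal R}_{{\rm ZF}}^{\rm s}({\bf \Theta}^{(1)})$ and $P_2\in{\cal R}_{{\rm ZF}}^{\rm s}({\bf \Theta}^{(2)})$ whose underlying power splits straddle the switching rate pair where the static upper envelope transitions from the ${\bf \Theta}^{(2)}$-curve to the ${\bf \Theta}^{(1)}$-curve. A two-slot dynamic scheme that uses $({\bf \Theta}^{(1)},p_1^{(1)},p_2^{(1)})$ for time fraction $\tau$ and $({\bf \Theta}^{(2)},p_1^{(2)},p_2^{(2)})$ for fraction $1-\tau$ achieves every chord point $\tau P_1+(1-\tau)P_2\in{\cal R}_{{\rm ZF}}^{\rm d}$.

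Strict dominance then follows from a slope-discontinuity argument at the switch. The tangent slope of a single-${\bf \Theta}$ curve is $dr_2/dr_1=-A_2({\bf \Theta})(1+p_1 A_1({\bf \Theta}))/A_1({\bf \Theta})$, so at the switching rate pair the slope approached along the ${\bf \Theta}^{(1)}$-curve is strictly less negative than that approached along the ${\bf \Theta}^{(2)}$-curve, producing a kink. This kink is a local violation of concavity of the static upper envelope, so the chord between $P_1$ and $P_2$ lies strictly above the envelope on an open neighborhood of the switch. Taking the static Pareto point $(\bar r_1^{{\rm ZF},s},\bar r_2^{{\rm ZF},s})$ at the switching rate ratio and the dynamic chord point $(\bar r_1^{{\rm ZF},d},\bar r_2^{{\rm ZF},d})$ at the same ${\bm{\alpha}}$ yields $(\bar r_1^{{\rm ZF},d},\bar r_2^{{\rm ZF},d})\succ(\bar r_1^{{\rm ZF},s},\bar r_2^{{\rm ZF},s})$, establishing the claim.

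The hard part will be rigorously establishing that ${\bf \Theta}^{(1)}\neq{\bf \Theta}^{(2)}$ and that a genuine switching point exists on the discrete feasibility set ${\cal F}^N$. The hypothesis $\rho_{1,2}({\bf \Theta})>0$ is essential for this step, because if some ${\bf \Theta}^*$ made the effective channels orthogonal, then by Lemma~1 that single configuration would already match the DPC capacity region and Pareto-dominate all others, leaving no room for dynamic beamforming to strictly improve. The formal argument must therefore rule out the degenerate possibility that a single ${\bf \Theta}^*$ simultaneously maximizes both $A_1$ and $A_2$, which calls for a careful case analysis of the interplay between $\|{\bf h}_k({\bf \Theta})\|^2$ maximization and the ZF penalty $(1-\rho_{1,2}^2({\bf \Theta}))$ across ${\cal F}^N$.
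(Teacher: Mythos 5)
Your proposal has a genuine gap, and it is located exactly where you flag "the hard part." The claim that the hypothesis $\rho_{1,2}({\bf \Theta})>0$ forces ${\bf \Theta}^{(1)}\neq{\bf \Theta}^{(2)}$ is not true: the hypothesis only excludes orthogonal effective channels, it says nothing about whether a single configuration can simultaneously maximize $A_1({\bf \Theta})$ and $A_2({\bf \Theta})$. If one ${\bf \Theta}^*$ dominates (which is entirely consistent with $\rho>0$ everywhere, and is automatic when the feasible set is small), your upper envelope over interior power splits is a single concave curve, there is no switching point and no kink, and your entire strict-dominance mechanism evaporates. Even when ${\bf \Theta}^{(1)}\neq{\bf \Theta}^{(2)}$, you would still need to rule out that one curve dominates the other everywhere (nested curves give a concave envelope with no chord gain). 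Yet the proposition remains true in all these cases, so your argument cannot be the right one.

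The idea you are missing is that the static ZF region is not just the union of the convex per-configuration regions $\bigl\{\,r_k\le\log_2(1+p_kA_k({\bf \Theta}))\,\bigr\}$: it also contains the single-user corner points $\left( {0,R_2^{\max }} \right)$ and $\left( {R_1^{\max },0} \right)$, where one beamformer is set to zero, the ZF null-space constraint becomes vacuous, and the full gain $\left\| {{{\bf{h}}_2}\left( {{{\bf{\Theta }}^*}} \right)} \right\|^2$ is collected without the factor $\left( {1 - \rho ^2} \right)$. The paper's proof exploits precisely this: the hypothesis guarantees a strict, uniform gap between the corner rate $\hat r_2^{\max }$ and the supremum $r_2^{{\rm{s,ub}}}$ of user 2's rate over all static points at which user 1 gets positive rate, because every feasible ${\bf \Theta}$ pays a penalty $1-\rho^2({\bf \Theta})<1$. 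Time-sharing an arbitrary interior static Pareto point with the corner configuration then produces a point with the same $r_1$ and strictly larger $r_2$ than the static boundary, which is the asserted strict dominance. Your decomposition drops the corner points, so it cannot see the one non-convexity that the hypothesis actually certifies; to repair the proof you should compare interior static points against the corner point rather than against each other.
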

\begin{proof}
For the given rate profile vector ${\bm{\alpha }} = {\left[ {0,1} \right]^T}$, the corresponding rate pair is achieved by maximizing ${\left\| {{\bf{h}}_{d,2}^H + {\bf{h}}_{r,2}^H{\bf{\Theta G}}} \right\|^2}$ and its associated optimal IRS phase-shift matrix is denoted by ${{\bf{\Theta }}^*}$. By employing ${{\bf{w}}_1} = {\bf{0}}$, ${{\bf{w}}_2} = \sqrt {{P_{\max }}} {{\bf{h}}_2}\left( {{{\bf{\Theta }}^*}} \right)/\left\| {{{\bf{h}}_2}\left( {{{\bf{\Theta }}^*}} \right)} \right\|$, and ${\bf{\Theta }} = {{\bf{\Theta }}^*}$, the corresponding rate pair, denoted by $\left( {0,\hat r_2^{\max }} \right)$, can be obtained, where $\hat r_2^{\max } = {\log _2}\left( {1 + {{{P_{\max }}{{\left\| {{{\bf{h}}_2}\left( {{{\bf{\Theta }}^*}} \right)} \right\|}^2}} \mathord{\left/
 {\vphantom {{{P_{\max }}{{\left\| {{{\bf{h}}_2}\left( {{{\bf{\Theta }}^*}} \right)} \right\|}^2}} {{\sigma ^2}}}} \right.
 \kern-\nulldelimiterspace} {{\sigma ^2}}}} \right)$. For any rate profile vector ${\bm{\alpha }}$ with $0 < {\bm{\alpha }}\left[ 1 \right] < 1$, the achievable rate of user 2 under static beamforming  configuration is upper-bounded by
\begin{align}\label{rate2_upperbound}
r_2^{\rm{s}} &= {\log _2}\left( {1 + \frac{{{{\left\| {{\bf{h}}_2^H\left( {\bf{\Theta }} \right){{\bf{w}}_1}} \right\|}^2}}}{{{\sigma ^2}}}} \right)\nonumber\\
& < {\log _2}\left( {1 + \frac{{{P_{\max }}{{\left\| {{{\bf{h}}_2}\left( {{{\bf{\Theta }}^*}} \right)} \right\|}^2}}}{{{\sigma ^2}}}\left( {1 - \frac{{{{\left| {{\bf{h}}_1^H\left( {{\bf{\tilde \Theta }}} \right){{\bf{h}}_2}\left( {{\bf{\tilde \Theta }}} \right)} \right|}^2}}}{{{{\left\| {{{\bf{h}}_1}\left( {{\bf{\tilde \Theta }}} \right)} \right\|}^2}{{\left\| {{{\bf{h}}_2}\left( {{\bf{\tilde \Theta }}} \right)} \right\|}^2}}}} \right)} \right) \buildrel \Delta \over = r_2^{{\rm{s,ub}}},
\end{align}
where ${{\bf{\tilde \Theta }}}$ is the optimal phase-shift for minimizing $\left| {{\bf{h}}_1^H\left( {\bf{\Theta }} \right){{\bf{h}}_2}\left( {\bf{\Theta }} \right)} \right|/\left( {\left\| {{{\bf{h}}_1}\left( {\bf{\Theta }} \right)} \right\|\left\| {{{\bf{h}}_2}\left( {\bf{\Theta }} \right)} \right\|} \right)$. Since $\left| {{\bf{h}}_1^H\left( {\bf{\Theta }} \right){{\bf{h}}_2}\left( {\bf{\Theta }} \right)} \right|/\left( {{{\left\| {{{\bf{h}}_1}\left( {\bf{\Theta }} \right)} \right\|}}{{\left\| {{{\bf{h}}_2}\left( {\bf{\Theta }} \right)} \right\|}}} \right) > 0$, $\forall {\left[ {\bf{\Theta }} \right]_{n,n}} \in {\cal F}$, we have $r_2^{{\rm{s,ub}}} < \hat r_2^{\max }$. For arbitrarily chosen rate pair $\left( {r_1^A,r_2^A} \right) \in {\cal R}_{{\rm{ZF}}}^{\rm{s}}$, its associated BS/IRS beamforming configuration is denoted by $\left\{ {{\bf{w}}_1^A,{\bf{w}}_2^A,{{\bf{\Theta }}^A}} \right\}$. Note that $0 < r_2^A < r_2^{{\rm{s,ub}}}$ and $r_1^A > 0$, which thus leading to $0 < \left( {r_2^{\max } - r_2^{{\rm{s,ub}}}} \right)/\left( {r_2^{\max } - r_2^A} \right) < 1$. Then, by allocating $\left( {r_2^{\max } - r_2^{{\rm{s,ub}}}} \right)T/\left( {r_2^{\max } - r_2^A} \right)$ for $\left\{ {{\bf{w}}_1^A,{\bf{w}}_2^A,{{\bf{\Theta }}^A}} \right\}$ and $\left( {r_2^{{\rm{s,ub}}} - r_2^A} \right)T/\left( {r_2^{\max } - r_2^A} \right)$ for $\left\{ {{\bf{0}},\sqrt {{P_{\max }}} {{\bf{h}}_2}\left( {{{\bf{\Theta }}^*}} \right)/\left\| {{{\bf{h}}_2}\left( {{{\bf{\Theta }}^*}} \right)} \right\|,{{\bf{\Theta }}^*}} \right\}$, the rate pair $\left( {\hat r_1^B,\hat r_2^B} \right)$ can be achieved with
\begin{align}\label{rate_pair}
\hat r_1^B = \frac{{r_2^{\max } - r_2^{{\rm{s,ub}}}}}{{r_2^{\max } - r_2^A}}r_1^A,\hat r_2^B = r_2^{{\rm{s,ub}}}.
\end{align}
For the rate pair $\left( {\hat r_1^B,r_2^B} \right) \in {\cal R}_{{\rm{ZF}}}^{\rm{s}}$, $r_2^B < r_2^{{\rm{s,ub}}} = \hat r_2^B$ holds since \eqref{rate2_upperbound}. Hence, we have $\left( {\hat r_1^B,r_2^B} \right) \prec \left( {\hat r_1^B,\hat r_2^B} \right)$. Moreover, it is obvious that there always exists one rate pair $\left( {\bar r_1^{{\rm{ZF}},{\rm{d}}},\bar r_2^{{\rm{ZF}},{\rm{d}}}} \right) \in {\cal R}_{{\rm{ZF}}}^{\rm{d}}$, which satisfies $\left( {\hat r_1^B,\hat r_2^B} \right) \preceq \left( {\bar r_1^{{\rm{ZF,d}}},\bar r_2^{{\rm{ZF,d}}}} \right)$. Let $\left( {\bar r_1^{{\rm{ZF}},s},\bar r_2^{{\rm{ZF}},s}} \right) = \left( {\hat r_1^B,r_2^B} \right)$. Then, it can be obtained that $\left( {\bar r_1^{{\rm{ZF,d}}},\bar r_2^{{\rm{ZF,d}}}} \right) \succ \left( {\bar r_1^{{\rm{ZF}},s},\bar r_2^{{\rm{ZF}},s}} \right)$, which thus completes the proof.
\end{proof}

Proposition 3 reveals the sufficient condition when dynamic beamforming can enlarge the achievable rate region under the ZF-based transmission scheme. Specifically, as long as the feasible IRS phase-shifts cannot achieve fully orthogonal channels for the two users, employing dynamic beamforming is always capable of improving the achievable rate. However, adjusting the IRS phase-shifts and the active beamforming over time also incurs more signalling overhead as well as higher complexity of hardware implementation. For the given IRS phase-shift matrix ${\bf{\Theta }}$, the achievable rate region of the IRS aided ZF scheme with static bemforming configuration is denoted by ${{\cal R}_{{\rm{ZF}}}}\left( {\bf{\Theta }} \right)$. To further shed light on the impact of the IRS on the gain attained by dynamic beamforming, we have the following proposition under the two-user case.

\begin{pos}
${{\cal R}_{{\rm{ZF}}}}\left( {\bf{\Theta }} \right)$ is convex if and only if
\begin{align}\label{ro_two_user}
\rho \left( {\bf{\Theta }} \right) \buildrel \Delta \over = \frac{{\left| {\left( {{\bf{h}}_{d,1}^H + {\bf{h}}_{r,1}^H{\bf{\Theta G}}} \right){{\left( {{\bf{h}}_{d,1}^H + {\bf{h}}_{r,1}^H{\bf{\Theta G}}} \right)}^H}} \right|}}{{\left\| {{\bf{h}}_{d,1}^H + {\bf{h}}_{r,1}^H{\bf{\Theta G}}} \right\|\left\| {{\bf{h}}_{d,2}^H + {\bf{h}}_{r,2}^H{\bf{\Theta G}}} \right\|}} = 0.
\end{align}
\end{pos}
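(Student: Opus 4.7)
The plan is to characterize $\mathcal{R}_{\mathrm{ZF}}(\boldsymbol{\Theta})$ in closed form for $K=2$ and then check convexity directly. For fixed $\boldsymbol{\Theta}$, the ZF constraint forces each active beamformer $\mathbf{w}_k$ into the null space of $\mathbf{h}_m(\boldsymbol{\Theta})$ with $m\neq k$, so aligning $\mathbf{w}_k$ with the projection of $\mathbf{h}_k(\boldsymbol{\Theta})$ onto that null space gives the effective per-unit-power gain $g_k(\boldsymbol{\Theta})=\|\mathbf{h}_k(\boldsymbol{\Theta})\|^2(1-\rho^2(\boldsymbol{\Theta}))$. Substituting $p_k=\sigma^2(2^{R_k}-1)/g_k(\boldsymbol{\Theta})$ into $p_1+p_2\le P_{\max}$ then yields the joint-service sub-region
\begin{equation*}
\mathcal{R}^{(12)}(\boldsymbol{\Theta})=\Bigl\{(R_1,R_2)\ge\mathbf{0}:\;\frac{2^{R_1}-1}{g_1(\boldsymbol{\Theta})}+\frac{2^{R_2}-1}{g_2(\boldsymbol{\Theta})}\le \frac{P_{\max}}{\sigma^2}\Bigr\}.
\end{equation*}
When exactly one of the beamformers vanishes the corresponding ZF equation becomes vacuous, so the active user can aim along its own channel direction; this contributes the single-user endpoints $(R_1^{\max},0)$ and $(0,R_2^{\max})$ with $R_k^{\max}=\log_2(1+P_{\max}\|\mathbf{h}_k(\boldsymbol{\Theta})\|^2/\sigma^2)$, consistent with the convention invoked in the proof of Proposition~3. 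Hence $\mathcal{R}_{\mathrm{ZF}}(\boldsymbol{\Theta})$ is the union of $\mathcal{R}^{(12)}(\boldsymbol{\Theta})$ with the two single-user segments on the rate axes.

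For the sufficient direction I suppose $\rho(\boldsymbol{\Theta})=0$, so $g_k(\boldsymbol{\Theta})=\|\mathbf{h}_k(\boldsymbol{\Theta})\|^2$ and $\sigma^2(2^{R_k^{\max}}-1)/g_k(\boldsymbol{\Theta})=P_{\max}$; the two single-user endpoints already lie on the boundary of $\mathcal{R}^{(12)}(\boldsymbol{\Theta})$, so $\mathcal{R}_{\mathrm{ZF}}(\boldsymbol{\Theta})=\mathcal{R}^{(12)}(\boldsymbol{\Theta})$. This is the sublevel set of the convex function $(R_1,R_2)\mapsto 2^{R_1}/g_1(\boldsymbol{\Theta})+2^{R_2}/g_2(\boldsymbol{\Theta})$, hence convex. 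For the necessary direction I suppose $\rho(\boldsymbol{\Theta})>0$; the same substitution now gives $\sigma^2(2^{R_2^{\max}}-1)/g_2(\boldsymbol{\Theta})=P_{\max}/(1-\rho^2(\boldsymbol{\Theta}))>P_{\max}$, so $(0,R_2^{\max})$ and symmetrically $(R_1^{\max},0)$ sit strictly outside $\mathcal{R}^{(12)}(\boldsymbol{\Theta})$ — the region acquires two single-user spikes protruding along the axes. Taking $\mathbf{p}(\lambda)=\lambda(R_1^{\max},0)+(1-\lambda)(0,R_2^{\max})$ for small $\lambda>0$, both coordinates of $\mathbf{p}(\lambda)$ are strictly positive so it cannot lie on either single-user segment; by continuity of the joint-service functional at $(0,R_2^{\max})$ it still exceeds $P_{\max}/\sigma^2$ at $\mathbf{p}(\lambda)$, so $\mathbf{p}(\lambda)\notin\mathcal{R}^{(12)}(\boldsymbol{\Theta})$ and therefore $\mathbf{p}(\lambda)\notin\mathcal{R}_{\mathrm{ZF}}(\boldsymbol{\Theta})$, witnessing non-convexity.

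The main obstacle is the careful reading of the ZF feasibility set \eqref{feasible2}: the constraint $\mathbf{h}_j^H\mathbf{w}_k=0$ for $j\neq k$ must be treated as vacuous whenever $\mathbf{w}_k=\mathbf{0}$, so that the single-user endpoints truly belong to $\mathcal{R}_{\mathrm{ZF}}(\boldsymbol{\Theta})$; this is precisely the convention already invoked in Proposition~3 when the rate pair $(0,\hat r_2^{\max})$ is declared achievable. Once this is settled, the sufficient half is a sublevel-set observation and the necessary half is witnessed by the explicit convex-combination counterexample above; no further computation beyond the closed-form gain $g_k(\boldsymbol{\Theta})=\|\mathbf{h}_k(\boldsymbol{\Theta})\|^2(1-\rho^2(\boldsymbol{\Theta}))$ is required.
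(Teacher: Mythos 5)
Your proof is correct and follows the same overall route as the paper: decompose $\mathcal{R}_{\rm ZF}(\boldsymbol{\Theta})$ into the joint-service region plus the single-user points on the axes, observe that the effective ZF gain is $\|\mathbf{h}_k(\boldsymbol{\Theta})\|^2(1-\rho^2(\boldsymbol{\Theta}))$, and note that positive correlation pushes $(R_1^{\max},0)$ and $(0,R_2^{\max})$ strictly outside the joint-service region. You differ in two technical steps, both to your advantage. First, for convexity of the joint-service region the paper computes the second derivative of the Pareto boundary $\hat R_2$ as a function of $\hat R_1$ (equation \eqref{second_order_dev}), whereas you recast the region as a sublevel set of the convex function $(R_1,R_2)\mapsto 2^{R_1}/g_1+2^{R_2}/g_2$ intersected with the nonnegative orthant; this is shorter and avoids the derivative bookkeeping. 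Second, for the necessity direction the paper stops at the set relation \eqref{set_relation} and asserts non-convexity, while you exhibit an explicit witness: a convex combination $\mathbf{p}(\lambda)$ of the two axis points with both coordinates strictly positive that, by continuity, still violates the joint-service power constraint for small $\lambda$. This fills in a step the paper leaves implicit. Your remark that the ZF orthogonality constraint must be read as vacuous when $\mathbf{w}_k=\mathbf{0}$ is exactly the convention the paper adopts in \eqref{full_set} and in the proof of Proposition 3, so the decomposition you start from matches the paper's.
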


\begin{figure*}[t!]
\centering
\subfigure[$\rho \left( {\bf{\Theta }} \right) > 0$.]{\label{org_a}
\includegraphics[width= 2.6in, height=2in]{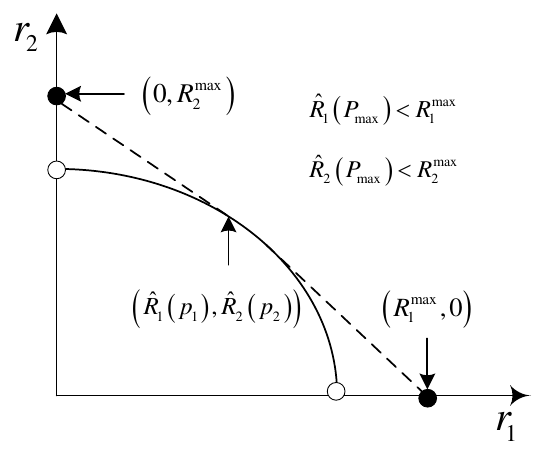}}
\subfigure[$\rho \left( {\bf{\Theta }} \right) = 0$.]{\label{org_b}
\includegraphics[width= 2.6in, height=2in]{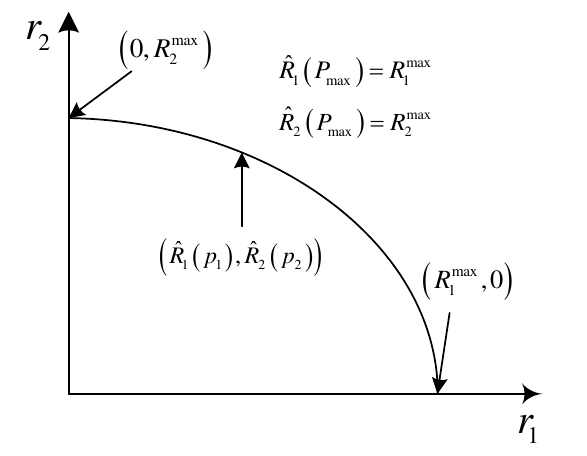}}
\setlength{\abovecaptionskip}{0.4cm}
\caption{{Illustration of Proposition 4.}}\label{fig_cor}
\vspace{-20pt}
\end{figure*}

\begin{proof}
Under the given ${\bf{\Theta }}$, ${{\cal R}_{{\rm{ZF}}}}\left( {\bf{\Theta }} \right)$ denotes the set of all rate-pairs $\left( {{r_1},{r_2}} \right)$ achieved by the ZF-based transmission scheme. Specifically, ${{\cal R}_{{\rm{ZF}}}}\left( {\bf{\Theta }} \right)$ can be expressed as
\begin{align}\label{full_set}
{{\cal R}_{{\rm{ZF}}}}\left( {\bf{\Theta }} \right) = {{\tilde {\cal R}}_{{\rm{ZF}}}}\left( {\bf{\Theta }} \right)\bigcup {\left( {0,R_2^{\max }} \right)} \bigcup {\left( {R_1^{\max },0} \right)},
\end{align}
where $R_k^{\max } = {\log _2}\left( {1 + {P_{\max }}{{\left\| {{{\bf{h}}_k}\left( {\bf{\Theta }} \right)} \right\|}^2}/{\sigma ^2}} \right),k = 1,2$ and ${{\tilde {\cal R}}_{{\rm{ZF}}}}\left( {\bf{\Theta }} \right)$ is characterized by the following inequities:
\begin{align}\label{inequiality_cha}
&0 \le {r_1} \le {{\hat R}_1}\left( {{p_1}} \right) \buildrel \Delta \over = {\log _2}\left( {1 + \frac{{{p_1}{{\left\| {{{\bf{h}}_1}\left( {\bf{\Theta }} \right)} \right\|}^2}}}{{{\sigma ^2}}}\left( {1 - {\rho ^2}\left( {\bf{\Theta }} \right)} \right)} \right), \\
&0 \le {r_2} \le {{\hat R}_2}\left( {{p_2}} \right) \buildrel \Delta \over = {\log _2}\left( {1 + \frac{{{p_2}{{\left\| {{{\bf{h}}_2}\left( {\bf{\Theta }} \right)} \right\|}^2}}}{{{\sigma ^2}}}\left( {1 - {\rho ^2}\left( {\bf{\Theta }} \right)} \right)} \right), \\
&{p_1} + {p_2} \le {P_{\max }}.
\end{align}
Note that all the points, i.e., $\left( {{{\hat R}_1}\left( {{p_1}} \right),{{\hat R}_2}\left( {{p_2}} \right)} \right)$, on the Pareto boundary can be obtained by varying ${p_1}$ or ${p_2}$ under the constraint that ${p_1} + {p_2} = {P_{\max }}$. Hence, ${{{\hat R}_2}\left( {{p_2}} \right)}$ is a decreasing function of ${{{\hat R}_1}\left( {{p_1}} \right)}$. By taking the second order derivative of ${{{\hat R}_2}}$ with respect to ${{{\hat R}_1}}$, we have
\begin{align}\label{second_order_dev}
\frac{{{\partial ^2}{{\hat R}_2}\left( {{p_2}} \right)}}{{{\partial ^2}{{\hat R}_1}\left( {{p_1}} \right)}} &= \left( {\partial \left( {\frac{{\partial {{\hat R}_2}\left( {{p_2}} \right)}}{{\partial {{\hat R}_1}\left( {{p_1}} \right)}}} \right)/\partial {p_1}} \right)\left( {\partial {p_1}/\partial {{\hat R}_1}\left( {{p_1}} \right)} \right) \nonumber\\
& =  - \frac{{{\Gamma _2}\left( {\bf{\Theta }} \right)\left( {{\Gamma _1}\left( {\bf{\Theta }} \right) + {\Gamma _2}\left( {\bf{\Theta }} \right) + {P_{\max }}{\Gamma _1}\left( {\bf{\Theta }} \right){\Gamma _2}\left( {\bf{\Theta }} \right)} \right)}}{{{{\left( {\left( {1 + {p_2}{\Gamma _2}\left( {\bf{\Theta }} \right)} \right){\Gamma _1}\left( {\bf{\Theta }} \right)} \right)}^2}}}\left( {1 + {p_1}{\Gamma _1}\left( {\bf{\Theta }} \right)} \right) < 0,
\end{align}
where ${\Gamma _k}\left( {\bf{\Theta }} \right) = {\left\| {{{\bf{h}}_k}\left( {\bf{\Theta }} \right)} \right\|^2}\left( {1 - {\rho ^2}\left( {\bf{\Theta }} \right)} \right)/{\sigma ^2},k = 1,2$. Based on \eqref{second_order_dev}, ${{{\hat R}_2}}$ is a concave function with respect to ${{{\hat R}_1}}$. Thus, ${{\tilde {\cal R}}_{{\rm{ZF}}}}\left( {\bf{\Theta }} \right)$ is convex. It can be easily verified that ${{\cal R}_{{\rm{ZF}}}}\left( {\bf{\Theta }} \right) = {{\tilde {\cal R}}_{{\rm{ZF}}}}\left( {\bf{\Theta }} \right)$ under the condition that \eqref{ro_two_user} is satisfied. As such, it follows that ${{\cal R}_{{\rm{ZF}}}}\left( {\bf{\Theta }} \right)$ is a convex set if \eqref{ro_two_user} holds. On the other hand, if \eqref{ro_two_user} is not satisfied, ${\log _2}\left( {1 + {P_{\max }}{{\left\| {{{\bf{h}}_k}\left( {\bf{\Theta }} \right)} \right\|}^2}/{\sigma ^2}} \right) > {{\hat R}_k}\left( {{p_k}} \right),k = 1,2$ always holds, which leads to
\begin{align}\label{set_relation}
\left( {\left( {R_1^{\max },0} \right)\bigcup {\left( {0,R_2^{\max }} \right)} } \right) \notin {{\tilde {\cal R}}_{{\rm{ZF}}}}\left( {\bf{\Theta }} \right).
\end{align}
Based on \eqref{full_set} and \eqref{set_relation}, ${{\cal R}_{{\rm{ZF}}}}\left( {\bf{\Theta }} \right)$ is non-convex if $\rho \left( {\bf{\Theta }} \right) > 0$. Thus, we complete the proof.
\end{proof}

As illustrated in Fig. \ref{org_a}, the achievable rate region achieved by employing static beamforming is a non-convex set if the channels of users are not orthogonal. Thus, the convex hull operation is needed to further enlarge its achievable rate region. Specifically, the rate pairs on the dash line in Fig. \ref{org_a} can be obtained by using dynamic beamforming. In contrast, if the channels of users are forced to be orthogonal by the IRS, the corresponding rate region is convex and thus the convex hull operation is not needed, which suggests that dynamic beamforming cannot provide performance gains. Moreover, the gain attained by dynamic beamforming becomes marginal as $\Delta {R_k} \buildrel \Delta \over = R_k^{\max } - {{\hat R}_k}\left( {{P_{\max }}} \right)$ decreases. It is worth noting that $\Delta {R_k}$ can be significantly reduced by employing the IRS to degrade $\rho \left( {\bf{\Theta }} \right)$. Therefore, the demand for implementing dynamic beamforming can be effectively reduced with the help of employing the IRS to minimize the channel correlation, which is helpful for lowering the system signaling overhead.

\subsection{Performance Gap between IRS aided ZF and DPC?}
In addition to reducing the channel correlation, IRS is also capable of providing high passive beamforming gain to further strengthen users' effective channels. This subsection aims to investigate the relative performance gain of IRS aided DPC over ZF as $N \to \infty$ from the perspective of high passive beamforming gain. Under the case of $\bar N = 1$, we have the following theorem.
\begin{thm}
Suppose that ${\bf{h}}_{r,k}^H \sim {\cal C}{\cal N}\left( {{\bf{0}},\rho _{r,k}^{^2}{{\bf{I}}_N}} \right)$ and ${\bf{G}} \sim {\cal C}{\cal N}\left( {{\bf{0}},\rho _g^{^2}{{\bf{I}}_{MN}}} \right)$. Under randomly given IRS phase-shifts, i.e., $\left\{ {{\theta _n}} \right\}$, we have
\begin{align}\label{relation}
\eta  = \mathop {\lim }\limits_{N \to \infty } \frac{{{{\mathbb{E}}}\left[ {{R_{{\rm{DPC}}}}} \right] - {{\mathbb{E}}}\left[ {{R_{ZF}}} \right]}}{{{{\mathbb{E}}}\left[ {{R_{{\rm{ZF}}}}} \right]}} = 0,
\end{align}
where ${{R_{{\rm{DPC}}}}}$ and ${{R_{{\rm{ZF}}}}}$ are the sum-rates of DPC and ZF schemes, respectively.
\end{thm}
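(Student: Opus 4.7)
The plan is to show that the numerator $\mathbb{E}[R_{\mathrm{DPC}}] - \mathbb{E}[R_{\mathrm{ZF}}]$ stays bounded uniformly in $N$ while the denominator $\mathbb{E}[R_{\mathrm{ZF}}]$ diverges as $N \to \infty$; combined with $\eta \ge 0$ (since DPC is capacity-achieving), this forces $\eta = 0$. Throughout, I treat the direct link $\mathbf{h}_{d,k}$ as absorbed into a bounded correction that is dominated by the $O(N)$-order reflected path.

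\textbf{Step 1 (reduction to a projection bound).} Reusing the interference-free upper bound already exploited in the proof of Lemma 1, for any feasible power allocation $\{p_k\}$ with $\sum_k p_k \le P_{\max}$ one has $R_{\mathrm{DPC}} \le \sum_k \log_2(1 + p_k\|\mathbf{h}_k(\mathbf{\Theta})\|^2/\sigma^2)$. On the ZF side, choosing the explicit direction $\mathbf{w}_k \propto (\mathbf{I}-\mathbf{P}_{-k})\mathbf{h}_k$, where $\mathbf{P}_{-k}$ is the orthogonal projector onto the span of the other users' effective channels, yields $R_{\mathrm{ZF}} \ge \sum_k \log_2(1 + p_k(1-\pi_k)\|\mathbf{h}_k\|^2/\sigma^2)$ with projection ratio $\pi_k := 1 - \|(\mathbf{I}-\mathbf{P}_{-k})\mathbf{h}_k\|^2 / \|\mathbf{h}_k\|^2 \in [0,1)$. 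Subtracting gives the pointwise gap bound $R_{\mathrm{DPC}} - R_{\mathrm{ZF}} \le -\sum_k \log_2(1-\pi_k)$, the natural $K$-user generalization of Remark 3, which reduces the proof to two independent estimates.

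\textbf{Step 2 (the denominator grows like $\log N$).} Writing $\mathbf{h}_k = \mathbf{G}^H\mathbf{\Theta}^H\mathbf{h}_{r,k}$ and conditioning on $\mathbf{G}$, the $\mathbf{h}_k$'s are independent $\mathcal{CN}(\mathbf{0}, \rho_{r,k}^2 \mathbf{G}^H\mathbf{G})$. Since the entries of $\mathbf{G}$ are i.i.d.\ with variance $\rho_g^2$, the strong law of large numbers applied entrywise gives $\mathbf{G}^H\mathbf{G}/N \to \rho_g^2\mathbf{I}_M$ almost surely, so $\|\mathbf{h}_k\|^2/N \to M\rho_{r,k}^2\rho_g^2$ a.s. Consequently $\log_2(1 + p_k\|\mathbf{h}_k\|^2/\sigma^2) = \log_2 N + O(1)$ almost surely, and a standard dominated-convergence argument applied to the achievable ZF rate yields $\mathbb{E}[R_{\mathrm{ZF}}] \ge K\log_2 N - O(1) \to \infty$.

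\textbf{Step 3 (the numerator is $O(1)$).} Conditional on $\mathbf{G}$, whiten via $(\mathbf{G}^H\mathbf{G})^{-1/2}$: the transformed channels $\tilde{\mathbf{h}}_k$ become independent isotropic complex Gaussians in $\mathbb{C}^M$, and the projection ratio $\pi_k$ is invariant under this change of coordinates. Hence $\pi_k$ is exactly the squared projection of one isotropic complex Gaussian in $\mathbb{C}^M$ onto the span of $K-1$ independent such vectors, which (for $K-1 < M$) is $\mathrm{Beta}(K-1,\,M-K+1)$-distributed; its functional $-\ln(1-\pi_k)$ has closed-form expectation $\sum_{i=0}^{K-2} 1/(M-K+1+i)$, independent of $N$ and of $\mathbf{G}$. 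This uniform bound on $\mathbb{E}[-\log_2(1-\pi_k)]$, combined with Step 2, gives $\mathbb{E}[R_{\mathrm{DPC}}] - \mathbb{E}[R_{\mathrm{ZF}}] = O(1)$ and hence $\eta \le \lim_{N\to\infty} O(1)/(K\log_2 N) = 0$. The hardest part will be making the whitening step rigorous, in particular handling the almost-sure invertibility of $\mathbf{G}^H\mathbf{G}$ for $N\ge M$ and verifying the integrability needed to pass from the conditional expectation given $\mathbf{G}$ back to the unconditional one.
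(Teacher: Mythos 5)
Your overall architecture (DPC upper-bounded by the interference-free rate, ZF lower-bounded via an explicit projection beamformer, both scaling like $K\log_2 N$ so that the gap is negligible relative to the denominator) matches the paper's strategy, and your Step 1 pointwise bound $R_{\rm DPC}-R_{\rm ZF}\le -\sum_k\log_2(1-\pi_k)$ is a clean reduction that the paper does not make explicit. However, Step 3 contains a genuine error: the projection ratio $\pi_k$ is \emph{not} invariant under the whitening map $(\mathbf{G}^H\mathbf{G})^{-1/2}$. Orthogonal projections, and hence the angle between a vector and the span of the others, are preserved only by unitary transformations, and $(\mathbf{G}^H\mathbf{G})^{-1/2}$ is positive definite but not unitary unless $\mathbf{G}^H\mathbf{G}$ is a scalar multiple of $\mathbf{I}_M$. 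A two-user, $M=2$ example with $\mathbf{G}^H\mathbf{G}\approx\mathrm{diag}(\epsilon,1)$ shows the effect: the whitened channels are isotropic so $\pi_1$ would be uniform on $[0,1]$, while the unwhitened channels are nearly collinear and $\pi_1\approx 1$ with high probability. Consequently the exact $\mathrm{Beta}(K-1,M-K+1)$ law, and the closed-form $N$-independent value of $\mathbb{E}[-\ln(1-\pi_k)]$, do not hold at finite $N$; they emerge only in the limit because $\mathbf{G}^H\mathbf{G}/N\to\rho_g^2\mathbf{I}_M$ almost surely. Since $-\log_2(1-\pi_k)$ is unbounded, recovering a uniform-in-$N$ bound on its expectation requires controlling the tail event where $\mathbf{G}^H\mathbf{G}$ is ill-conditioned and the effective channels become nearly collinear; that is the real obstruction, not the invertibility and integrability issues you flag at the end. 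Note also that the same unproved bound is silently needed in Step 2, because your achievable ZF rate carries the factor $(1-\pi_k)$, so the claim $\mathbb{E}[R_{\rm ZF}]\ge K\log_2 N-O(1)$ again presupposes $\mathbb{E}[-\log_2(1-\pi_k)]=O(1)$ (although for the final conclusion any divergence of $\mathbb{E}[R_{\rm ZF}]$ suffices, which is easier to obtain, e.g., by a Paley--Zygmund-type lower bound on $\|\mathbf{h}_k\|^2$).

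For comparison, the paper's proof sidesteps the joint angle distribution entirely: it upper-bounds $\mathbb{E}[R_{\rm DPC}]$ by $K\log_2\bigl(1+P_{\max}\hat\rho_r^2\rho_g^2NM/\sigma^2\bigr)$ via Jensen's inequality and lower-bounds $\mathbb{E}[R_{\rm ZF}]$ through the asymptotic distribution of the ZF beamforming gain, then shows the ratio of the two explicit $K\log_2 N+O(1)$ expressions tends to one. If you repair Step 3 --- for instance by conditioning on the high-probability event $\bigl\|\mathbf{G}^H\mathbf{G}/N-\rho_g^2\mathbf{I}_M\bigr\|_2\le\delta$, on which $\pi_k$ is stochastically close to the Beta variable, and separately bounding the contribution of the complement --- your route goes through and in fact yields the stronger conclusion that the absolute gap $\mathbb{E}[R_{\rm DPC}]-\mathbb{E}[R_{\rm ZF}]$ is $O(1)$ rather than merely $o(\log N)$.
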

\begin{proof}
The equivalent channel for each user can be written as
\begin{align}\label{equivalent_channel}
{\bf{h}}_k^H = {\bf{h}}_{r,k}^H{\bf{\Theta G}} = \sum\nolimits_{n = 1}^N {h_{r,k}^H\left( n \right)} {{\bf{g}}^H}\left( n \right){e^{j{\theta _n}}},
\end{align}
where ${h_{r,k}^H\left( n \right)}$ is the $n$-th element of ${\bf{h}}_{r,k}^H$ and ${{\bf{g}}^H}\left( n \right)$ is the $n$-th row of ${\bf{G}}$. Accordingly, the $m$-th element of ${\bf{h}}_k^H$ is given by
\begin{align}\label{equivalent_channel}
h_k^H\left( m \right) = \sum\nolimits_{n = 1}^N {h_{r,k}^H\left( n \right)} G\left( {n,m} \right){e^{j{\theta _n}}},
\end{align}
where $G\left( {n,m} \right)$ denotes the $m$-th element of ${{\bf{g}}^H}\left( n \right)$. As $N \to \infty$, we have
\begin{align}\label{equivalent_channe2}
h_k^H\left( m \right)/\left( {\rho _{r,k}^2\rho _g^2N} \right) \sim {\cal C}{\cal N}\left( {0,1} \right),
\end{align}
based on the central limit theorem and ${\mathop{\rm Var}\nolimits} \left( {h_{r,k}^H\left( n \right){{\bf{g}}^H}\left( n \right){e^{j{\theta _n}}}} \right) = \rho _{r,k}^{^2}\rho _g^{^2}$. As such, we have $h_k^H\left( m \right) \sim {\cal C}{\cal N}\left( {0,\rho _{r,k}^{^2}\rho _g^{^2}N} \right)$. Then, we prove \eqref{relation} by showing that $\eta  \ge 0$ and $\eta  \le 0$. The proof of $\eta  \ge 0$ is intuitive since ${R_{{\rm{DPC}}}} \ge {R_{{\rm{ZF}}}}$ naturally holds under any given channel conditions. Then, we focus on showing that $\eta  \le 0$ as follows. The transmit power for each user is denoted by $\left\{ {{p_k}} \right\}$ and then ${\mathop{{\mathbb{E}}}\nolimits} \left[ {{R_{{\rm{DPC}}}}} \right]$ is upper bounded by
\begin{align}\label{upper_bound}
{\mathop{{\mathbb{E}}}\nolimits} \left[ {{R_{{\rm{DPC}}}}} \right] &\le {\mathop{{\mathbb{E}}}\nolimits} \left[ {\sum\nolimits_{k = 1}^K {{{\log }_2}\left( {1 + \frac{{{p_k}{{\left\| {{{\bf{h}}_k}} \right\|}^2}}}{{{\sigma ^2}}}} \right)} } \right]\mathop  \le \limits^{\left( a \right)} \sum\nolimits_{k = 1}^K {{\mathbb{E}}\left[ {{{\log }_2}\left( {1 + \frac{{{P_{\max }}\rho _{r,k}^{^2}\rho _g^{^2}{{\left\| {{{{\bf{\tilde h}}}_k}} \right\|}^2}}}{{{\sigma ^2}}}} \right)} \right]} \nonumber\\
&\mathop  \le \limits^{\left( b \right)} \sum\nolimits_{k = 1}^K {{{\log }_2}\left( {1 \!\!+\!\! \frac{{{P_{\max }}\rho _{r,k}^{^2}\rho _g^{^2}{\mathop{{{\mathbb{E}}}}\nolimits} \left\{ {{{\left\| {{{{\bf{\tilde h}}}_k}} \right\|}^2}} \right\}}}{{{\sigma ^2}}}} \right)}  \le K{\log _2}\left( {1 \!+\! \frac{{{P_{\max }}\mathord{\buildrel{\lower3pt\hbox{$\scriptscriptstyle\frown$}}
\over \rho } _r^{^2}\rho _g^{^2}{\mathop{{\mathbb{E}}}\nolimits} \left\{ {{{\left\| {{{{\bf{\tilde h}}}_k}} \right\|}^2}} \right\}}}{{{\sigma ^2}}}} \right)\nonumber\\
&\buildrel \Delta \over = {{\mathord{\buildrel{\lower3pt\hbox{$\scriptscriptstyle\frown$}}
\over R} }_{{\rm{DPC}}}},
\end{align}
where ${{{\bf{\tilde h}}}_k} = {{{{\bf{h}}_k}} \mathord{\left/
 {\vphantom {{{{\bf{h}}_k}} {\left( {\rho _{r,k}^{^2}\rho _g^{^2}} \right)}}} \right.
 \kern-\nulldelimiterspace} {\left( {\rho _{r,k}^{^2}\rho _g^{^2}} \right)}}$, $\mathord{\buildrel{\lower3pt\hbox{$\scriptscriptstyle\frown$}}
\over \rho } _r^2 = \max \left\{ {\rho _{r,1}^{^2}, \ldots ,\rho _{r,K}^{^2}} \right\}$, (a) follows from ${p_k} \le {P_{\max }}$, and (b) follows from Jensen's inequality based on the fact that $g\left( x \right) = {\log _2}\left( {1 + x} \right)$ is a concave function with respect to $x$. Since $h_k^H\left( m \right) \sim {\cal C}{\cal N}\left( {0,\rho _{r,k}^{^2}\rho _g^{^2}N} \right)$, it can be readily verified that ${\left\| {{{{\bf{\tilde h}}}_k}} \right\|^2} \sim {\rm{Gamma}}\left( {M,N} \right)$. Then, the upper bound of ${\mathop{{\mathbb{E}}}\nolimits} \left[ {{R_{{\rm{DPC}}}}} \right]$ can be obtained as
\begin{align}\label{upper_bound1}
{{\mathord{\buildrel{\lower3pt\hbox{$\scriptscriptstyle\frown$}}
\over R} }_{{\rm{DPC}}}} = K{\log _2}\left( {1 + \frac{{{P_{\max }}\mathord{\buildrel{\lower3pt\hbox{$\scriptscriptstyle\frown$}}
\over \rho } _r^{^2}\rho _g^{^2}NM}}{{{\sigma ^2}}}} \right).
\end{align}
For the ZF scheme, we aim to find the lower bound of ${{\mathop{{\mathbb{E}}}\nolimits}  \left[ {{R_{{\rm{ZF}}}}} \right]}$. The expression of ${R_{{\rm{ZF}}}}$ can be expressed as
\begin{align}\label{ZF_rate}
{R_{{\rm{ZF}}}} = \sum\nolimits_{k = 1}^K {{{\log }_2}\left( {1 + \frac{{{p_k}{{\left| {{\bf{h}}_k^H{{\bf{w}}_k}} \right|}^2}}}{{{\sigma ^2}}}} \right)}  = \sum\nolimits_{k = 1}^K {{{\log }_2}\left( {1 + \frac{{{p_k}\rho _{r,k}^{^2}\rho _g^{^2}{{\left| {{\bf{\tilde h}}_k^H{{\bf{w}}_k}} \right|}^2}}}{{{\sigma ^2}}}} \right)},
\end{align}
where ${{{\bf{w}}_k}}$ equals the $k$-th normalized column of ${\bf{\tilde H}}{\left( {{{{\bf{\tilde H}}}^H}{\bf{\tilde H}}} \right)^{ - 1}}$ and ${\bf{\tilde H}} = \left[ {{{{\bf{\tilde h}}}_1}, \ldots ,{{{\bf{\tilde h}}}_K}} \right]$. As $N \to \infty $, ${{{\bf{\tilde h}}}_k} \sim {\cal C}{\cal N}\left( {{\bf{0}},N{{\bf{I}}_M}} \right)$. Therefore, ${{{\bf{\tilde h}}}_k}$'s are linearly independent with each other with probability one. ${{{\left| {{\bf{\tilde h}}_k^H{{\bf{w}}_k}} \right|}^2}}$ is the power of an isotropic dimensional beamforming space. Then, we have ${\left| {{\bf{\tilde h}}_k^H{{\bf{w}}_k}} \right|^2} \sim {\mathop{\rm Gamma}\nolimits} \left( {M - K + 1,N} \right)$ as $N \to \infty $. The lower bound of ${{\mathop{{\mathbb{E}}}\nolimits} \left[ {{R_{{\rm{ZF}}}}} \right]}$ can be derived as
\begin{align}\label{ZF_bound}
{\mathop{{\mathbb{E}}}\nolimits} \left[ {{R_{{\rm{ZF}}}}} \right] &= \sum\nolimits_{k = 1}^K {{\mathop{{\mathbb{E}}}\nolimits} \left[ {{{\log }_2}\left( {1 + \frac{{{p_k}\rho _{r,k}^{^2}\rho _g^{^2}}}{{{\sigma ^2}}}\frac{1}{{{{\left| {{\bf{\tilde h}}_k^H{{\bf{w}}_k}} \right|}^2}}}} \right)} \right]}\nonumber\\
&\mathop  \ge \limits^{\left( a \right)} \sum\nolimits_{k = 1}^K {{{\log }_2}\left( {1 + \frac{{{p_k}\rho _{r,k}^{^2}\rho _g^{^2}}}{{{\sigma ^2}}}\frac{1}{{{\mathop{{\mathbb{E}}}\nolimits} \left\{ {{{\left| {{\bf{\tilde h}}_k^H{{\bf{w}}_k}} \right|}^2}} \right\}}}} \right)}\nonumber\\
&\mathop  \ge \limits^{\left( b \right)} K{\log _2}\left( {1 + \frac{{{P_{\max }}\mathord{\buildrel{\lower3pt\hbox{$\scriptscriptstyle\smile$}}
\over \rho } _r^2\rho _g^{^2}N\left( {M - K} \right)}}{{K{\sigma ^2}}}} \right) \buildrel \Delta \over = {{\mathord{\buildrel{\lower3pt\hbox{$\scriptscriptstyle\smile$}}
\over R} }_{{\rm{ZF}}}},
\end{align}
where (a) follows from Jensen's inequality based on the fact that $h\left( x \right) = {\log _2}\left( {1 + {1 \mathord{\left/
 {\vphantom {1 x}} \right.
 \kern-\nulldelimiterspace} x}} \right)$ is a convex function with respect to $x$, and (b) follows from that equal power allocation is suboptimal for maximizing the sum-rate and $\mathord{\buildrel{\lower3pt\hbox{$\scriptscriptstyle\smile$}}
\over \rho } _r^{^2} = \min \left\{ {\rho _{r,1}^{^2}, \ldots ,\rho _{r,K}^{^2}} \right\}$. As $N \to \infty$, we have
\begin{align}\label{equivalent}
&{{\mathord{\buildrel{\lower3pt\hbox{$\scriptscriptstyle\frown$}}
\over R} }_{{\rm{DPC}}}} \sim K{\log _2}\left( {\frac{{{P_{\max }}\mathord{\buildrel{\lower3pt\hbox{$\scriptscriptstyle\frown$}}
\over \rho } _r^{^2}\rho _g^{^2}NM}}{{{\sigma ^2}}}} \right) = K{\log _2}\left( N \right) + K{\log _2}\left( {\frac{{{P_{\max }}\mathord{\buildrel{\lower3pt\hbox{$\scriptscriptstyle\frown$}}
\over \rho } _r^2\rho _g^2M}}{{{\sigma ^2}}}} \right)\nonumber\\
&{{\mathord{\buildrel{\lower3pt\hbox{$\scriptscriptstyle\smile$}}
\over R} }_{{\rm{ZF}}}} \sim K{\log _2}\left( {\frac{{{P_{\max }}\mathord{\buildrel{\lower3pt\hbox{$\scriptscriptstyle\smile$}}
\over \rho } _r^2\rho _g^{^2}N\left( {M - K} \right)}}{{K{\sigma ^2}}}} \right) = K{\log _2}\left( N \right) + K{\log _2}\left( {\frac{{{P_{\max }}\mathord{\buildrel{\lower3pt\hbox{$\scriptscriptstyle\frown$}}
\over \rho } _r^2\rho _g^2\left( {M - K} \right)}}{{{\sigma ^2}}}} \right),
\end{align}
which leads to
\begin{align}\label{limit}
\mathop {\lim }\limits_{N \to \infty } \frac{{{{\mathbb{E}}}\left[ {{R_{{\rm{DPC}}}}} \right]}}{{{{\mathbb{E}}}\left[ {{R_{{\rm{ZF}}}}} \right]}} \le \mathop {\lim }\limits_{N \to \infty } \frac{{{{\mathord{\buildrel{\lower3pt\hbox{$\scriptscriptstyle\frown$}}
\over R} }_{{\rm{DPC}}}}}}{{{{\mathord{\buildrel{\lower3pt\hbox{$\scriptscriptstyle\smile$}}
\over R} }_{{\rm{ZF}}}}}} = \mathop {\lim }\limits_{N \to \infty } \frac{{K{{\log }_2}\left( N \right) + K{{\log }_2}\left( {\frac{{{P_{\max }}\mathord{\buildrel{\lower3pt\hbox{$\scriptscriptstyle\frown$}}
\over \rho } _r^2\rho _g^2M}}{{{\sigma ^2}}}} \right)}}{{K{{\log }_2}\left( N \right) + K{{\log }_2}\left( {\frac{{{P_{\max }}\mathord{\buildrel{\lower3pt\hbox{$\scriptscriptstyle\frown$}}
\over \rho } _r^2\rho _g^2\left( {M - K} \right)}}{{{\sigma ^2}}}} \right)}} = 1.
\end{align}

According to \eqref{limit}, we have $\eta  \le 0$. Based on the facts that $\eta  \ge 0$ and $\eta  \le 0$, we have $\eta  = 0$, which thus completes the proof.
\end{proof}

Theorem 1 reveals that the relative sum-rate gain achieved by IRS aided DPC over ZF becomes vanished as the number of IRS elements increases. This is because despite suffering the loss of active beamforming gain at the BS, the sum multiplexing gain of the ZF scheme from the DoF perspective is ${\rm{Do}}{{\rm{F}}_{{\rm{ZF}}}} = \mathop {\lim }\limits_{{P_{\max }} \to \infty } \frac{{{R_{{\rm{ZF}}}}}}{{{{\log }_2}{P_{\max }}}} = K$, which is the same as DPC \cite{1424315}. This indicates that the sum rate of ZF can approach that of DPC in the high SNR regime. In the IRS aided wireless system, the high passive beamforming gain introduced by the IRS dominates the the loss of active beamforming gain for ZF, which artificially generates the equivalent high SNR regime even under the moderate value of ${{P_{\max }}}$.

\section{Numerical results}
In this section, numerical results are provided to validate our theoretical findings and to draw useful insights. The BS and the IRS are located at (0, 0, 0) meter (m) and (50, 3, 0) m, respectively. The users are located on the half-circle centered of the IRS with a radius of 5 m. For both the BS-IRS and IRS-user links, the path-loss exponents are set to $2.2$, while those of the BS-user direct links are set to $3.4$. Besides, we set $M = 4$, $\bar N = 4$, $b = 2$, ${P_{\max }} = 20~{\rm{dBm}}$, and ${\sigma ^2} =  - 80~{\rm{dBm}}$. Rayleigh fading is considered to characterize the small-scale fading for all links. Furthermore, we set the path loss at the reference distance of 1 m as 30 dB.

\subsection{Capacity and Achievable Rate Region}
To show the capacity and rate regions achieved by DPC and ZF, we consider $K=2$ in this subsection. Specifically, we consider the following schemes for comparison: 1) \textbf{DPC, exact bound}: Algorithm 1 is employed to obtain the Pareto boundary of the capacity region achieved by DPC, which serves as the performance upper-bound; 2) \textbf{DPC, inner bound}: Algorithm 2 is applied to obtain the inner rate boundary of the capacity region achieved by DPC; 3) \textbf{ZF, static BF}: Algorithm 1 is extended to obtain the Pareto boundary of the rate region achieved by ZF under the static beamforming configuration; 4) \textbf{ZF, dynamic BF}: the Pareto boundary of the rate region achieved by ZF under the dynamic beamforming configuration; 5) \textbf{TDMA}: orthogonal TSs are allocated to two users and the dedicated BS/IRS beamforming is designed to maximize the desired signal power of each user. We define ${\rho _d} \buildrel \Delta \over = \left| {{\bf{h}}_{d,1}^H{{\bf{h}}_{d,2}}} \right|/\left( {\left\| {{{\bf{h}}_{d,1}}} \right\|\left\| {{{\bf{h}}_{d,2}}} \right\|} \right)$ as the channel correlation coefficient of the direct links. Then, the capacity/achivable rate regions of the aforementioned schemes will be evaluated under different values of $\rho _d^2$ to shed light on how the IRS collaborate with the BS to improve the capacity performance.

\subsubsection{High channel correlation} In Fig. \ref{region1}, we present the capacity and achievable rate regions of the considered schemes under the case of $\rho _d^2 = 0.8$, which represents one typical channel condition of high correlation. The corresponding square of channel correlation coefficient after integrating IRS, i.e., ${\rho ^2}\left( {\bf{\Theta }} \right)$, is also depicted in Fig. \ref{correlation1}. In the absence of the IRS, it is observed from Fig. \ref{region1} that ZF under the static beamforming configuration suffers substantial performance loss compared to the capacity-achieving DPC scheme and even performs worse than the TDMA scheme. This is mainly because the active beamforming gain is weak due to the high channel correlation. In this case, employing dynamic beamforming configuration is reduced to the TDMA scheme and can significantly improve the achievable rate region. After integrating the IRS, the capacity/achievable rate regions of all schemes are effectively enlarged. Moreover, it is noticed that the performance gains achieved by dynamic beamforming become marginal since the channel correlation coefficient is significantly reduced as indicated by Fig. \ref{correlation1}, which is consistent with our analysis in Proposition 3 and Proposition 4. This indicates that the demand for employing dynamic beamforming configuration can be effectively alleviated by deploying the IRS, which is helpful for reducing the signalling/operating overhead in practical systems. Additionally,  dynamic beamforming is generally not needed for DPC since the observed capacity regions achieved by DPC are nearly convex-shape.

\begin{figure*}[t!]
\centering
\subfigure[Capacity/rate region.]{\label{region1}
\includegraphics[width= 2.5in, height=2in]{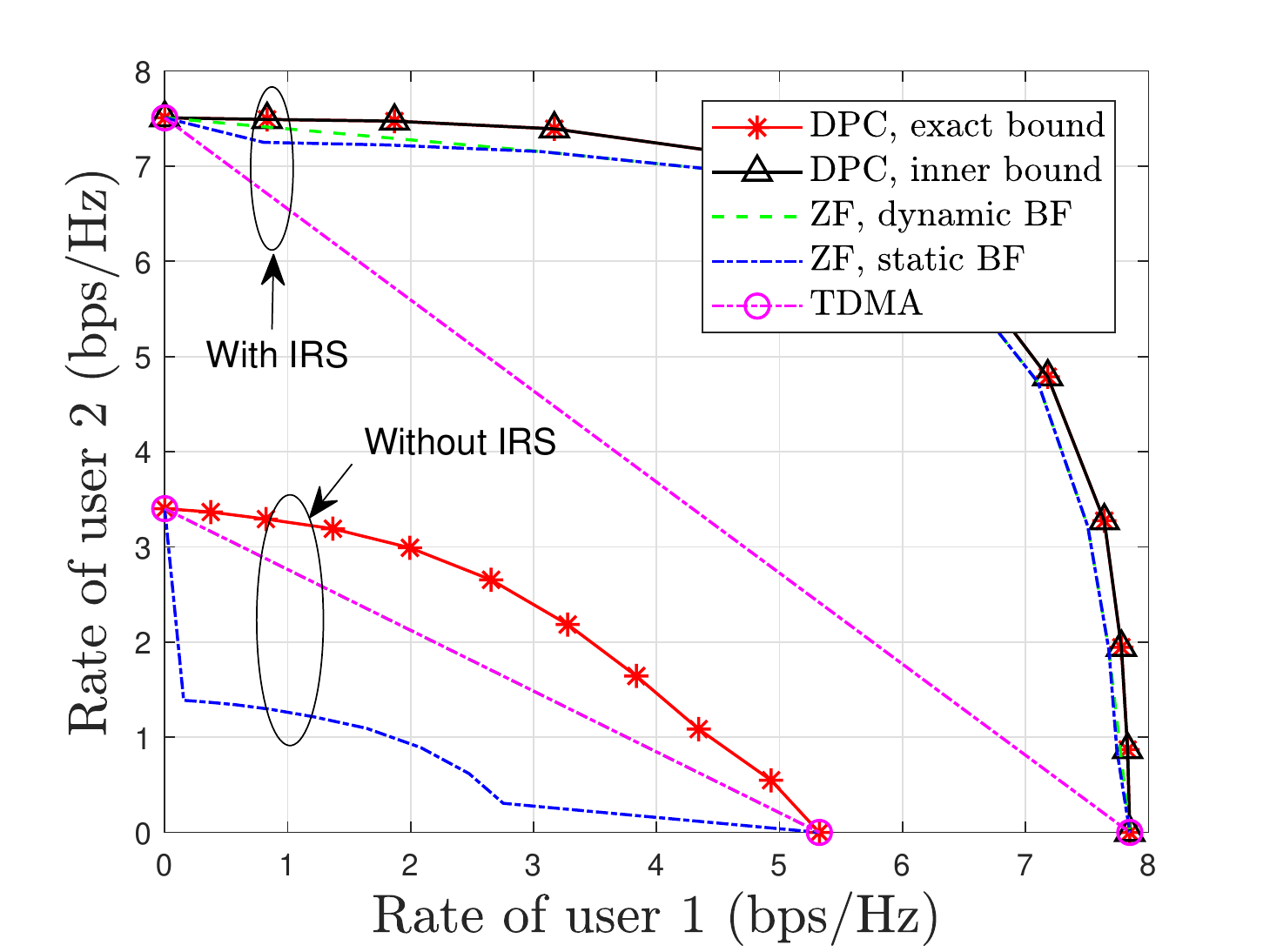}}
\subfigure[Resulting channel correlation.]{\label{correlation1}
\includegraphics[width= 2.5in, height=2in]{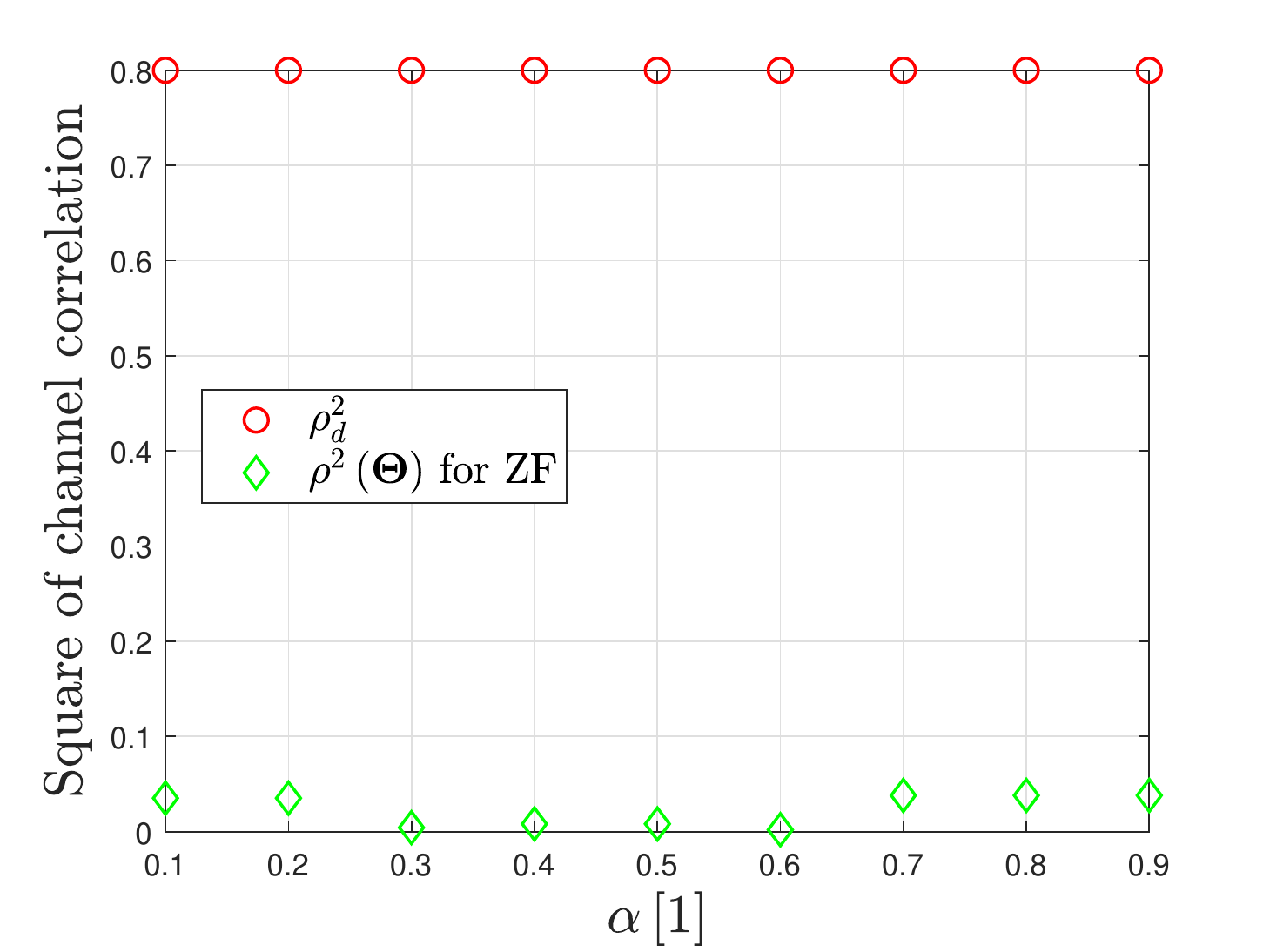}}
\setlength{\abovecaptionskip}{0.4cm}
\caption{{Capacity/rate region for IRS aided MISO broadcast channel under the case of $\rho _d^2 = 0.8$.}}\label{rate_region_high}
\vspace{-16pt}
\end{figure*}

Thanks to the lower channel correlation introduced by properly designing the IRS phase-shifts, the rate achieved by ZF is able to approach the exact bound achieved by DPC. It suggests that the sophisticated transmission scheme, i.e., DPC, can be potentially replaced by the easy-implementation scheme, i.e., ZF, , which is essential for simplifying the transceiver design. Finally, it is observed that the inner bound is close to the exact bound, which verifies the effectiveness of the proposed Algorithm 2 for the inner bound characterization.

\begin{figure*}[t!]
\centering
\subfigure[Capacity/rate region.]{\label{region22}
\includegraphics[width= 2.5in, height=2in]{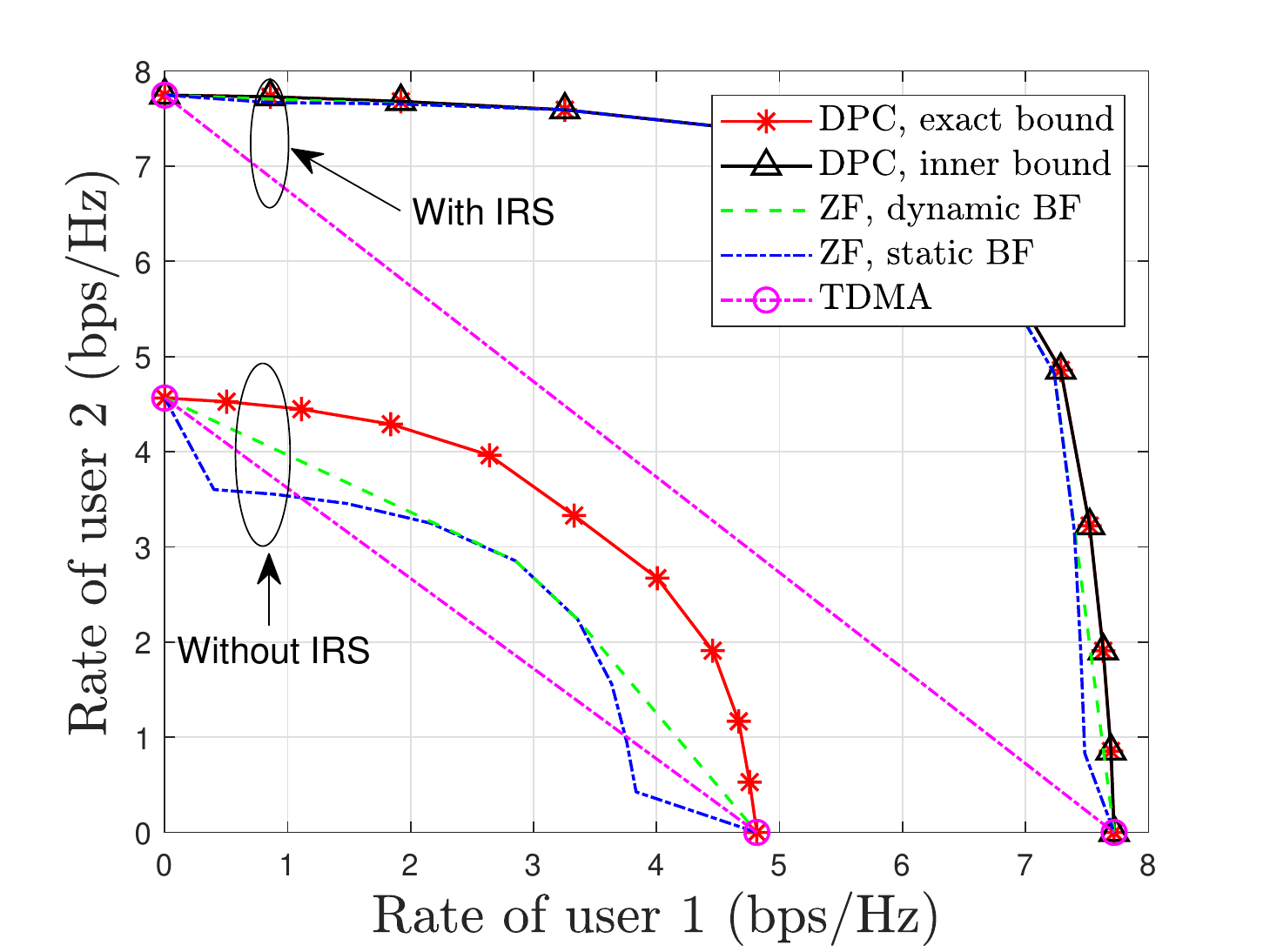}}
\subfigure[Resulting channel correlation.]{\label{correlation2}
\includegraphics[width= 2.5in, height=2in]{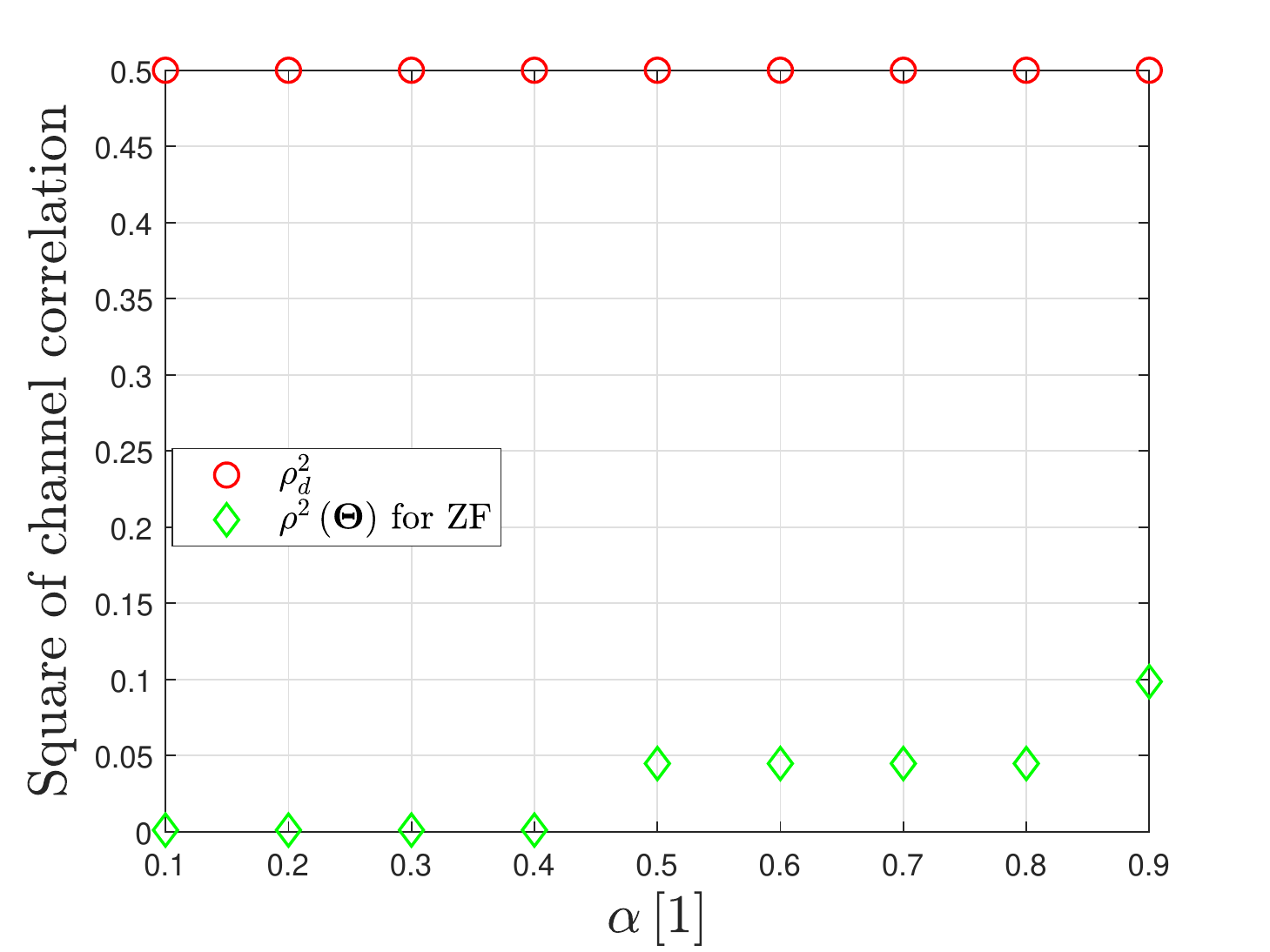}}
\setlength{\abovecaptionskip}{0.4cm}
\caption{{Capacity/rate region for IRS aided MISO broadcast channel under the case of $\rho _d^2 = 0.5$.}}\label{rate_region_moderate}
\vspace{-16pt}
\end{figure*}

\begin{figure*}[t!]
\centering
\subfigure[Capacity/rate region.]{\label{region3}
\includegraphics[width= 2.5in, height=2in]{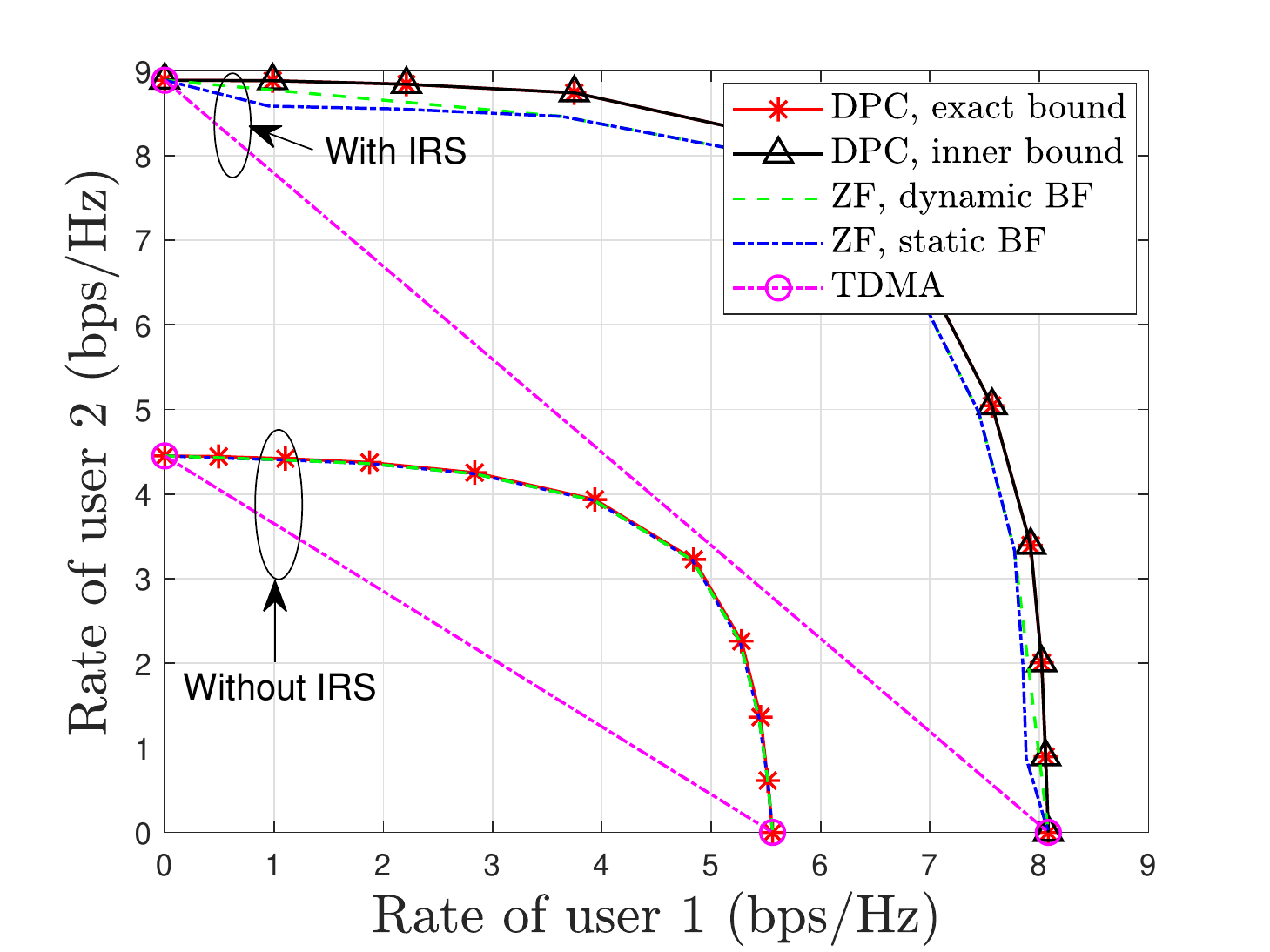}}
\subfigure[Resulting channel correlation.]{\label{correlation3}
\includegraphics[width= 2.5in, height=2in]{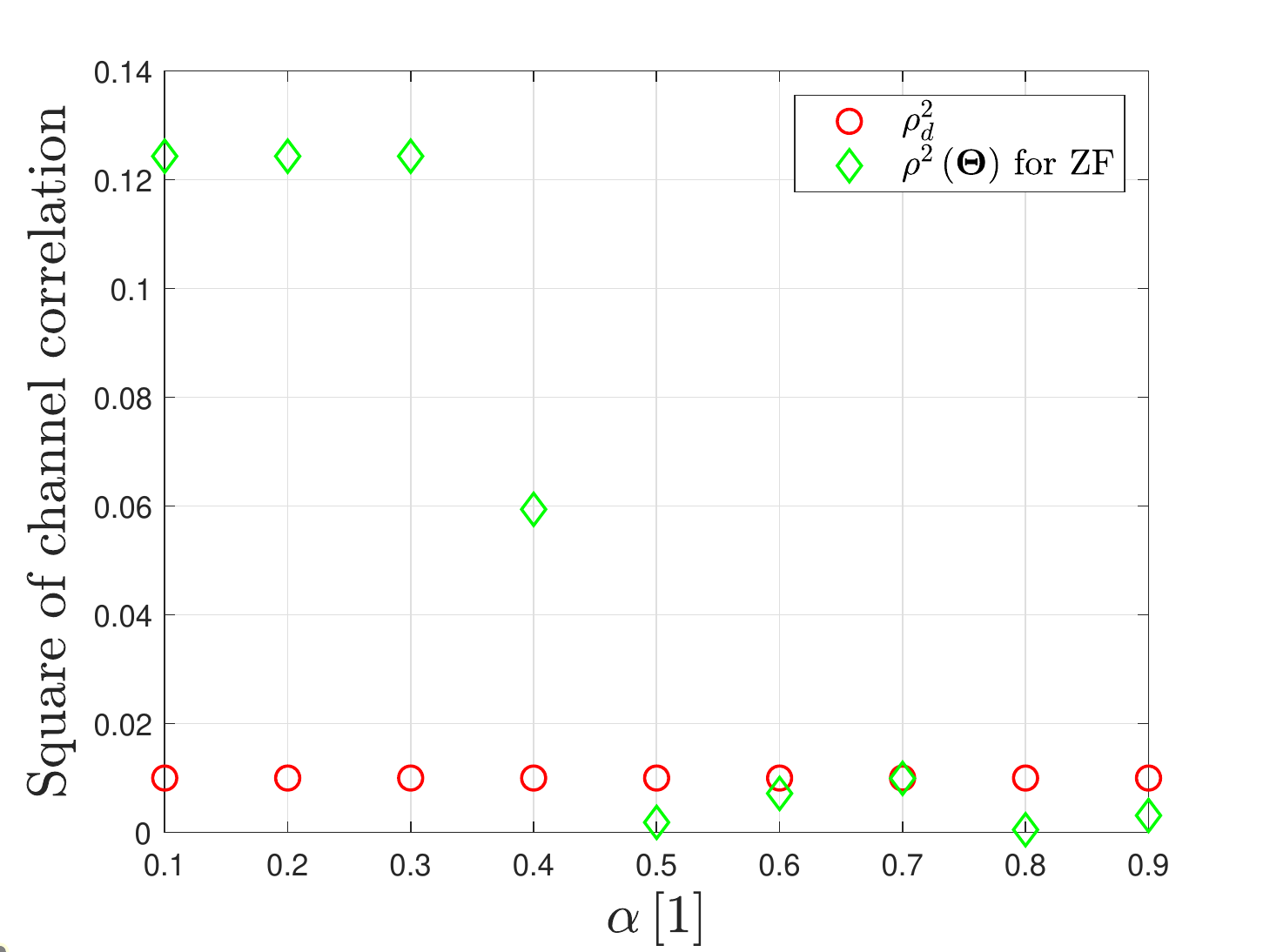}}
\setlength{\abovecaptionskip}{0.4cm}
\caption{{Capacity/rate region for IRS aided MISO broadcast channel under the case of $\rho _d^2 = 0.01$.}}\label{rate_region_moderate}
\vspace{-16pt}
\end{figure*}

\subsubsection{Moderate Channel Correlation} Then, we investigate the capacity/achievable rate regions under the moderate channel correlation of the direct links. Specifically, the capacity/achivable rate regions of the considered schemes under the case of $\rho _d^2 = 0.5$ are plotted in Fig. \ref{region22}. To capture the role of the IRS in such a scenario, we present the resulting channel correlation after optimizing IRS phase-shifts for ZF in Fig. \ref{correlation2}. For the system without IRS, it can be observed that employing dynamic beamforming is able to significantly enlarge the achievable rate region of ZF and the achievable rate region of ZF always contains that of the TDMA scheme. In addition, the rate gap between the ZF and DPC is still noticeable in this case. In contrast, the corresponding channel correlation can be effectively reduced by optimizing the IRS phase-shifts as demonstrated in Fig. \ref{correlation2}. This leads to the marginal performance gain achieved by dynamic beamforming over static beamforming. Moreover, the low channel correlation introduced by the IRS provides more spatial DoFs for reaping the high active beamforming gains for each user, which results in negligible performance-loss of ZF compared to DPC.

\subsubsection{Low Channel Correlation} Finally, we study the typical channel setup of low channel correlation by plotting the capacity/achivable rate regions under $\rho _d^2 = 0.01$. The corresponding ${\rho ^2}\left( {\bf{\Theta }} \right)$ after optimizing ${\bf{\Theta }}$ for ZF are also depicted in Fig. \ref{correlation3}. It is observed that both the capacity and rate regions achieved by DPC and ZF are nearly convex without IRS since the channels are quasi-orthogonal under the case of $\rho _d^2 = 0.01$. Thus, employing dynamic beamforming cannot bring the rate improvement and ZF is able to achieve almost the same performance as DPC, which validates our analysis in Proposition 4. In this case, the main bottleneck of the system performance is the channel strength rather than the channel correlation. To this end, by deploying IRS in such a system, the main objective of the IRS is to enhance the channel power gain regardless of reducing the channel correlation. As shown in Fig. \ref{correlation3}, ${\rho ^2}\left( {\bf{\Theta }} \right)$ is even higher than $\rho _d^2$ for some given values of ${\bm{\alpha }}[1]$, which results in slight performance loss of ZF compared to DPC and marginal performance gain attained by dynamic beamforming after integrating IRS.

For all the considered three channel setups with an IRS, the gain attained by adopting dynamic beamforming is negligible, which unveils that IRS is able to reduce the demand for implementing dynamic beamforming and thus the additional signalling overhead can be avoided.

\subsection{Sum Rate Performance}
In this subsection, we set $K = 4$ and ${\bm{\alpha }} = {\left[ {0.25,{\rm{0}}{\rm{.25,0}}{\rm{.25,0}}{\rm{.25}}} \right]^T}$. Under such a setup, the sum-rate performances of ZF and DPC for both cases with and without IRS are compared. All results are obtained over 100 independent channel realizations.

\begin{figure}
\begin{minipage}[t]{0.45\linewidth}
\centering
\includegraphics[width=2.9in]{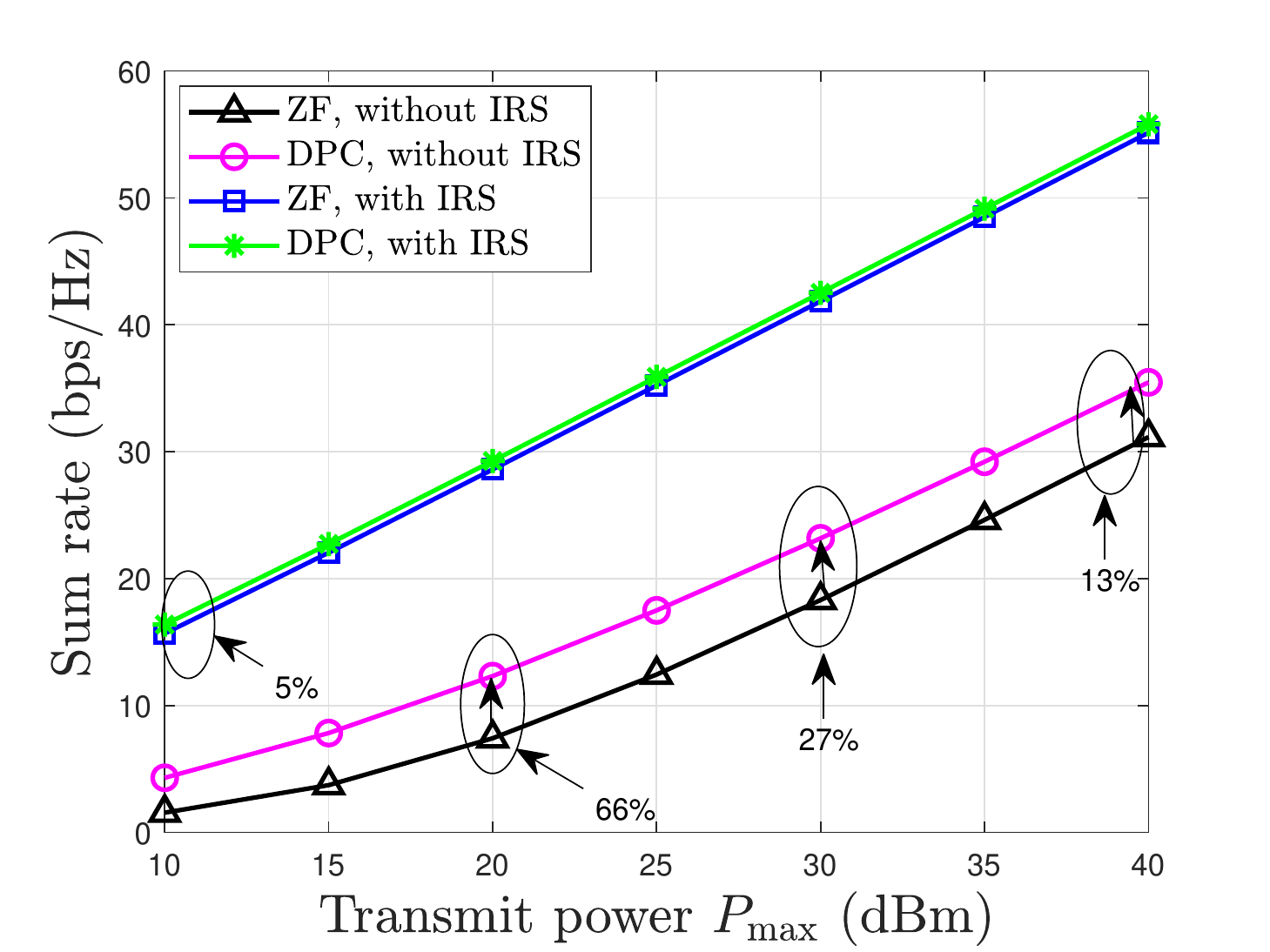}
\caption{The sum rate versus ${P_{\max }}$ for $N = 48$.}
\label{rate_power}
\end{minipage}%
\hfill
\begin{minipage}[t]{0.45\linewidth}
\centering
\includegraphics[width=2.9in]{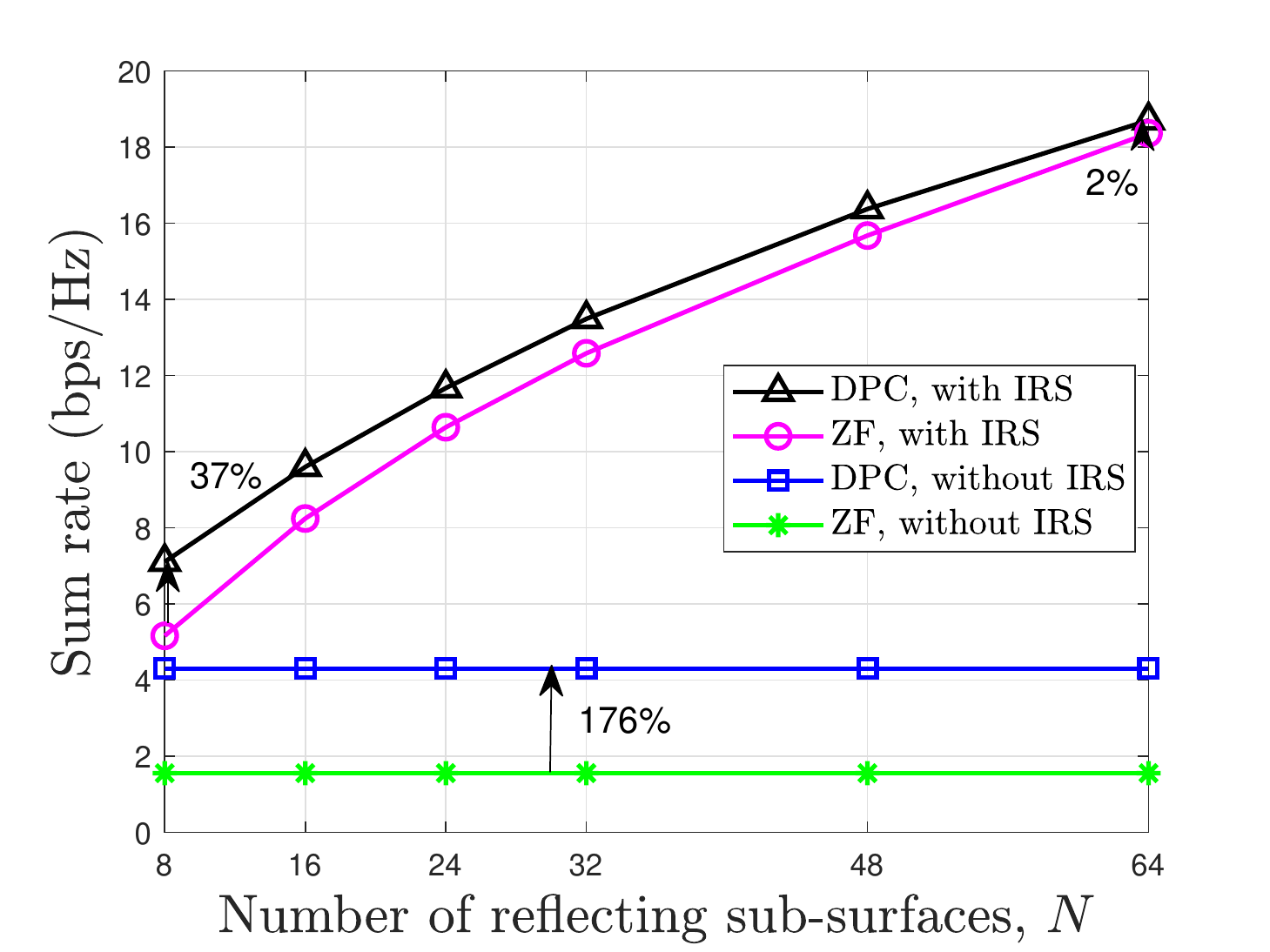}
\caption{The sum rate versus $N$ for ${P_{\max }} = 10$ dBm.}
\label{rate_N}
\end{minipage}
\vspace{-16pt}
\end{figure}

\subsubsection{Sum rate versus Transmit Power}
In Fig. \ref{rate_power}, we plot the system sum rate versus the maximum transmit power at the BS. In the scenario without IRS, it is observed that the sum rate of DPC is significantly higher than that of ZF. In particular, it is noticed that the relative performance gain of DPC over ZF, i.e., $\left( {{R_{{\rm{DPC}}}} - {R_{{\rm{ZF}}}}} \right)/{R_{{\rm{ZF}}}}$, decreases as ${P_{\max }}$ becomes large. However, even under the case of ${P_{\max }} = 40$ dBm, DPC can still achieve $13\% $ sum rate gains over the ZF scheme and thus the performance loss of ZF compared to DPC is considerable. By deploying the IRS, substantial sum rate improvement can be attained by the high passive beamforming gain and low-correlation channels provided by the IRS. To achieve an identical sum rate, a 15 dB performance gain can be achieved by the IRS aided ZF scheme over the DPC scheme without IRS. Moreover, the performance loss of the IRS aided ZF scheme compared to the IRS aided DPC scheme is negligible. Even under the small value of the transmit power, i.e., ${P_{\max }} = 10$ dBm, the corresponding performance loss is less than $5\% $.

\subsubsection{Sum rate versus $N$}
We study the impact of the number of reflecting sub-surfaces on the IRS aided MISO broadcast channel, by plotting the sum rate of all schemes versus $N$ in Fig. \ref{rate_N}. For all IRS aided schemes, the sum rate increases as $N$ increases while the sum rate of the schemes without IRS remains unchanged. In particular, the DPC scheme is able to achieve $176\%$ sum rate gains over the ZF scheme without IRS. After integrating the IRS in the system, the smart adjustment of IRS phase-shifts introduces a desirable ``double gain'' as it not only improves the channel strength but also reducing the channel correlation, thus rendering marginal performance loss of ZF compared to DPC. Specifically, it is observed from Fig. \ref{rate_N} that the relative sum rate gain of IRS aided DPC over IRS aided ZF is reduced from $37\%$ to $2\%$ by increasing $N$ from $8$ to $64$, which is consistent with our analysis in Theorem 1.

All results in Fig. \ref{rate_power} and Fig. \ref{rate_N} imply that the sum rate performance of IRS aided ZF is able to approach that of IRS aided DPC, which represents the performance limit. Compared to DPC, ZF is more appealing for practical systems due to its lower implementation complexity. Benefited by the ``smart radio environment'' introduced by the IRS, IRS aided ZF is expected to achieve near-optimal performance, which provides useful guidelines for simplifying the transceiver design.

\vspace{-10pt}
\section{Conclusion}
In this paper, we investigated the fundamental capacity limits of the IRS aided broadcast channel. Both the Pareto boundaries of the capacity and achievable rate regions for the IRS aided DPC and ZF schemes were characterized under the dynamic beamforming configurations. Then, we further studied two fundamental issues regarding the potential gains attained by dynamic beamforming and and the practical performance gap between the IRS aided ZF and DPC.  We rigorously proved that dynamic beamforming is able to enlarge the achievable rate region of ZF if designing IRS phase-shifts cannot achieve fully orthogonal channels, whereas the attained gains would become marginal as the reduction of channel correlations induced by smartly adjusting the IRS phase-shifts, which unveils that deploying IRS is able to reduce the demand for implementing dynamic beamforming and thus it is helpful for reducing signalling overhead. It was further analytically demonstrated the sum-rate achieved by the IRS aided ZF is capable of approaching that of the IRS aided DPC when the number of IRS elements becomes sufficiently large. The result is promising by indicating that the sophisticated transmission scheme can be replaced by the easy-implementation scheme at the cost of slight performance-loss. Extensive numerical results validated our theoretical findings and drew useful insights regarding the impact of the IRS on the transceiver design.

\section*{Appendix A: \textsc{Proof of Proposition 2}}
To prove Proposition 2, we first introduce the power minimization problem with respect to $\left\{ {{{\bf{S}}_k}} \right\}$ under any given ${\bf{\Theta }}$ as follows
\begin{subequations}\label{C3}
\begin{align}
\label{C3-a}\mathop {\min }\limits_{\left\{ {{{\bf{S}}_k}} \right\}}  \;\;&\sum\nolimits_{k = 1}^K {{\mathop{\rm Tr}\nolimits} } \left( {{{\bf{S}}_k}} \right)\\
\label{C3-b}{\rm{s.t.}}\;\;\;&{\bf{h}}_k^H{{\bf{S}}_k}{{\bf{h}}_k} \ge {\gamma _k}\left( {\frac{{{\bf{h}}_k^H\left( {\sum\nolimits_{i > k} {{{\bf{S}}_i}} } \right){{\bf{h}}_k}}}{{{\sigma ^2}}} + 1} \right), ~\forall {k},\\
\label{C3-e}&\eqref{C1-d}.
\end{align}
\end{subequations}
The optimal value of \eqref{C3} is denoted by ${p^*}$. Since problem \eqref{C3} is a convex optimization problem and satisfies Slater's condition, strong duality holds. The Lagrangian of \eqref{C3} is given by
\begin{align}\label{Lagrangian}
&{\cal L}\left( {{{\bf{S}}_k},{\lambda _k},{{\bf{Y}}_k}} \right) = \sum\nolimits_{k = 1}^K {{\rm{Tr}}} \left( {{{\bf{S}}_k}} \right) - \sum\nolimits_{k = 1}^K {{\rm{Tr}}\left( {{{\bf{Y}}_k}{{\bf{S}}_k}} \right)}\nonumber\\
&  + \sum\limits_{k = 1}^K {{\lambda _k}\left( {{\gamma _k}\left( {\frac{{{\bf{h}}_k^H\sum\nolimits_{i > k} {{{\bf{S}}_i}{{\bf{h}}_k}} }}{{{\sigma ^2}}} + 1} \right) - \frac{{{\bf{h}}_k^H{{\bf{S}}_k}{{\bf{h}}_k}}}{{{\sigma ^2}}}} \right)}\nonumber\\
& = \sum\nolimits_{k = 1}^K {\left( {{\rm{Tr}}\left( {\left( {{{\bf{Z}}_k} - {{\bf{Y}}_k}} \right){{\bf{S}}_k}} \right) + {\lambda _k}{\gamma _k}} \right)},
\end{align}
where ${\lambda _k} \ge 0$ and ${{\bf{Y}}_k} \succeq {\bf{0}}$ are Lagrange multipliers associated with constraints \eqref{C3-b} and \eqref{C1-d}, and
\begin{align}\label{Z_k}
{{\bf{Z}}_k} = {{\bf{I}}_M}  - {\lambda _k}\frac{{{{\bf{h}}_k}{\bf{h}}_k^H}}{{{\sigma ^2}}} + \sum\nolimits_{i = 1}^{k - 1} {\frac{{{\lambda _i}{\gamma _i}{{\bf{h}}_i}{\bf{h}}_i^H}}{{{\sigma ^2}}}}.
\end{align}
Then, the Lagrange dual function of \eqref{C3} is given by
\begin{align}\label{dual}
&g\left( {{\lambda _k},{{\bf{Y}}_k}} \right) = \mathop {\inf }\limits_{{{\bf{S}}_k}} {\cal L}\left( {{{\bf{S}}_k},{\lambda _k},{{\bf{Y}}_k}} \right) = \begin{cases}{\sum\nolimits_{k = 1}^K {{\lambda _k}{\gamma _k}}}, &{{\rm{if }}~{{\bf{Z}}_k} - {{\bf{Y}}_k} = {\bf{0}},\forall k}, \cr { - \infty },
&{{\rm{otherwise}}}. \end{cases}
\end{align}
Hence, the dual problem of \eqref{C3} can be written as
\begin{subequations}\label{C4}
\begin{align}
\label{C4-a}\mathop {\max }\limits_{\left\{ {{\lambda _K}} \right\}, \left\{ {{{\bf{Y}}_k}} \right\}}  \;\;&{\sum\nolimits_{k = 1}^K {{\lambda _k}{\gamma _k}} }\\
\label{C4-b}{\rm{s.t.}}\;\;\;\;\;\;&{{\bf{Z}}_k} - {{\bf{Y}}_k} = {\bf{0}},\forall k,\\
\label{C4-c}&{\lambda _k} \ge 0,{{\bf{Y}}_k} \succeq {\bf{0}},\forall k.
\end{align}
\end{subequations}
The optimal solution of problem \eqref{C4} is denoted by $\left\{ {\lambda _k^*} \right\}$ and we have ${p^*} = \sum\nolimits_{k = 1}^K {\lambda _k^*}{\gamma _k}$ since strong duality of problem \eqref{C3} holds. Then, we focus on obtaining $\left\{ {\lambda _k^*}\right\}$. Applying Karush-Kuhn-Tucker conditions of \eqref{C3} yields
\begin{align}\label{KKT}
&{{\bf{I}}_M} - {{\bf{Y}}_k} - \lambda _k^*\frac{{{{\bf{h}}_k}{\bf{h}}_k^H}}{{{\sigma ^2}}} + \sum\nolimits_{i = 1}^{k - 1} {\frac{{\lambda _i^*{\gamma _i}{{\bf{h}}_i}{\bf{h}}_i^H}}{{{\sigma ^2}}}}  = {\bf{0}},\forall k,\\
&{\rm{Tr}}\left( {{{\bf{Y}}_k}{\bf{S}}_k^*} \right) = 0,\forall k.
\end{align}
Multiplying both sides of \eqref{KKT} by ${{\bf{S}}_k^*}$, we have
\begin{align}\label{optimal_SK}
{\bf{S}}_k^* - \lambda _k^*\frac{{{{\bf{h}}_k}{\bf{h}}_k^H}}{{{\sigma ^2}}}{\bf{S}}_k^* + \sum\nolimits_{i = 1}^{k - 1} {\frac{{\lambda _i^*{\gamma _i}{{\bf{h}}_i}{\bf{h}}_i^H}}{{{\sigma ^2}}}} {\bf{S}}_k^* = {\bf{0}},\forall k.
\end{align}
It was shown in \cite{5233822} that there always exists an optimal solution for all ${\bf{S}}_k^*$'s of rank one regarding problem \eqref{C3}. Then, we can always find ${\bf{v}}_k^* \in {\mathbb{C}^{M \times 1}}$ and ${\bf{S}}_k^* = {\bf{v}}_k^*{\left( {{\bf{v}}_k^*} \right)^H}$. Then, \eqref{optimal_SK} can be re-expressed as
\begin{align}\label{optimal_vK}
{\bf{v}}_k^* - \lambda _k^*\frac{{{{\bf{h}}_k}{\bf{h}}_k^H}}{{{\sigma ^2}}}{\bf{v}}_k^* + \sum\nolimits_{i = 1}^{k - 1} {\frac{{\lambda _i^*{\gamma _i}{{\bf{h}}_i}{\bf{h}}_i^H}}{{{\sigma ^2}}}} {\bf{v}}_k^* = {\bf{0}},\forall k,
\end{align}
since ${\bf{v}}_k^* \ne {\bf{0}}$. From \eqref{optimal_vK}, we obtain
\begin{align}\label{optimal_vk1}
{\bf{v}}_k^* = {\left( {{{\bf{I}}_M} + \sum\nolimits_{i = 1}^{k - 1} {\frac{{\lambda _i^*{\gamma _i}{{\bf{h}}_i}{\bf{h}}_i^H}}{{{\sigma ^2}}}} } \right)^{ - 1}}\lambda _k^*\frac{{{{\bf{h}}_k}{\bf{h}}_k^H}}{{{\sigma ^2}}}{\bf{v}}_k^*,\forall k.
\end{align}
Multiplying both sides of \eqref{optimal_vk1} by ${{\bf{h}}_k^H}$, we have
\begin{align}\label{lamda_expression}
{\bf{h}}_k^H{\left( {{{\bf{I}}_M} + \sum\nolimits_{i = 1}^{k - 1} {\frac{{\lambda _i^*{\gamma _i}{{\bf{h}}_i}{\bf{h}}_i^H}}{{{\sigma ^2}}}} } \right)^{ - 1}}{{\bf{h}}_k}\frac{{\lambda _k^*}}{{{\sigma ^2}}} = 1,\forall k.
\end{align}
Based on \eqref{lamda_expression}, the optimal value of \eqref{C3} is derived in \eqref{power_value}-\eqref{power_value2}, which completes the proof.

\section*{Appendix B: \textsc{Proof of Lemma 2}}
Let ${\bf{A}} = {\bf{\Theta G}}{{\bf{G}}^H}{{\bf{\Theta }}^H}$. Under the condition that ${\bf{G}} \sim {\cal C}{\cal N}\left( {0,{\bf{I}}} \right)$ and ${\bf{\Theta }}{\rm{ = }}{\mathop{\rm diag}\nolimits} \left( {{e^{j{\theta _1}}}, \ldots ,{e^{j{\theta _N}}}} \right)$, it can be readily verified that ${\mathop{\rm rank}\nolimits} \left( {\bf{A}} \right) = M$. Then, we derive ${{\mathop{\mathbb{E}}\nolimits} \left\{ {{{\left| {{\bf{h}}_{r,k}^H{\bf{\Theta G}}{{\bf{G}}^H}{{\bf{\Theta }}^H}{{\bf{h}}_{r,m}}} \right|}^2}} \right\}}$ as follows
\begin{align}\label{expectation1}
&{\mathop{\mathbb{E}}\nolimits} \left\{ {{{\left| {{\bf{h}}_{r,k}^H{\bf{\Theta G}}{{\bf{G}}^H}{{\bf{\Theta }}^H}{{\bf{h}}_{r,m}}} \right|}^2}} \right\} \nonumber\\
&= {\mathop{\mathbb{E}}\nolimits} \left\{ {{\rm{Tr}}\left( {{\bf{A}}{{\bf{h}}_{r,m}}{\bf{h}}_{r,m}^H{{\bf{A}}^H}{{\bf{h}}_{r,k}}{\bf{h}}_{r,k}^H} \right)} \right\}\nonumber\\
& = {\mathop{\rm Tr}\nolimits} \left( {{\bf{A}}{\mathop{\mathbb{E}}\nolimits} \left\{ {{{\bf{h}}_{r,m}}{\bf{h}}_{r,m}^H} \right\}{{\bf{A}}^H}{\mathop{\mathbb{E}}\nolimits} \left\{ {{{\bf{h}}_{r,k}}{\bf{h}}_{r,k}^H} \right\}} \right) = {\mathop{\rm Tr}\nolimits} \left( {{\bf{A}}{{\bf{A}}^H}} \right) = \sum\limits_{m = 1}^M {\lambda _{A,m}^2},
\end{align}
where ${\lambda _{A,m}}$'s are $M$ positive eigenvalues of ${\bf{A}}$. Similarly, ${{\mathop{\mathbb{E}}\nolimits} \left\{ {{{\left\| {{\bf{h}}_{r,k}^H{\bf{\Theta G}}} \right\|}^2}} \right\}}$ and ${{\mathop{\mathbb{E}}\nolimits} \left\{ {{{\left\| {{\bf{h}}_{r,m}^H{\bf{\Theta G}}} \right\|}^2}} \right\}}$ can be calculated as
\begin{align}\label{expectation2}
{\mathop{\mathbb{E}}\nolimits} \left\{ {{{\left\| {{\bf{h}}_{r,k}^H{\bf{\Theta G}}} \right\|}^2}} \right\} = {\mathop{\mathbb{E}}\nolimits} \left\{ {{{\left\| {{\bf{h}}_{r,m}^H{\bf{\Theta G}}} \right\|}^2}} \right\} = \sum\nolimits_{m = 1}^M {{\lambda _{A,m}}}.
\end{align}
Hence, we can obtain
\begin{align}\label{correlation_lower}
\frac{{{\mathop{\mathbb{E}}\nolimits} \left\{ {{{\left| {{\bf{h}}_{r,k}^H{\bf{\Theta G}}{{\bf{G}}^H}{{\bf{\Theta }}^H}{{\bf{h}}_{r,m}}} \right|}^2}} \right\}}}{{{\mathop{\mathbb{E}}\nolimits} \left\{ {{{\left\| {{\bf{h}}_{r,k}^H{\bf{\Theta G}}} \right\|}^2}} \right\}{\mathop{\mathbb{E}}\nolimits} \left\{ {{{\left\| {{\bf{h}}_{r,m}^H{\bf{\Theta G}}} \right\|}^2}} \right\}}} = \frac{{\sum\nolimits_{m = 1}^M {\lambda _{A,m}^2} }}{{{{\left( {\sum\nolimits_{m = 1}^M {{\lambda _{A,m}}} } \right)}^2}}} \mathop  \ge \limits^{\left( a \right)}  \frac{1}{M},
\end{align}
where (a) is due to the Cauchy-Schwarz inequality. Lemma 2 is thus proved.


\bibliographystyle{IEEEtran}

\bibliography{IEEEabrv,myref}


\end{document}